\def\keywords{\vspace{.5em}
{\textit{Keywords}:\,\relax%
}}
\definecolor{lightblue}{rgb}{.60,.60,1}
\definecolor{lightred}{rgb}{1, .60, 0.60}
\newcommand{\poly}{{\rm poly}}
\newcommand{\ie}{{\mbox{i.e.}}}
\def\zo{\{0,1\}}
\def\mapping{\rightarrow}
\newcommand{\prob}{{\rm Prob}}
\newcommand{\ci}{\mathrm{ComInf}}
\newcommand{\pxy}{p_{x \mid y}}
\newcommand{\xa}{x_A}
\newcommand{\xb}{x_B}
\newcommand{\xc}{x_C}
\newcommand{\ra}{r_A}
\newcommand{\rb}{r_B}
\newcommand{\rc}{r_C}
\newcommand{\pa}{p_A}
\newcommand{\pb}{p_B}
\newcommand{\pc}{p_C}
\newcommand{\xxa}{x_1}
\newcommand{\xxb}{x_2}
\newcommand{\xxc}{x_3}
\newcommand{\na}{n_1}
\newcommand{\nb}{n_2}
\newcommand{\nc}{n_3}
\newcommand{\leqp}{\leq^{+}}
\newcommand{\geqp}{\geq^{+}}
\newcommand{\eqp}{=^{+}}
\newcommand{\co}{\mathrm{CO}}
\newcommand{\tx}{\tilde{t}}
\newcommand{\tz}{\tilde{z}}
\newtheorem{theorem}{Theorem}[section]
\newtheorem{lemma}[theorem]{Lemma}
\newtheorem{claim}[theorem]{Claim}
\newtheorem{proposition}[theorem]{Proposition}
\newtheorem{definition}[theorem]{Definition}
\theoremstyle{remark}
\newtheorem{openq}{Open Question}
\newtheorem{remark}{Remark}
\begin{document}

\title{An operational characterization of mutual information in algorithmic information theory\thanks{%
A preliminary version of this work has been presented at 45thInternational Colloquium on Automata, Languages, and Programming (ICALP), Prague, July 10-13, 2018.}}

\author{ 
{Andrei Romashchenko\/}
\thanks{LIRMM, University of Montpellier  \& CNRS; on leave from IITP RAS.
Email:~\texttt{andrei.romashchenko@lirmm.fr};   %
 supported in part by  supported in part by the ANR through grant RaCAF ANR-15-CE40-0016-01. }
{\quad Marius Zimand\/}
\thanks{  Department of Computer and Information Sciences, Towson University,
Baltimore, MD. \texttt{http://orion.towson.edu/\~{ }mzimand/}; %
supported in part by the National Science Foundation through grant 1811729.}
}

\maketitle

\begin{abstract}
We show that the mutual information, in the sense of Kolmogorov complexity, of any pair of strings
$x$ and $y$ is equal, up to logarithmic precision, to the length of the longest shared secret key that
two parties, one having $x$ and the complexity profile of the pair and the other one having $y$ and the
complexity profile of the pair, can establish via a probabilistic protocol with interaction on a public channel.
For $\ell > 2$, the longest shared secret that can be established from a tuple of strings $(x_1, . . . , x_\ell)$ by  $\ell$
parties, each one having one component of the tuple and the complexity profile of the tuple, is equal, up
to logarithmic precision, to the complexity of the tuple minus the minimum communication necessary
for distributing the tuple to all parties. We establish the communication complexity of secret key agreement protocols  that produce a secret key of
maximal length, for protocols with public randomness.  We also show that if the communication complexity drops below the established threshold, then only very short secret keys can be obtained.
\end{abstract}

\keywords{Kolmogorov complexity; mutual information;  communication complexity; secret key agreement;
information inequalities; 
G\'acs--K\"orner common information; information reconciliation
}

\section{Introduction}

Mutual information is a concept of central importance in  both information theory (IT) and algorithmic information theory (AIT), also known as Kolmogorov complexity.  We show an interpretation  of mutual information in AIT, which links it to  a basic notion in cryptography.  Even though a similar interpretation was known in the IT framework, an operational characterization of mutual information in AIT has been elusive till now.

To present our result, let us  consider  two strings $x$ and $y$.  It is common to draw a Venn-like diagram such as the one in Figure~\ref{f:figone} to visualize the information relations between them. 
 \begin{figure}[h]
 \centering{
\begin{tikzpicture}[shorten >=1pt,scale=0.15,
lNode/.style={fill=mygreen, circle},
rNode/.style={fill=myblue, circle}
]

\coordinate (x) at (0,0);
\coordinate (b) at (7,-7);
\coordinate (a) at (7,9);
\coordinate (y) at (12,0);
\coordinate (lowx) at (-0.5,-7.5);
\coordinate (lowxx) at (-3,-12);
\coordinate (lowy) at (13, -7.0);
\coordinate (lowyy) at (16, -12);
\coordinate(mida1) at (3,6);
\coordinate(mida2) at (11,6);
\coordinate (lowb) at (7,-10);
\coordinate(midb) at (7,-5);
\node(X) at (x) {$C(x)$};
\node(B) at (b)  {};
\node(Y) at (y) {\quad$C(y)$};

\node(c1) at (x) [draw, thick, fill=lightgray, circle through=(b), opacity=0.5] {};
\node(c2) at (y) [draw, dotted, thick, fill=lightgray, circle through=(b), opacity=0.5] {};
\node[anchor=north] (c3) at (lowxx.-90) {$C(x \mid y)$};
\node[anchor=north] (c4) at (lowyy.-90) {$C(y \mid x)$};
\node[anchor=south] (c5) at (a.90){$C(x,y)$};
\node[anchor=north] (c6) at (lowb.-90){$I(x:y)$};
\draw[->] (lowxx) -- (lowx);
\draw[->] (lowyy) -- (lowy);
\draw[->] (lowb)--(midb);
\draw[->] (a)--(mida1);
\draw[->] (a)--(mida2);
\end{tikzpicture}
}
\caption{Two strings $x$ and $y$, and their information.  There are six regions that we distinguish: (1) The left solid circle represents the information in $x$, as given by its Kolmogorov complexity, denoted $C(x)$; (2) The right dotted circle represents the information in $y$, denoted $C(y)$;  (3) The entire grey region (the two circles taken together) represents the information in $x$ and $y$, denoted $C(x,y)$; (4) The light-grey region in the first circle represents the information in $x$ conditioned by $y$, denoted $C(x \mid y)$; (5)  The light-grey region in the second circle represents the information of $y$ conditioned by $x$, denoted $C(y \mid x)$; and (6) the dark-grey region in the middle represents the mutual information of $x$ and $y$, denoted $I(x:y)$.   }
\label{f:figone}
\end{figure}
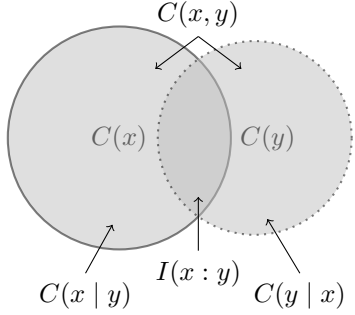
As explained in the figure legend there are six important regions. The regions $(1)$ to $(5)$ have a clear operational meaning. For instance, $C(x)$ is the length of a shortest program that prints $x$,  $C(x \mid y)$ is the length of a shortest program that prints $x$ when $y$ is given to it, and so on.   On the other hand, the mutual information $I(x:y)$ from region  $(6)$ is defined by a formula: $I(x:y) = C(x)+C(y) - C(x,y)$.  Intuitively, it is the information shared by $x$ and $y$.  But, is there an operational interpretation of  the mutual information? As mentioned above, we give a positive answer:
The mutual information of $x$ and $y$  is essentially equal to the length of a longest shared secret  key that two parties, one having $x$ and the other one having $y$, and both parties also possessing the complexity profile of the two strings, can establish via a probabilistic protocol.

\label{line-point-example}
The following  simple example illustrates the above concepts. Suppose that Alice and Bob want to agree on a common secret key. If they could meet face-to-face, they could just generate such a key by, say, flipping a coin. Unfortunately,  they cannot meet in person and what makes the situation really troublesome is that they can only communicate through a public channel.  There is, however,  a gleam of hope because Alice knows a random line  $x$ in the affine plane over the finite field with $2^n$ elements, and Bob knows a random point $y$ on this line.      The line $x$ is specified by the slope $a$ and the intercept $b$ and the point $y$ by its two coordinates $c$ and $d$. Therefore each of $x$ and $y$ has $2n$ bits of information, but, because of the geometrical correlation, together they  have $3n$ bits of information. Thus, in principle, Alice and Bob share $n$ bits.  Can they use them to obtain  a common secret key? 
 The answer is yes: Alice sends $a$ to Bob, Bob, knowing that his point is on the line, finds $x$, and now they can use $b$ as the secret key, because the adversary has only seen $a$, and $a$ and $b$ are independent.

It may appear that the geometrical relation between $x$ and $y$ is crucial for the above solution. In fact it is just a red herring and Alice and Bob can agree on a common secret key  in a very general setting.
To describe it, we consider the scenario in which Alice has a random string $x$ and Bob has a random string $y$. If $x=y$, then Alice and Bob can use their common string as a secret key in an encryption scheme (such as the one-time pad) and achieve perfect information-theoretical security.  What happens if $x$ and $y$ are not equal but only correlated? Somewhat surprisingly, for many interpretations of ``correlated,"  they can still agree on a shared secret key via interaction on a public channel (for instances of this assertion, see~\cite{leu:t:secret,ben-bra-rob:j:privacyamplification,mau:j:seckey,ahl-csi:j:seckeyone}).
In this paper, we look at  this phenomenon  using the very general framework of algorithmic information theory to measure the correlation of strings. 

We proceed with a description of our results.

\subsection{Our contributions}
\label{s:contrib}

{\bf Characterization of mutual information.} In a secret key agreement protocol, Alice and Bob, on input $x$ and respectively $y$, exchange messages and compute a common string that is random conditioned by the transcript of the protocol. Such a string is said to be a \emph{shared secret key}.
Unless specified otherwise, we use protocols having the following features:

(1) We assume that Alice and Bob also know how their $x$ and $y$ are correlated. In our setting this means that Alice and Bob know the complexity 
profile of $x$ and $y$, which, by definition, is the tuple $(C(x), C(y), C(x,y))$. 
\smallskip

(2) The protocols are effective and randomized, meaning that Alice and Bob use probabilistic algorithms to compute their messages. 
\smallskip

\begin{theorem}[Main Result, informal statement]
\label{t:main}
\begin{enumerate}
\item There is a secret key agreement protocol that, for every n-bit strings x and y, allows Alice and Bob to compute
with high probability a shared secret key  of length equal to the mutual information of x and y (up to
an $O(\log n)$ additive term). 
\item No protocol  can produce a
longer shared secret key (up to an $O(\log n)$ additive term).
\end{enumerate}
\end{theorem}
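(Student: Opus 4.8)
The plan is to prove the two halves of the statement separately: first exhibit a protocol producing a key of length $I(x:y)-O(\log n)$, then show no protocol does better. For the achievability half I would compose two algorithmic primitives. \emph{Information reconciliation:} Alice sends Bob a random hash value (fingerprint) of $x$ of length $\ell=C(x\mid y)+O(\log n)$; she can compute $\ell$ because $C(x\mid y)\eqp C(x,y)-C(y)$ is determined by the complexity profile, which she holds. By the known conditional-compression (``coding with side information'') theorem for Kolmogorov complexity, Bob — who holds $y$ and the profile, hence also knows the bound $C(x)$ — reconstructs $x$ from the fingerprint with high probability, essentially as the unique string of complexity $\leqp C(x)$ consistent with both $y$ and the fingerprint. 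After this phase both parties share $x$, and the eavesdropper has seen only the transcript $t$, a string of length $\approx C(x\mid y)$. \emph{Privacy amplification:} Alice then sends a short random seed for a strong seeded extractor and both output $k=\mathrm{Ext}(x;\text{seed})$. Since $C(x\mid t)\geqp C(x)-|t|\geqp C(x)-C(x\mid y)=I(x:y)$, conditioned on the transcript $x$ behaves as a source of min-entropy $\approx I(x:y)$, so one can extract $|k|=I(x:y)-O(\log n)$ bits that are random conditioned on $(t,\text{seed})$; the seed costs only $O(\log n)$ public bits and is absorbed in the accounting.

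For the converse, fix any effective randomized protocol and a run of it on inputs $x,y$ with local random strings $r_A,r_B$; let $t$ be the transcript and $k$ the common output, so that (up to $O(\log n)$, absorbing the $O(\log n)$-bit profile) $C(k\mid x,r_A,t)\approx 0$, $C(k\mid y,r_B,t)\approx 0$ and $C(k\mid t)\eqp |k|$. We may take $r_A,r_B$ to be generic — incompressible relative to the inputs and to each other — which is the overwhelmingly likely case and yields $I\big((x,r_A):(y,r_B)\big)\eqp I(x:y)$. Then
\[
|k|\;\eqp\; C(k\mid t)\;\eqp\; C(k\mid t)-C(k\mid x,r_A,t)\;\eqp\;I\big(k:(x,r_A)\mid t\big)\;\leqp\; I\big((y,r_B):(x,r_A)\mid t\big),
\]
the last step being the data-processing inequality for Kolmogorov complexity, valid because $k$ is computed from $(y,r_B,t)$ by the fixed ($O(1)$-size) protocol. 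It then remains to show $I\big((x,r_A):(y,r_B)\mid t\big)\leqp I(x:y)$: interaction over a public channel cannot manufacture mutual information. Given $I\big((x,r_A):(y,r_B)\big)\eqp I(x:y)$, it suffices that revealing the transcript does not raise the conditional mutual information of the parties' data, which I would prove by the algorithmic counterpart of the classical secret-key-capacity converse: write $t=(m_1,\dots,m_\rho)$ as the message sequence and telescope. Appending Alice's $i$-th message changes $I\big((x,r_A):(y,r_B)\mid t_{\le i}\big)$ by $-I\big(m_i:(y,r_B)\mid t_{<i}\big)\le 0$, since $m_i$ is a function of $(x,r_A,t_{<i})$, and symmetrically for Bob's messages; hence the quantity is non-increasing along the transcript and stays within $O(\log n)$ of $I(x:y)$.

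Equivalently, the whole converse rests on the single identity
\[
I\big((x,r_A):(y,r_B)\mid t\big)\;\eqp\; I(x:y)\;+\;\big(C(t\mid x,r_A)+C(t\mid y,r_B)-C(t)\big),
\]
so the task reduces to showing that the ``rectangle defect'' $C(t\mid x,r_A)+C(t\mid y,r_B)-C(t)$ of a protocol transcript is $O(\log n)$. I expect this to be the main obstacle. Carried out round by round it works term by term, but each use of the chain rule and of symmetry of information for Kolmogorov complexity costs an additive $O(\log)$, so a crude bound loses a term proportional to the number of rounds $\rho$; obtaining the clean $O(\log n)$ requires either a global argument phrased directly in terms of $t$ — exploiting that the valid transcripts partition the input space into combinatorial rectangles $P(t)\times Q(t)$, so that the parts of $t$ one party cannot predict sit almost disjointly inside $t$ — or first reducing, without loss of generality for the converse, to protocols whose number of rounds and transcript length are polynomially bounded in $n$. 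The achievability direction is comparatively routine once reconciliation and extraction are in hand; its only delicate points are that Bob's reconstruction succeeds with high probability from $C(x\mid y)+O(\log n)$ communicated bits, and that the extractor output is certifiably random \emph{conditioned on the transcript}, both of which follow from the incompressibility of $x$ inside the hash-preimage set.
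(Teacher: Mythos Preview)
Your achievability half is essentially the paper's protocol: Alice sends a randomized ``program for $x$ given $y$'' of length $C(x\mid y)+O(\log n)$, Bob reconstructs $x$, and both apply a strong extractor to $x$ with a short public seed. The paper phrases the extraction as ``compute a short program for $x$ given the transcript, then extract,'' but the content is the same.

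For the converse you have correctly reduced everything to the single inequality
\[
C(t\mid x,r_A)+C(t\mid y,r_B)\ \leqp\ C(t),
\]
and you are right that this is where the difficulty sits. However, you have not actually proved it; you offer two routes and execute neither. Your option~(b) does not work: the round-by-round telescoping loses $O(\log n)$ \emph{per round}, so restricting to polynomially bounded protocols still leaves a $\mathrm{poly}(n)\cdot\log n$ slack, not $O(\log n)$. (The paper's ``light'' argument needs a \emph{constant} number of rounds, not merely polynomial.) So it is option~(a), the global rectangle argument, that must be carried out.

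The paper's execution of (a) is a short ``clones'' trick. With $a=(x,r_A)$, $b=(y,r_B)$ and $t=f(a,b)$ the transcript, call $a'$ an $A$-clone if $C(a')\le C(a)$ and $a'$ lies in the $A$-projection of the rectangle $f^{-1}(t)$; define $B$-clones symmetrically. Clones are enumerable from $t$ and the integer $C(a)$, hence $|\mathrm{Clones}_A|\ge 2^{C(a\mid t)-O(\log n)}$, and similarly for $B$. Pick $(a',b')\in\mathrm{Clones}_A\times\mathrm{Clones}_B$ of maximal complexity given $t$; then $C(a',b'\mid t)\geqp C(a\mid t)+C(b\mid t)$. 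But the rectangle property gives $f(a',b')=t$, so $C(a',b',t)\leqp C(a')+C(b')\leqp C(a)+C(b)$. Combining yields $C(t)+C(a\mid t)+C(b\mid t)\leqp C(a)+C(b)$, which is exactly your rectangle-defect inequality, obtained in one shot with a single $O(\log n)$ loss independent of the number of rounds. This is the missing piece.
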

\medskip

{\bf Secret key agreement for  three or more parties.}  
Mutual information is only defined for two strings, but secret key agreement can be explored for the case of more  strings. Let us consider again an example. 
Suppose that each of Alice, Bob, and Charlie have a point in the affine plane over the finite field with $2^n$ elements, and that the three points, which we call $A, B, C$, are collinear. Thus, each party has $2n$ bits of information, but together they have $5n$ bits of information, because given two points, the third one can be  described with $n$ bits. The parties want to establish a common secret key, but they can only communicate by broadcasting messages over a public channel. They can proceed as follows. Alice will broadcast a string $\pa$, Bob a string $\pb$, and Charlie a string $\pc$, such that each party using his/her point and the received information will reconstruct the three collinear points $A, B, C$.  A protocol that achieves this is called an \emph{omniscience protocol} because it spreads to everyone the information possessed at the beginning individually by each party. In the next step, each party will compress the $5n$ bits, comprising the three points, to a string that is random given $\pa, \pb, \pc$. The compressed string is the common secret key. We will see that up to logarithmic precision it has length $5n - (|\pa| + |\pb| + |\pc|)$. Assuming we know how to do the omniscience protocol and the compression step, this protocol produces a common secret key of length $5n - \co(A,B,C)$, where $\co(A,B,C)$ is the minimum communication for the omniscience task for the points $A,B,C$. In our example, it is clear that each one of $\pa, \pb, \pc$ must be at least $n$ bits long, and that any two of these strings must contain together at least $3n$ bits. Using some recent results from the reference~\cite{zim:c:kolmslepianwolf}, it can be shown that any numbers satisfying these constraints can be used for the omniscience task. It follows that the smallest communication for omniscience is achieved when $|\pa| = |\pb| = |\pc| = 1.5n$, and thus the key has $5n -4.5n= 0.5n$ bits. (Warning: in the entire discussion we  ignore additive terms of size $O(\log n)$.) 
We show that this holds in general.  If $\ell$ parties have, respectively,  one component of a tuple $(x_1, \ldots, x_\ell)$ of $n$-bit strings, then up to $O(\log n)$ precision, they can produce a common secret key of length $C(x_1, \ldots, x_\ell) - \co(x_1, \ldots, x_\ell)$, where $\co(x_1, \ldots, x_\ell)$ is the minimum communication for the omniscience task (which, as we show in the paper, depends only on the complexity profile of the tuple). The protocol that produces such a key is probabilistic, and, as was the case for two strings, assumes that each party $i$ has at the beginning of the protocol besides its input string $x_i$ also the complexity profile of the entire tuple $(x_1, \ldots, x_\ell)$.
We also show a  matching (up to $O(\log n)$) upper bound, which holds for any number of rounds in the protocol.

\medskip

{\bf Communication complexity for secret  key agreement.} In the protocol in Theorem~\ref{t:main}, Alice and Bob exchange $\min (C(x \mid y), C(y \mid x)) + O(\log n)$  bits and obtain with high probability  a shared secret key of length $I(x:y) - O(\log n)$. In this protocol we can assume that Alice and Bob use either private random bits or public random bits. We show that for the model with public random bits, the communication complexity of the protocol is optimal, in the sense that in any protocol with public random bits there are input strings $x$ and $y$, on which Alice and Bob have to exchange  at least $\min(C(x \mid y), C(y \mid x))$ bits. In fact, our lower bound is stronger: we show that, for any constants $\delta_1,\delta_2 > 0$, if Alice and Bob use a protocol with communication complexity $(1-\delta_1) \min(C(x \mid y), C(y \mid x))$ for every input pair $x,y$, then there are inputs for which the shared secret key that they obtain has length at most $\delta_2 I(x:y)$.
That is, if the communication complexity sinks below the threshold  $\min(C(x \mid y), C(y \mid x))$, then the size of the common secret key drops to virtually zero. Finding the optimal communication complexity for the model with private random bits remains an open problem.

\subsection{Related previous work.} 

\textbf{IT $\mbox{vs.}$ AIT.} Before reviewing existing related results in the IT and the AIT frameworks, it is useful to understand the distinction between the two theories.
%
%
In computer science the attribute \emph{random}  is  mainly used  in two (fairly different) contexts: 
\emph{random processes} and \emph{random objects}. In short, IT, which we also call Shannon's framework, focuses on the former, whereas AIT, which we also call Kolmogorov's framework, focuses on the latter.
On the one hand, we may think of an uncertain physical process with unpredictable  outcomes, and
employ the framework of the classic probability theory (distributions, random variables, etc.).
The notion of a \emph{random variable} formalizes the idea of  a process like coin tossing.  In this context we can  measure  the uncertainty of a random variable as a whole (by its Shannon's entropy,   its min-entropy, etc.), but  we cannot ask whether one specific outcome  is random or not. 
On the other hand, people use \emph{tables of random numbers}, which are  available as  specific sequences of digits, written on a disc or printed on a paper. 
The usefulness  of such a table depends on its individual properties:  frequencies of digits, presence or absence of hidden regularities,  compressibility, etc. The usual way to measure  the uncertainty of an individual string of digits is Kolmogorov complexity.
In both  contexts, the formal measures of randomness may or may not  involve computational complexity  (see, e.g., different versions of pseudoentropy for distributions and the resource bounded variants of Kolmogorov complexity for individual strings).  These two formalizations   of  randomness are connected but  not interchangeable.

Both notions of randomness appear in cryptography.
For example, in the one-time pad scheme,  two parties share a `random' key that remains `secret' for the attacker.  It is common to use Shannon's framework, and therefore the notions of randomness and secrecy are defined in terms of {random processes}. In the ideal situation both parties should have access to a common source of randomness, e.g., to the results of  tossing an unbiased coin (hidden from the adversary). By tossing this coin $n$ times we get a random variable with maximal possible entropy, and thus, in Shannon's framework,  the quality of randomness is perfect. But if by chance we obtain a sequence of $n$ zeros,  then this specific one-time pad looks pretty useless in any practical application. However, Shannon's information theory provides no vocabulary to complain about this apparently non-random \emph{individual} key. Antunes \emph{et al.} \cite{antunes-laplante} suggested to use Kolmogorov complexity to measure the `secrecy' of individual instances of a one-time pad or a secret sharing schemes. We have in mind a similar motivation, and in this work a ``secret key" is an  individual string that is random in the sense of Kolmogorov complexity. 
\smallskip

\textbf{Related work.} 
We start with a brief account of  works on secret key agreement in the IT setting. 
The secret key agreement is a relatively well-studied problem in information theory, motivated, as the name suggests, by applications in information-theoretically secure cryptography.  Wyner~\cite{wyn:j:tap} and Csisz{\'{a}}r and K\"{o}rner~\cite{csi-kor:j:secret} have analyzed the possibility of obtaining a shared secret key in the case when one party sends to the other party a single message on a channel from which the eavesdropper can obtain partial information. Maurer~\cite{mau:j:cryptoprov,mau:j:seckey} considered the case of protocols with several rounds of communication  and showed that interaction can be more powerful than one-way transmission. Ahlswede and  Csisz{\'{a}}r~\cite{ahl-csi:j:seckeyone} and Maurer~\cite{mau:j:seckey} have established the tight relation between interactive secret key agreement and mutual information for \emph{memoryless} sources.  In the memoryless model, the input data is given by two random variables $(X_1, X_2)$ obtained by $n$ independent draws from a joint distribution, where Alice observes $X_1$ and Bob observes $X_2$.  Informally stated, the references~\cite{mau:j:seckey, ahl-csi:j:seckeyone}  show that the longest shared secret key that Alice and Bob can establish via an interactive protocol with an arbitrary number of rounds is equal to the mutual information of $X_1$ and $X_2$.  Csisz{\'{a}}r and Narayan~\cite{csi-nar:j:seckey} go beyond the scenario with  two parties,  and consider the case of an $\ell$-memoryless source $(X_1, \ldots, X_\ell)$ and $\ell$  parties, each one observing one component of the tuple. They show that the longest shared secret key the $\ell$ parties can establish via an interactive protocol with an arbitrary number of rounds is equal to the entropy  $H(X_1, \ldots, X_\ell)$ of the $\ell$-memoryless source from which one subtracts  the minimum communication for omniscience.  Their result holds also for stationary ergodic sources, which generalize memoryless sources. As one can see, our results are very similar. They have been inspired by the papers~\cite{ahl-csi:j:seckeyone,mau:j:seckey,csi-nar:j:seckey} and represent the AIT analogue of the results presented above. As we explain in Section~\ref{subsection-shannon-vs-kolmogorov}, our results imply their IT analogues, and can be viewed as  more general because they do not require the memoryless or ergodicity properties of sources (in fact they do not require any generative model at all).   The question of finding the communication complexity of secret key agreement protocols in the IT framework has been raised by Csisz{\'{a}}r and Narayan~\cite{csi-nar:j:seckey}. Tyagi~\cite{tyagi2013common}  has shown that for memoryless sources it is equal to the difference between interactive common information in Wyner's sense and  mutual information. In Section~\ref{subsection-shannon-vs-kolmogorov} we explain how Tyagi's result compares to our results on communication complexity in the AIT framework.     

Let us now say a few words about related results from the AIT world. To the best of our knowledge, in AIT there has been no previous work on  secret key agreement. However, the general idea of ``materialization'' of  mutual information was studied extensively. 
Motivated by the intuition that mutual information represents the amount of shared information in two strings, researchers have explored the extent to which mutual information can be materialized more or less effectively. The relevant concept is that of \emph{common information.} Informally, a string $z$ is a common information string extracted from strings $x$ and $y$, if $z$ can be ``computed" from $x$, and also from $y$, where ``computed" is taken in a more liberal sense that allows the utilization of a few help bits. In the most common setting of parameters, we require that $C(z \mid x) = O(\log n)$ and $C(z \mid y) = O(\log n)$, where $n$ is the length of $x$ and $y$ and the constant hidden in the $O(\cdot)$ notation depends only on the universal machine. 
 Informally, the common information of $x$ and $y$ is the length of a longest common information string that can be extracted from $x$ and $y$.
It can be shown that up to logarithmic precision common information is upper bounded by mutual information. 
In an influential paper, G\'{a}cs and K\"{o}rner~\cite{gac-kor:j:commoninfo} have constructed strings $x$ and $y$ for which the common information is much smaller than the mutual information. Moreover, the property of a pair $(x,y)$ of having common information equal to mutual information does not depend solely on the complexity profile of $x$ and $y$: There exist pairs $(x_1, y_1)$ and $(x_2, y_2)$ having the same complexity profile, and for $(x_1,y_1)$ the common information and mutual information are equal, whereas for $(x_2, y_2)$ they are not. Muchnik~\cite{muc:j:commoninfo} and Romashchenko~\cite{rom:j:mutualinfo} have strengthened the G\'{a}cs-K\"{o}rner theorem in significant ways, by allowing a larger amount of help bits, parameterizing the mutual information of the constructed pair $(x,y)$, and other ways. Chernov et al.~\cite{che-muc-rom-she-ver:j:commoninfo} presents alternative constructions of strings for which the common information is smaller than mutual information for several regimes of parameters. A nice, self-contained and accessible exposition of this research line can be found in the book of Shen, Vereshchagin and Uspensky~\cite[Chapter 11]{suv:b:kolmenglish}.
 
 Thus, previous work has shown negative results regarding the ``materialization" of mutual information in AIT. As far as we know, ours is the first positive result. In summary, we now know that  computation without communication, even enhanced with help bits, fails to extract the mutual information of two strings, while interactive computation succeeds.
\medskip

\textbf{Our techniques.} It is common for statements in IT (in the Shannon's entropy framework) to have an analogous version in AIT (in the Kolmogorov complexity framework). However, there is no canonical way to translate a result from one setting to the other, and proofs of homologous results in these two frameworks can be drastically different. A textbook example of this phenomenon is the chain rule: it is valid for Shannon's entropy and for Kolmogorov complexity, and the  formal expressions of this rule in both frameworks look very similar. However, in Shannon's case this fact is an easy corollary of the definition, while in Kolmogorov's version it requires a nontrivial argument (which is known as the Kolmogorov--Levin theorem). There are more advanced examples of parallel properties (from IT and AIT respectively), where the discrepancy between their proofs is even more striking.

This phenomenon manifests itself in this work as well.  Our main results are motivated  by similar ones in IT, and there is a close resemblance of statements. As discussed above,  this is not surprising. In what follows we explain the relation between our proofs and the proofs of similar statements in Shannon's framework.

The  positive results (the existence of communication protocols) use constructions that at the high level are  akin to those from their IT counterparts~\cite{mau:j:seckey,ahl-csi:j:seckeyone,csi-nar:j:seckey}.  We employ a similar intuitive idea --- manipulations with  ``fingerprints'' of appropriate lengths,  see  Section~\ref{s:warmup}. However, the technical machinery is different. In the AIT framework, for communication-efficient protocols, we need  quite explicit constructions, 
while homologous results in IT are usually proven  by choosing random encodings. 
Our constructions are based on a  combination of extractors and universal hashing.
Specifically,  we use results from ~\cite{bau-zim:c:linlist} and~\cite{zim:c:kolmslepianwolf}, where the  ``digital fingerprints''   are obtained by composing certain  randomness extractors and universal hashing.

Our general protocols are not time-efficient, and this is to be expected given the high generality of the type of data correlation in the AIT setting.  However, for some particular types of correlation  (e.g., for a pair of inputs with a bounded Hamming distance), our protocols can be modified to run in polynomial-time. In this case, we use the reconciliation technique  from \cite{smith2007scrambling,guruswami2010codes,guruswami2016optimal}.

In the negative results (upper bounds for the size of the common secret key, Theorems~\ref{t:lower} and~\ref{t:upperbdthree}), the ideas from IT do not help.  The reason is that in the AIT framework the mutual information of various strings is not {exactly} zero but only close to zero within some slack terms.  The slack terms are small, but during the rounds of a protocol, the errors can accumulate and grow beyond control (for more details see the discussion of the limits of the  ``weak'' upper bound in Section~\ref{s:warmup}).
To overcome this obstacle,  we come up with a new type of inequalities for Kolmogorov complexity. These inequalities  are substantially different from the classic information inequalities used in the analogous results in IT. 
This technique (Lemmas~\ref{lemma-I(a:b:t)-is-positive-for-transcripts} and \ref{lemma-j(a,b,c|t)}) is  based on  ideas similar to the \emph{conditional information inequalities} in  \cite{kaced2013conditional,kaced2015conditional}. 
We believe that this technique can be helpful in other cases, including applications in IT 
(see Section~\ref{s:final}).

In the proof of a lower bound for communication complexity (Theorem~\ref{thm:comm-complexity}), we use methods  specific for AIT,  with no apparent parallel in IT. We adapt the technique of bounds for the size of common information that goes back to An.~Muchnik and use deep results regarding 
the contrast between  \emph{mutual information} and ``extractable'' \emph{common information} for 
stochastic for pairs of strings~\cite{shen1983concept,muchnik-romash,razenshteyn-2011}, which have not been previously employed in information theory and communication complexity.

\medskip

\textbf{Roadmap.} In Section~\ref{s:prelims} we present the necessary basic concepts and results in Kolmogorov complexity theory used in this paper, and we formally define secret key agreement protocols. Section~\ref{s:warmup} is a warm-up section and contains a somewhat simplified version of the main result with a relatively short and self-contained proof. In Section~\ref{s:mutualinfo} we prove our main result. Section~\ref{section:3-sources} contains the results on $\ell$-parties secret key agreement protocols, for $\ell \geq 3.$  Section~\ref{s:comm} presents the results on the communication complexity of secret key agreement protocols. Section~\ref{s:technical} contains the proofs of some technical lemmas. We conclude with some loose-ends and final comments in Section~\ref{s:final}. 

\medskip

\section{Preliminaries}
\label{s:prelims}
\subsection{The basics of algorithmic information theory}
\label{s:ait}
Algorithmic Information Theory (AIT), initiated independently by Solomonoff~\cite{sol:j:inductive}, Kolmogorov~\cite{kol:j:kolmcomplexity}, and Chaitin~\cite{cha:j:length-of-programs}, is a counterpart to the Information Theory (IT), initiated by Shannon.  While in IT, a string is the realization of a random variable,
AIT dispenses with the stochastic generative process, and defines the complexity of an individual string $x$ as the length of its shortest description. For example, the string 
\[x_1 = 00000000 00000000 00000000 00000000\]
 has low complexity because it can be succinctly described as ``$2^5$ zeros." The string 
 \[x_2 = 10110000 01010111 01010100 11011100\] 
 is a $32$-bit string obtained using random atmospheric noise (according to random.org), and has high complexity because it does not have a short description.

Formally, given a Turing machine $M$, a string $p$ is said to be a \emph{program} (or  a \emph{description}) of a string $x$, if $M$ on input $p$ prints $x$. We denote the length of a binary string $x$ by $|x|$. The \emph{Kolmogorov complexity} of $x$ relative to the Turing machine $M$ is
\[
C_M(x) = \min \{|p| \mid \mbox{ $p$ is a program for $x$ relative to $M$}\}.
\]
If $U$ is universal Turing machine, then for every other Turing machine $M$ there exists a string $m$ such that $U(m, p) = M(p)$ for all $p$, and therefore for every string $x$,
\[
C_U(x) \leq C_M(x) + |m|.
\]
Thus, if we ignore the additive constant $|m|$, the Kolmogorov complexity of $x$ relative to $U$ is minimal. We fix a universal Turing machine $U$, drop the subscript $U$ in $C_U(\cdot)$, and denote the complexity of $x$ by $C(x)$. We list below a few basic facts about Kolmogorov complexity and introduce some notation:
\begin{enumerate}
\item For every string $x$, $C(x) \leq |x| + O(1)$, because  a string $x$ is trivially described by itself. (Formally,  there is a Turing machine $M$ that, for every $x$,  on input $x$ prints $x$.)
\item Similarly to the complexity of $x$, we define the complexity of $x$ conditioned by $y$ as
$C(x \mid y) = \min\{|p| \mid \mbox{ $U$ on input $p$ and $y$ prints $x$}\}.$
\item Using some standard computable pairing function $\langle \cdot, \cdot \rangle$ that maps pairs of strings into single strings, we define  the complexity of a pair of strings   by $C(x,y) = C(\langle x,y\rangle)$. Then we can extend this notation to tuples of larger arity.
\item We use the convenient shorthand notation $ a \leqp b$ to mean that $a \leq b + O(\log n)$, where $n$ is a parameter that is clear from the context and the constant hidden in the $O(\cdot)$ notation depends on the universal machine $U$ and sometimes on some  computable functions specified in the text, but not on $a$ and $b$.  Similarly,  $a \geqp b$ means $a \geq b - O(\log n)$, and $a \eqp b$ means ($a \leqp b$ and $a \geqp b$). In several places we use $\leqp, \eqp, \geqp$ to indicate a loss of precision of $O(\log (n/\epsilon))$ (instead of $O(\log n)$), where $\epsilon$ is an additional parameter that is clear from the context.
\item The chain rule in information theory states that $H(X,Y) = H(X) + H(Y \mid X)$. A similar rule holds true in algorithmic information theory: for all sufficiently long strings $x$ and $y$,
\begin{equation}
\label{e:eqone}
\big| C(x,y)  - (C(x) + C(y \mid x)) \big|  \leq 3 (\log C(x) + \log C(y)).
\end{equation}
\item The mutual information of two strings $x$ and $y$ is denoted $I(x:y)$, and is defined as $I(x:y) = C(x) + C(y) - C(x,y)$. The conditional mutual information is defined in the analogous way: $I(x:y \mid z) = C(x \mid z) + C(y \mid z) - C(x,y \mid z)$. It can be shown that mutual information is essentially non-negative (up to an additive term $O(\log C(x,y))$) and it satisfies the chain rule $I(x,z : y) \eqp I(z:y) + I(x:y \mid z)$ (here the $n$ hidden in the $\eqp$ is $\max(|x|,|y|,|z|)$).
\item  The complexity profile of a tuple of strings $(x_1, \ldots, x_\ell)$ is given by the tuple consisting of the complexities
of all non-empty subsets of the strings in the tuple, \ie, it is the tuple $(C(x_V ) \mid V \subseteq [\ell], V \not= \emptyset)$.
We use here and elsewhere in the paper the notation
 $[\ell]$ for the set $\{1, \ldots, \ell\}$ and for an $\ell$-tuple $(x_1, \ldots, x_\ell)$ and $V$ a subset of $[\ell]$, $x_V$ denotes the subtuple obtained by taking the components with indices in $V$ (for example if $V =\{1,2,7\}$ then $x_V = (x_1, x_2, x_7)$).

\item 
The \emph{extended} complexity profile of a tuple of strings is the tuple of all conditional and unconditional complexities involving these strings. E.g., the extended complexity profile of a pair $(x,y)$ consists of the values
 $
  C(x), C(y), C(x,y), C(x\mid y), C(y\mid x).
 $

\item  We use the following stronger variant, due to Bruno Bauwens~\cite{bau:t:kolmslepwolf}, 
of the main  result from reference~\cite{zim:c:kolmslepianwolf}. It is a version of the Slepian-Wolf theorem about distributed compression of $\ell$ sources in the setting of Kolmogorov complexity. 
\begin{theorem}[Distributed compression of $\ell$ sources]
\label{t:compressmulti}
For every constant positive integer $\ell$, there exist a probabilistic encoding algorithms $E$ and a decoding algorithm $D$ such that for every $n$ for  every tuple of $n$-bit strings $(x_1, \ldots, x_\ell)$ and for every tuple of integers $(n_1, \ldots, n_\ell)$, 
\begin{itemize}
\item For every $i \in [\ell]$, $E$ on input $x_i, n_i$  and $\epsilon$ uses $O(\log (n/\epsilon))$ random bits and outputs a string $p_i$ of length $n_i + O(\log (n/\epsilon))$,
\item For every string $y$, if the numbers $n_i$ satisfy the conditions $\sum_{i \in V} n_i \geq C(x_V \mid y, x_{[\ell] - V})$ for every $V \subseteq [\ell]$, then the decoding algorithm $D$ on input $y$ and $p_1, \ldots, p_\ell$ reconstructs $x_1, \ldots, x_\ell$ with probability $1- \epsilon$, where the probability is over the randomness used by the encoding procedures.
\end{itemize}
\end{theorem}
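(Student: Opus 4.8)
Since Theorem~\ref{t:compressmulti} is the multi-source form of the algorithmic Slepian--Wolf theorem, the plan is to follow~\cite{zim:c:kolmslepianwolf,bau:t:kolmslepwolf}: build the encodings $p_i$ from short-seeded fingerprints of the individual $x_i$, and let the decoder recover the whole tuple by an exhaustive search over tuples that are simultaneously consistent with all fingerprints and of bounded complexity. The one genuinely non-trivial ingredient I would invoke is a \emph{short-seed fingerprinting lemma}: for every $n$, every $k\le n$ and every $\epsilon>0$ there is an explicit family of hash functions $\{h^{(k)}_s\colon\zon\to\zo^{\,k+O(\log(n/\epsilon))}\}$, indexed by seeds $s$ of length $O(\log(n/\epsilon))$ \emph{independent of $k$}, such that for every $S\subseteq\zon$ with $|S|\le 2^k$ and every $x\in S$ one has $\prob_s[\exists\,x'\in S\setminus\{x\}\colon h^{(k)}_s(x')=h^{(k)}_s(x)]\le\epsilon$. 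This is what lets one recover any $x$ with $C(x\mid z)\le k$ from $z$ and $(s,h^{(k)}_s(x))$: the decoder enumerates the $\le 2^{k+1}$ strings $x'$ with $C(x'\mid z)\le k$ and keeps the unique one hitting the given hash value. Classical almost-universal hashing would need seeds growing with $k$, so this is exactly where extractors enter --- hence the ``combination of extractors and universal hashing'' --- and this lemma, in the form with such short seeds, is the technical heart of the cited references.

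For the construction I would have party $i$ output $p_i=(n_i,\,s_i,\,h^{(n_i)}_{s_i}(x_i))$ with $s_i$ a fresh random seed of length $O(\log(n/\epsilon))$ and the first two components written self-delimiting; this uses $O(\log(n/\epsilon))$ random bits and has length $n_i+O(\log(n/\epsilon))$. The decoder $D$, given $y$ and $p_1,\dots,p_\ell$, reads off the $n_i$, seeds $s_i$ and hash values $w_i$, and runs a dovetailed search over time bounds $t=1,2,\dots$: it enumerates all $\ell$-tuples $(x_1',\dots,x_\ell')$ of $n$-bit strings obeying the \emph{complexity predicate} ``$C^{\,t}(x_V'\mid y,x_{[\ell]\setminus V}')\le\sum_{i\in V}n_i+c\log(n/\epsilon)$ for every nonempty $V\subseteq[\ell]$'' (for a suitable constant $c$; $C^t$ is $t$-time-bounded complexity), and outputs the first such tuple also passing the \emph{consistency check} $h^{(n_i)}_{s_i}(x_i')=w_i$ for all $i$. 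The subtle design decision here is to impose the complexity bound for \emph{every} subset $V$, not only for $V=[\ell]$.

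The correctness argument I would give runs as follows. Fix $y$ with $\sum_{i\in V}n_i\ge C(x_V\mid y,x_{[\ell]\setminus V})$ for all nonempty $V$. Then the true tuple passes the consistency check by construction and passes the complexity predicate for all large enough $t$, so $D$ outputs some tuple meeting both conditions; it remains to bound by $\epsilon$ the probability over the seeds that a \emph{wrong} tuple also meets both. Given such a wrong tuple $x'$, set $U=\{i\colon x_i'\ne x_i\}\ne\emptyset$; then $x_{[\ell]\setminus U}'=x_{[\ell]\setminus U}$, the consistency check forces $h^{(n_i)}_{s_i}(x_i')=h^{(n_i)}_{s_i}(x_i)$ for all $i\in U$, and the complexity predicate at $V=U$ confines $x_U'$ to the set $S_U=\{z\colon C(z\mid y,x_{[\ell]\setminus U})\le\sum_{i\in U}n_i+c\log(n/\epsilon)\}$ of size $\le 2^{\sum_{i\in U}n_i+O(\log(n/\epsilon))}$ --- which is precisely why the per-subset predicate is needed, since using only $V=[\ell]$ would give the far weaker bound $2^{\sum_i n_i+O(\log)}$. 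Hence the event ``some wrong tuple with difference set $U$ survives'' sits inside ``the product hash $z\mapsto(h^{(n_i)}_{s_i}(z_i))_{i\in U}$, of total output length $\sum_{i\in U}n_i+O(\log(n/\epsilon))$, collides at $x_U$ with another point of $S_U$''. I would bound this by $\epsilon/2^\ell$ using the \emph{product form} of the fingerprinting lemma --- that a coordinatewise product of short-seed hashes with independent seeds is again injective at a given point of any set of matching size, with small failure probability, once the $O(\log(n/\epsilon))$ overheads have large enough constants --- and then take a union bound over the $\le 2^\ell-1$ choices of $U$; since $\ell$ is constant the factor $2^\ell$ is absorbed into the slack.

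I expect the main obstacle to be this fingerprinting lemma together with, for $\ell\ge2$, its product form: obtaining uniqueness on \emph{every} $2^k$-size set while spending only $O(\log(n/\epsilon))$ random bits is an extractor-theoretic statement with no classical-hashing proof, and making it compose over the $\ell$ coordinates on the sets $S_U$, which are not product sets, is the extra ingredient specific to the multi-source case. Everything else --- the encoding, the dovetailed decoder, the bookkeeping in the union bound --- is routine once that lemma is in hand; this is why the theorem is quoted from~\cite{zim:c:kolmslepianwolf,bau:t:kolmslepwolf} rather than reproved here.
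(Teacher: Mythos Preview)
The paper does not prove Theorem~\ref{t:compressmulti}: it is stated as a quoted result, attributed to Bauwens~\cite{bau:t:kolmslepwolf} as a strengthening of~\cite{zim:c:kolmslepianwolf}, and used as a black box. You recognise this yourself in your last sentence. So there is no ``paper's own proof'' to compare against; what can be said is whether your sketch is a faithful outline of the argument in those references.

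At that level your plan is accurate. The encoding via short-seed fingerprints, the decoder that searches for a tuple satisfying the Slepian--Wolf complexity constraint for \emph{every} nonempty $V\subseteq[\ell]$, and the error analysis that fixes a wrong candidate $x'$, sets $U=\{i:x_i'\ne x_i\}$, and uses the constraint at $V=U$ together with $x'_{[\ell]\setminus U}=x_{[\ell]\setminus U}$ to confine $x'_U$ to a set of size $2^{\sum_{i\in U}n_i+O(\log(n/\epsilon))}$ --- all of this matches the structure of the proofs in~\cite{zim:c:kolmslepianwolf,bau:t:kolmslepwolf}. Your emphasis that the per-subset predicate (not just $V=[\ell]$) is what makes the counting work is exactly right.

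One caution on the step you yourself flag as the obstacle. The ``product form of the fingerprinting lemma'' does not follow from applying the single-source lemma coordinatewise and multiplying: extractor-based hashes with $O(\log(n/\epsilon))$-bit seeds do \emph{not} give a per-pair collision bound of order $2^{-m_i}$, so the naive independence-then-union-bound over $S_U$ is not available. The references handle this differently --- roughly, by working with the bipartite graph of the extractor and arguing about neighbourhood sizes inside the relevant complexity-bounded set, rather than via pairwise collision probabilities. Your framing of this as the one genuinely non-elementary ingredient, imported rather than reproved, is therefore appropriate; just be aware that ``product of short-seed hashes with independent seeds is again a short-seed hash on sets of matching size'' is a slogan for the conclusion, not a mechanism for the proof.
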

\if01
Note:  In the paper~\cite{zim:c:kolmslepianwolf} the theorem is stated with $y$ being the empty string, but the proof works for any $y$. Also the 
paper~\cite{zim:c:kolmslepianwolf} contains actually a  result with different parameters:  the algorithm runs in polynomial time, but the price is that the $O(\log n)$  terms are replaced by polylog($n$). The above parameters can be obtained  with the same proof as in~\cite{zim:c:kolmslepianwolf}, except that the explicit extractor used there is replaced with an extractor built with the standard probabilistic method.
\fi
The following is the particular case of the above theorem for $\ell = 1$. A weaker version  appeared in reference~\cite{bau-zim:c:linlist}.

\begin{theorem}[Single source compression]
\label{t:compression}
There exists a probabilistic algorithm that on input $x,k, \epsilon$, where $x$ is a string, $k$ is a positive integer and $\epsilon > 0$, returns a string $p$ of length $k+ O(\log (n/\epsilon)) $, and for every $y$,  if $k \geq C(x \mid y)$, then with probability $(1-\epsilon)$, $p$ is a program for $x$ given $y$. The algorithm is using $O(\log (n/\epsilon))$ random bits, where $n$ is the length of $x$.
\end{theorem}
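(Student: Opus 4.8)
The plan is to obtain Theorem~\ref{t:compression} as the special case $\ell=1$ of the distributed compression theorem, Theorem~\ref{t:compressmulti}, which I am allowed to assume. Concretely, I would instantiate that theorem with $\ell=1$, $n_1=k$, and the given $\epsilon$: its encoding algorithm $E$, on input $x$, $k$, $\epsilon$, uses $O(\log(n/\epsilon))$ random bits and outputs a string $p$ of length $k+O(\log(n/\epsilon))$, which is exactly the randomness budget and output length claimed here. For the decoding direction, the only subsets of $[\ell]=\{1\}$ are $\emptyset$ and $\{1\}$; the constraint for $V=\emptyset$ is vacuous, and the constraint for $V=\{1\}$ reads $n_1\ge C(x_1\mid y,x_{[\ell]-\{1\}})=C(x\mid y)$ because $x_{[\ell]-\{1\}}$ is the empty tuple. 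Hence, whenever $k\ge C(x\mid y)$, the fixed decoding algorithm $D$ of Theorem~\ref{t:compressmulti} recovers $x$ from $y$ and $p$ with probability $\ge 1-\epsilon$ over the coins of $E$. Since $D$ is a single algorithm not depending on $x,y,k,\epsilon$, prepending to $p$ the constant-length index $m_D$ of a machine that simulates $D$ turns $p$ into a genuine program for $x$ given $y$, i.e. $U(\langle m_D,p\rangle,y)=x$, of length $k+O(\log(n/\epsilon))$, with probability $\ge 1-\epsilon$. This finishes the proof.

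For completeness I would also sketch the shape of a self-contained argument, since it exposes where the real work lies. The encoder would fix a family of fingerprint maps $h_s:\{0,1\}^n\to\{0,1\}^{k+O(\log(n/\epsilon))}$ indexed by seeds $s$ of length $O(\log(n/\epsilon))$, pick $s$ uniformly at random, and output $p=(s,h_s(x),i)$, where $i\in[\poly(n/\epsilon)]$ is an index explained below. The decoder, given $y$ and $(s,v,i)$, would run the semidecision procedure for ``$C(\cdot\mid y)\le k$'', thereby enumerating the set $S_y$ of at most $2^{k+1}$ strings $z$ with $C(z\mid y)\le k$, keep in order of appearance the sublist $L$ of those $n$-bit $z$ with $h_s(z)=v$, and output the $i$-th element of $L$. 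Correctness then needs exactly two things: that $x$ lies in $L$ (immediate, since $C(x\mid y)\le k$ and $h_s(x)=v$), and that with probability $\ge 1-\epsilon$ over $s$ the list $L$ has at most $\poly(n/\epsilon)$ entries, so that an index $i$ of $O(\log(n/\epsilon))$ bits is enough and the total length of $p$ stays $k+O(\log(n/\epsilon))$.

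The crux, and the step I expect to be the main obstacle, is producing such a short-seed fingerprint family with the required collision-list property: for every $S\subseteq\{0,1\}^n$ with $|S|\le 2^{k+1}$ and every $x$, with probability $\ge 1-\epsilon$ over $s$ the set $\{z\in S:h_s(z)=h_s(x)\}$ has size $\poly(n/\epsilon)$, uniformly over all (computably enumerable) candidate sets $S_y$. The naive attempts fail to meet the $O(\log(n/\epsilon))$ seed budget: a two-universal family, or reading collisions off an extractor graph, incurs a union bound over up to $2^{k+1}$ candidate collisions, losing a multiplicative factor $2^k$ in the error, which one could only compensate by driving the extractor error down to $2^{-k}\epsilon$ and thereby blowing the seed length up to $\Omega(k)$. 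Avoiding this loss is precisely the content of Theorem~\ref{t:compressmulti} and of \cite{zim:c:kolmslepianwolf,bau-zim:c:linlist}: one uses a bipartite graph coming from an extractor built by the probabilistic method (this is what yields $O(\log(n/\epsilon))$ rather than $\polylog(n)$ precision) and, in place of a union bound, an online/greedy matching argument that assigns fingerprints as the elements of $S_y$ are enumerated, keeping the collision list at $x$ short for all but an $\epsilon$-fraction of seeds. Having recorded the clean derivation above, I would present this only as motivation and rely on Theorem~\ref{t:compressmulti} for the actual proof.
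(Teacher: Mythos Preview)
Your proposal is correct and is exactly the paper's approach: the paper states Theorem~\ref{t:compression} as the particular case $\ell=1$ of Theorem~\ref{t:compressmulti} and gives no separate proof. Your added remark that one prepends a constant-length description of the decoder $D$ to turn $p$ into a literal program for $x$ given $y$ is a nice clarification the paper leaves implicit.
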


\end{enumerate}
\subsection{Shared secret keys and protocols for secret key agreement}
\label{s:defprotocol}

 Let $k$ be a positive integer. A $k$-rounds  two-party protocol for secret key agreement uses two computable functions $A$ and $B$ and runs as follows. The first party has as input a string $\xa$ and uses private randomness $\ra$,  the second  party has as input a string $\xb$ and uses private randomness $\rb$.   
We assume that the length of $\ra$ ($\rb$) is determined by $\xa$ (respectively, $\xb$).
The protocol consists of the following calculations:
\[
\begin{array}{rl} 
x_1 & = A(\xa, \ra) \\
y_1 & = B(\xb, \rb, x_1) \\
x_2 & = A(\xa, \ra, y_1) \\
y_2 & = B(\xb, \rb, x_1, x_2) \\

\vdots \\

x_k & = A(\xa, \ra, y_1, \ldots, y_{k-1}) \\
y_k & = B(\xb, \rb, x_1, \ldots, x_k).
\end{array}
\]
The algorithms $A$ and $B$ can handle inputs of different  lengths. We also allow them to be partial (\ie, it is possible that the protocol does not converge for some pairs of inputs).
Let  $\epsilon$ be a positive constant and $\delta(n)$ be a constant or a slow growing function ($\mbox{e.g.}$, $O(\log n)$). A protocol \emph{succeeds} with error probability $\epsilon$ and randomness deficiency $\delta(n)$ on a pair $(\xa, \xb)$ of $n$-bit strings  if with probability $(1-\epsilon)$ over $\ra, \rb$, 
\begin{equation}
\label{e:e1}
A(\xa,\ra, t) = B(\xb, \rb, t) \stackrel{def.}{=} z,
\end{equation}

and
\begin{equation}
\label{e:e2}
C(z \mid t) \geq  |z| - \delta(n),
\end{equation}
where $t=(x_1, y_1, \ldots, x_k, y_k)$ is the transcript of the protocol.

The string $z$ satisfying  equation~\eqref{e:e1} and inequality~\eqref{e:e2} is called a \emph{shared secret key}  output by the protocol on input $(\xa, \xb)$.
Note that the shared secret key $z$ is  a random variable since it depends not only on the inputs $x_A$ and $x_B$, but also on the randomness $\ra$ and $\rb$.

The number of rounds in a protocol (parameter $k$) may depend on the length of the inputs.

In words, Alice and Bob start with input strings $\xa$ and respectively $\xb$, use private randomness $\ra$, and respectively $\rb$ and execute a protocol in which at round $i$, first Alice sends to Bob the string $x_i$,  and next Bob sends to Alice the string $y_i$, and  at the end Alice and Bob separately compute with high probability a common string $z$ (equation~\eqref{e:e1}) such that $z$ is random even conditioned by the transcript  of the  protocol  (inequality~\eqref{e:e2}). Thus, $z$ is secret to an adversary that has observed the protocol and consequently knows the transcript. 

\begin{remark}\label{remark-on-valid-inputs}
In our secret key agreement protocols, the inputs $\xa$ and $\xb$ have two components: $\xa = (x, h_A)$ and $\xb = (y, h_B)$, where the strings $x$ and $y$ are the main components, while $h_A$ and $h_B$ are short helping strings (for example, containing information about how $x$ and $y$ are correlated). The communication protocols designed in this paper succeed for all input pairs $\xa$ and $\xb$  in which $h_A = h_B = \mbox{\textup(the complexity profile of $x$ and $y$\textup)}$. In case one or both of $h_A$ and $h_B$ are not equal to the complexity profile, the protocols still halt on every input, but the outputs may be meaningless. 
\end{remark}

\begin{remark}\label{remark-on-non-uniform-protocols}
In this paper we deal with \emph{uniform} communication protocols: for inputs of all lengths Alice and Bob compute their messages and the final results by using the same algorithms $A$ and $B$. Our main results can be easily extended to non-uniform protocols,  with ``custom-made'' algorithms $A$ and $B$ for each length of inputs. In this setting the technical statements of the theorems would be more cumbersome (we would need to add the description of the protocol in the condition of all complexity terms, assuming that the protocol is known to the adversary; 
for details, see Remarks~\ref{remark-on-non-uniform-protocols-bis},~\ref{remark-on-non-uniform-protocols-bis-bis}), 
though the adjustment of the proofs  is quite straightforward.
\end{remark}

\section{Warm-up:  Short proofs for ``light"-versions of the main results}
\label{s:warmup}
Our main result has two parts: the \emph{lower bound}, which consists  of a protocol that produces a shared secret key of length equal (within logarithmic precision) to the
 mutual information of the inputs, and the \emph{upper bound}, where we show that it is impossible to obtain a longer shared secret key. We present here \emph{light}  versions of these two parts. We call them \emph{light} because they have weaker parameters than the full-fledged results, which are presented in Section~\ref{s:mutualinfo}. On the other hand, they admit short and self-contained proofs.  Moreover, since they have a common structure, these proofs are a useful preamble for the proofs in the next section.
 
 The \emph{light lower bound} consists of a protocol that does allow two parties, one having the $n$-bit string $x$ and the other having the $n$-bit string $y$, to obtain a shared secret key of length approximately equal to $I(x:y)$, but which communication-wise is inefficient. In Section~\ref{s:mutualinfo} we present an improved protocol that achieves the same task, with communication complexity equal essentially to $\min (C(x \mid y), C(y \mid x))$, which we conjecture to be optimal (see Section~\ref{s:comm}).
 The \emph{light upper bound} consists of a proof in which we show that no protocol with a constant number of rounds and $\poly(n)$ random bits  can produce a secret key longer than the mutual information. In Section~\ref{s:mutualinfo}, we lift the restrictions regarding the number of rounds and random bits.
 \medskip

  \textbf{The light lower bound}.\footnote{The following proof is due to Bruno Bauwens (private communication, August 2017).}
 The setting is that Alice has the $n$-bit string $x$, and Bob has the $n$-bit string $y$.  In addition, both Alice and Bob have the complexity profile of $x$ and $y$.  We construct a protocol that allows Alice and Bob to agree with probability $1-\epsilon$ on a shared secret key of length $I(x:y) - O(\log(n/\epsilon)$ and randomness deficiency $\delta(n) = O(\log(n/\epsilon))$. The protocol has $1$ round, and actually the communication consists of a single message from Alice to Bob. 

\label{communication-protocol--nutshell}
We start  with an overview of the proof in a ``nutshell.''  First, Alice chooses  two \emph{hash-functions} $h_1$ and $h_2$ from two families of hash functions which are publicly known. Alice sends $h_1$ and $h_2$ to Bob, and she  also sends to Bob the hash value $h_1(x)$. Next, Bob attempts to reconstruct Alice's input: he searches over all $x'$ such that $C(x'\mid y) \le C(x\mid y)$ until he finds a candidate with the same hash value, $h_1(x')=h_1(x)$; in what follows he hopes that the $x'$ he found  is indeed Alice's string $x$.   Next, Alice and Bob compute the value $h_2(x)$ and take it as the common secret key.
 
 In this scheme the adversary gets to know the hash functions $h_1, h_2$ and the hash-value $h_1(x)$.  For this to work, we want to guarantee that with high probability Bob finds the correct value of $x$ (absence of collisions) and that the adversary gets very little information regarding the secret $h_2(x)$.  The technical part of the proof consists in choosing appropriate families of hash functions, which guarantee the two required properties.  
 
 In this section we use a very simple hashing technique --- we take $h_1$ and $h_2$ to be random linear mappings with  suitable lengths of the output. In the next section we use  Theorem~\ref{t:compression},  which provides a subtler implementation of the same idea,  where the `hashing' is based on extractors.
 
 \smallskip
 
 In what follows we present in more detail the easy version of the protocol, with linear hashing.



 Step $1$: Let $n_1 = C(x)$ and let  $k = C(x \mid y)$.  (In fact, Alice  does not have $k$. But she can compute from the complexity profile $x$ and $y$, a good approximation of $k$; this is a small nuisance which can be easily handled, and for the sake of simplicity we ignore it.)  Alice chooses an $n_1+\log(1/\epsilon)$-by-$n$ random binary matrix $H$. Alice partitions $H$ into two sub-matrices $H_1$ and $H_2$, where $H_1$ consists of the top $k+\log(1/\epsilon)$ rows of $H$, and $H_2$ consists of the bottom $n_1-k$ rows of $H$.  Alice calculates $q= H_1 x$, and sends $q$ and the entire $H$ to Bob.  Let $E_1$ be the event that there exists $x' \not= x$ such that $C(x') \leq C(x)$ and $Hx'=Hx$ and $E_2$ be the event that there exists $x''\not= x$ such that $C(x'' \mid y) \leq C(x\mid y)$ and $H_1 x'' = H_1x$. Note that both $E_1$ and $E_2$ have probability at most $\epsilon$.  Consequently with probability at least $1-2\epsilon$,  neither $E_1$ nor $E_2$ hold,  an event that we  assume henceforth.  We also assume 
 \begin{equation}
 \label{e:randH}
 C(x,y \mid H ) \geq C(x,y)  - O(\log(1/\epsilon)),
 \end{equation} an event which holds with probability at least $1-\epsilon$.
  \smallskip
 
 Step $2$:  Since $x$ is the unique string such  $H_1 x = q$ among strings that have complexity conditioned by $y$ bounded by $k$, Bob reconstructs $x$ from $q$ and $H$. Next, both Alice and Bob compute the string $z= H_2 x$, which is the output of the protocol.  
 
 Let us now show that $z$ has the desired properties. Note, first,  that $z$ has length $n_1 - k = C(x) - C(x\mid y) \eqp I(x:y)$. It remains to show that $z$ is a shared secret key, \ie, that $z$ is random conditioned by the transcript of the protocol, which consists of $q$ and $H$. More precisely, we  show that $C(z \mid q, H) \geq I(x:y) - O(\log(n/\epsilon)$.
  Since $x$ can be computed from $q,z$ and $H$,  
 \begin{equation}
 \label{e:lightlb1}
 C(x,y \mid H) \leq C(q,z,y \mid H) +O(1).
 \end{equation}
 Next, by the chain rule,
 \begin{equation}
 \label{e:lightlb2}
 \begin{array}{rl}
 C(q,z,y \mid H) & = C(q \mid H) + C(z \mid q, H) + C(y \mid q,z,H) + O(\log n) \\
 & \leq C(x \mid y) + C(z \mid q, H) + C(y \mid x) + O(\log (n/\epsilon)).
 \end{array}
 \end{equation}
 In the second line, we have taken into account that $C(q \mid H)$ is less than $|q| + O(1) = C(x \mid y) + \log(1/\epsilon)+ O(1)$, and $C(y \mid q,z, H)$ is less than $C(y \mid x) + O(1)$ (because $x$ can be computed from $q,z, H$).
 Combining equations~\eqref{e:lightlb1} and~\eqref{e:lightlb2}, we obtain
 \[
 \begin{array}{rl}
 C(z \mid q, H) & \geq C(x, y \mid H) - C(x \mid y) - C(y \mid x) - O(\log(n/\epsilon)) \\
 & \geq C(x, y) - C(x \mid y) - C(y \mid x) - O(\log(n/\epsilon))  \mbox{ \quad (using equation~\eqref{e:randH}) }\\
 & \geq I(x: y)  - O(\log(n/\epsilon)).
 \end{array}
 \]
\bigskip

\begin{remark}The defined protocol has communication complexity $O(n^2)$ (mostly due to the necessity to communicate  the matrix $H$). The communication complexity of the protocol can be easily reduced to $O(n)$ if we take at random, not an arbitrary matrix $H$, but a random Toeplitz matrix of the same size. In the next section we describe a protocol with communication complexity $\min (C(x \mid y), C(y \mid x))$.  
\textup(We show in Section~\ref{s:comm} that no protocol with public randomness has smaller communication complexity, and we conjecture that the same holds for protocols with private randomness.\textup) 
\end{remark}

\textbf{The light upper bound.} 
We show here that if $k$ is a constant and the number of random bits is bounded by a polynomial in $n$, no $k$-round protocol can produce a shared secret key that is longer (up to logarithmic precision) than the mutual information of the inputs. More precisely, our claim is that if a shared secret key $z$ is produced from inputs $\xa, \xb$ by some $k$-round protocol with probability $1-\epsilon$, then $C(z \mid t) \leq  I(\xa:\xb) + O(k\log(n/\epsilon))$ with probability $1 - O(\epsilon)$, where $t$ is the transcript of the protocol. 

\emph{Simple warm-up: deterministic protocols without communication.}
 Let us first consider the case $k=0$, \ie, Alice and Bob do not interact, and also, for now, let us  assume they use no randomness. 
 Clearly, this model is very weak, and, in general, Alice and Bob cannot agree on any nontrivial common secret without communication. However, in some cases (for some very special inputs $\xa$ and $\xb$) this might be possible. For instance, if $\xa$ and $\xb$ are $n$-bits strings of complexity $n$, and the first   $n/2$ bits in these strings coincide (the $i$-th bit of $\xa$ is equal to the $i$-th bit of $\xb$ for $i=1,\ldots,n/2$), then Alice and Bob can take the first $n/2$ bits of their inputs as the common key. 
 
 In this trivial example,  Alice and Bob  get a common key of complexity close to $n/2$, and  the mutual information between $\xa$ and $\xb$ is also not less than $n/2$. 
 We  argue that the same relation remains true for all protocols without communication, for all pairs of inputs $(\xa,\xb)$.
That is, if Alice and Bob can extract from their inputs $\xa$ and $\xb$ the same string $z$ without communication, then  the complexity of this $z$ is bounded by $I(\xa:\xb)$. This follows  from a well-known  inequality that is true for all $\xa,\xb,z$ (see Lemma~\ref{l:comm-inf}):
\begin{equation}\label{comm-inf}
C(z) \le^+ C(z \mid \xa) + C(z \mid \xb) + I(\xa:\xb).
\end{equation}
\if01
(For the sake of self-containedness we  sketch  the proof of \eqref{comm-inf} in  Lemma~\ref{l:comm-inf} in Section~\ref{s:technical}.)
\fi

 \emph{Slightly more freedom: randomized protocols without communication.}
The same bound remains true if Alice and Bob use random strings $\ra$ and $\rb$ (still without  communication).
Indeed, using Lemma~\ref{l:chainrandom}~(3), with probability $1-O(\epsilon)$, 
 we have  
\begin{equation}
\label{e:randeffect}
I(\xa,\ra : \xb,\rb)= I(\xa:\xb) \pm O(\log(n/\epsilon)).
\end{equation}
 Hence,  we can apply the  same argument to  the pair $(\langle \xa,\ra\rangle, \langle \xb,\rb \rangle)$.
\begin{equation}\label{comm-inf-with-ra-rb}
\begin{array}{rl}
C(z)  & \le^+ C(z \mid \xa,\ra) + C(z \mid \xb,\rb) + I(\xa,\ra:\xb,\rb) \\
& \eqp   C(z \mid \xa,\ra) + C(z \mid \xb,\rb) +  I(\xa:\xb) \pm O(\log(n/\epsilon))
\end{array}
\end{equation}
(the first inequality in \eqref{comm-inf-with-ra-rb} is always true, while the last equality is true  with probability $1-O(\epsilon)$.)
In case when $z$ is a computable function of $( \xa,\ra)$ and a  computable function of $( \xb,\rb)$, we obtain from \eqref{comm-inf-with-ra-rb} that with probability $1-O(\epsilon)$
\begin{equation}\label{comm-inf-with-ra-rb-simplified}
C(z) \leq  I(\xa:\xb) + O(\log(n/\epsilon)),
\end{equation}
and we are done. 

 \emph{The general case: randomized protocols with $O(1)$-round communication.}
Let us move now to the case when the number of communication rounds is $k > 0$.  As announced, we claim that  the upper bound  remains true: $C(z \mid t) \leq I(\xa: \xb) + O(\log(n/\epsilon))$. 
 The proof  for the ``light"  case in which $k$ is  constant proceeds as follows.

Inequality~\eqref{comm-inf} has a relativized version, 
\begin{equation}\label{comm-inf-relativ}
C(z \mid t) \le^+ C(z \mid \xa,\ra, t) + C(z \mid \xb,\rb, t) + I(\xa, \ra : \xb, \rb \mid t),
\end{equation}
which is true for all $\xa,\xb, \ra, \rb, z,t$. 
When $z$ is a computable function of $( \xa,\ra,t)$ and a  computable function of $( \xb,\rb,t)$, we obtain
\begin{equation}\label{comm-inf-relativ-simplified}
C(z \mid t) \le^+  I(\xa, \ra: \xb, \rb\mid t).
\end{equation}
Thus, to prove our claim,  it remains to show that for the transcript $t$ 
\begin{equation}\label{eq-0}
   I(\xa, \ra : \xb, \rb \mid t)  \le^{+}   I(\xa, \ra : \xb, \rb ).
\end{equation}
and then to use inequality~\eqref{e:randeffect}.

 Notice that, in general, the conditional mutual information $I(a : b \mid c)$ can be much larger than the plain mutual information $I(a:b)$ (e.g., if $a$ and $b$ are independent  strings of length $n$ and $c$ is the bitwise XOR of  $a$ and $c$, then $I(a:b\mid c) \eqp n  \gg I(a:b)\eqp 0$). So, we should explain why \eqref{eq-0} is true for these specific strings. The required inequality can be deduced from the following lemma.
 \begin{lemma}\label{lemma-I(a:b:t)-is-positive}
 (a) Let $f$ be a computable function of one argument. Then for all $a,b$
  \begin{equation}\label{I(a:b:t)-is-positive}
   I(a:b \mid f(a))  \le^{+}   I(a:b ).
  \end{equation}
(b) Further, let $g$ be a computable function of two arguments. Then for all $a,b,c$
  \begin{equation}\label{I(a:b:t|c)-is-positive}
   I(a:b \mid  g(a,c), c )  \le^{+}   I(a:b \mid c).
  \end{equation}
  \end{lemma}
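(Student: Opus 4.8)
The plan is to prove part (a) directly from the definition of mutual information together with the subadditivity of Kolmogorov complexity, and then to obtain part (b) by relativizing the whole argument to $c$. For part (a), expand $I(a:b\mid f(a))$ using the definition: $I(a:b\mid f(a)) = C(a\mid f(a)) + C(b\mid f(a)) - C(a,b\mid f(a))$. The strategy is to bound each of the three terms by the corresponding unconditioned term up to the error we are allowed. First, since $a$ determines $f(a)$, we have $C(a,b\mid f(a)) \eqp C(a,b) - C(f(a))$ — this is an instance of the chain rule in the form $C(a,b,f(a)) \eqp C(a,b) \eqp C(f(a)) + C(a,b\mid f(a))$, where $C(a,b,f(a))\eqp C(a,b)$ because $f$ is computable. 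Second, in the same way $C(a\mid f(a)) \eqp C(a) - C(f(a))$. Third, trivially $C(b\mid f(a)) \leqp C(b)$, since conditioning never increases complexity by more than a logarithmic term (here even $+O(1)$). Substituting these three estimates,
\[
I(a:b\mid f(a)) \leqp \big(C(a) - C(f(a))\big) + C(b) - \big(C(a,b) - C(f(a))\big) = C(a) + C(b) - C(a,b) = I(a:b),
\]
and the $C(f(a))$ terms cancel, which is the point of the argument.

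For part (b), the plan is to repeat the identical computation but with every complexity term taken conditional on $c$, using the relativized chain rule. Write $g(a,c)$ in place of $f(a)$; the key observations become: $a$ and $c$ together compute $g(a,c)$, hence $C(a,b \mid g(a,c), c) \eqp C(a,b\mid c) - C(g(a,c)\mid c)$ and $C(a\mid g(a,c),c) \eqp C(a\mid c) - C(g(a,c)\mid c)$ by the chain rule relativized to $c$; and $C(b\mid g(a,c),c) \leqp C(b\mid c)$. Plugging these in, the $C(g(a,c)\mid c)$ terms cancel and we are left with $I(a:b\mid g(a,c),c) \leqp C(a\mid c)+C(b\mid c)-C(a,b\mid c) = I(a:b\mid c)$, as desired.

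The only delicate point is bookkeeping of the logarithmic error terms: each application of the chain rule introduces an additive $O(\log(\cdot))$ term, and one must check that these are all $O(\log n)$ (or $O(\log(n/\epsilon))$ in the regime the paper later cares about) for the relevant parameter $n$ — here $n = \max(|a|,|b|,|c|)$, noting that $|f(a)|$ and $|g(a,c)|$ are polynomially bounded in $|a|,|c|$ since $f,g$ are fixed computable functions, so $\log C(f(a))$ and $\log C(g(a,c))$ are also $O(\log n)$. I expect this to be the main (though routine) obstacle: making sure the constants hidden in $\eqp$ genuinely depend only on $U$ and on $f$, $g$, and not on $a,b,c$. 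Everything else is a direct unfolding of definitions.
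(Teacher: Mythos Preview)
Your argument is correct and is essentially the same as the paper's: the paper phrases it as the chain rule for mutual information, $I(a:b) \eqp I(a,f(a):b) \eqp I(f(a):b) + I(a:b\mid f(a)) \geqp I(a:b\mid f(a))$, whereas you unroll this into the underlying complexity terms --- your step $C(b\mid f(a)) \leqp C(b)$ is exactly the non-negativity of $I(f(a):b)$, and the two $C(f(a))$ cancellations are the chain-rule step.

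One small correction to your bookkeeping remark: it is \emph{not} true that $|f(a)|$ is polynomially bounded in $|a|$ for a fixed computable $f$ (computable functions can have arbitrarily long outputs). The reason $\log C(f(a)) = O(\log n)$ is simply that $C(f(a)) \leq C(a) + O(1)$, since $f$ is computable; this is what keeps the chain-rule error terms under control, and it depends only on $U$ and $f$ as you wanted.
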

\begin{proof} Inequality~\eqref{I(a:b:t)-is-positive} follows directly from the chain rule:
$$
\begin{array}{rcll}
I(a:b) &\eqp &  I(a,f(a):b)&\mbox{(since $f(a)$ can be computed from $a$)}\\
    &\eqp &  I(f(a):b) +  I(a:b \mid f(a))&\mbox{(the chain rule)}\\
    &\ge ^+& I(a:b \mid f(a)) &\mbox{(non-negativity of the mutual information)}
\end{array}
$$
The proof of \eqref{I(a:b:t|c)-is-positive} is similar, we should only substitute $c$ as a condition to each term $I(*)$ in the inference above. The lemma is proven.
\end{proof} 
 
\if01
 \smallskip
 
 \emph{Remark (of minor importance).} Inequality~\eqref{eq-0} rewrites equivalently to 
 $$I(x_A, r_A  : y, r_B : t ) \ge^+0,$$ 
 where $I(* : * : *)$ denotes the mutual information of a triple,  defined as 
 $$
 I(a:b:c) :=  C(a) + C(b) +C(c) - C(a,b) - C(a,c) - C(b,c) +C(a,b,c)
 $$
 In general, the mutual information of a triple can be far below zero. So our aim is to show that for this specific triple of objects the mutual information of the triple is non-negative.
  
\bigskip
 
  \paragraph{Paraphrase of the proof from the draft.} Inequality $ I(a:b:t) \ge^+0$ is true assuming  that $t$ is a deterministic function of $x$ or $y$. This fact is easy to prove (the chain rule). Thus, in case $t=\langle t_1,\ldots, t_k\rangle$, where each $t_i$ is a deterministic function of the previous $t_1,\ldots,t_{i-1}$ together with $x$ or $y$, we get
  $$
  I(x_A, r_A  : y, r_B | t_1,\ldots, t_k) \le^+   I(x_A, r_A  : y, r_B | t_1,\ldots, t_{k-1}) \le^+ \ldots \le^+   I(x_A, r_A  : y, r_B).
  $$
Technically, here we sum up $k$ inequalities, and each of them is valid up to an additive term $O(\log n)$. Thus, 
   $$
  I(x_A, r_A  : y, r_B | t_1,\ldots, t_k)  \le   I(x_A, r_A  : y, r_B) + O(k\log n).
  $$
 Since the number of rounds $k$ is constant, this proves the claimed upper bound. To prove the same inequality for a longer $k$ we use a different technique in Section~\ref{s:mutualinfo}. 
  \fi
  
Now we proceed with a proof of \eqref{eq-0}.
Recall from the definition of a protocol, that $t = (x_1,y_1, \ldots, x_k, y_k)$ and each $x_i$ (respectively $y_i$) is deterministically computed from $\xa, \ra$ (respectively from $\xb, \rb$) and the previous messages.
We claim that
\begin{equation}
\label{e:e3}
\begin{array}{rl}
I(\xa, r_A:\xb, r_B) 
                        & \ge ^+   I(\xa, r_A : \xb, r_B  \mid  x_1)  \\
                       & \geq^+  I(\xa, r_A : \xb, r_B \mid  x_1,y_1) \\
                        & \geq^+  I(\xa, r_A : \xb, r_B \mid  x_1,y_1, x_2) \\
			&\vdots  \\
			& \geq^+  I(\xa, r_A : \xb, r_B \mid  x_1,y_1, x_2,y_2, \ldots, x_k, y_k).
\end{array}
\end{equation}
 The first line is a special case of \eqref{I(a:b:t)-is-positive} since $x_1$ can be computed from  $(\xa,\ra)$. Every next line is a special case of \eqref{I(a:b:t|c)-is-positive}. Indeed, on each step we add to the condition of the term $I(\xa, r_A : \xb, r_B \mid  *)$  another string $x_i$ or $y_i$, and every time the newly added  string is a function of $(\xa, \ra)$ and the pre-history of the communication or  of $(\xb, \rb)$ and the pre-history.

All the inequalities used in the argument are valid up to an additive term $O(\log C(\xa,\ra,\xb,\rb))$, which is at most $O(\log n)$ because $\ra$ and $\rb$ have length bounded by a polynomial in $n$.   Since the number of rounds $k$ (and therefore the chain of inequalities \eqref{e:e3}) is constant, the claimed upper bound is proved.

To prove the same inequality for a non-constant $k$ and an arbitrary number of random bits, we will need a different technique. We discuss it  in Section~\ref{s:mutualinfo}.

 \section{Main  results}
\label{s:mutualinfo}
\if01
{\begin{definition}
\label{d:seccap}
The secrecy capacity of two strings $\xa,\xb$ relative to a secret key agreement protocol is the length of the secret key $z$ output by the protocol on input $\xa,\xb$. Note that since $z$ is a random variable, the secrecy capacity is also a random variable.
\end{definition}

First, we prove that there exists a $1$-round protocol having the features presented in the Introduction that allows two parties to agree on a secret key of length equal to $I(x:y)$  modulo $O(\log n)$ precision, with high probability.
 \fi
 We prove here our main results. We present a secret key agreement protocol which produces a shared secret key of length  equal (up to logarithmic precision) to the  mutual information of the inputs, provided the two parties know the complexity profile. Next, we show that no protocol can produce a longer shared secret key. The formal statements are as follows.

\begin{theorem}[Lower bound]
\label{t:lower}
There exists a secret key agreement protocol with
the following property: For every $n$-bit strings $x$ and $y$, for every constant $\epsilon > 0$,  if Alice's input $\xa$ consists of $x$,  the complexity profile
of $(x, y)$ and $\epsilon$, and Bob's input $\xb$  consists of $y$, the complexity profile of $(x, y)$ and $\epsilon$, then, with probability
$1 - \epsilon$,  the shared secret key is a string $z$ such that,
$C(z \mid t) \geq  |z| - O(\log(1/\epsilon))$ and $|z| \geq I(x : y) - O(\log( n/\epsilon))$,
where $t$ is the transcript of the protocol. 

Moreover, the communication consists of a single message sent by Alice
to Bob of length $C(x \mid y) + O(\log (n/\epsilon))$, Alice uses $O(\log(n/\epsilon))$ random  bits, and Bob does not use any random bits.
\end{theorem}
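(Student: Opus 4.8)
The plan is to build on the ``light lower bound'' protocol sketched in Section~\ref{s:warmup}, replacing linear hashing with the single-source compression primitive of Theorem~\ref{t:compression}, which is exactly what brings the communication down from $O(n^2)$ (or $O(n)$ with Toeplitz matrices) to $C(x\mid y)+O(\log(n/\epsilon))$. Concretely: Alice computes $k = C(x\mid y)$ from the complexity profile she is given (more precisely, an $O(\log n)$-accurate value of it, which is harmless). She runs the compression algorithm of Theorem~\ref{t:compression} on input $(x,k,\epsilon)$; this uses only $O(\log(n/\epsilon))$ random bits $\rho$ and produces a ``fingerprint'' $p$ of length $k+O(\log(n/\epsilon))$ that, with probability $1-\epsilon$, is a program that recovers $x$ from $y$. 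The single message Alice sends to Bob is $p$ together with the $O(\log(n/\epsilon))$-bit description of the random choices $\rho$ (so that the decoding procedure is fully specified to Bob, and also to the adversary who reads the transcript); the total message length is $C(x\mid y)+O(\log(n/\epsilon))$ as claimed, and Bob uses no randomness. Bob decodes to obtain $x$, and then both parties deterministically extract the secret key.

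The secret key itself is obtained by a second compression-type step that I would phrase as the residual information of $x$ after the transcript is known. Having recovered $x$, Bob (and Alice) both possess $x$ and the transcript $t$; they should output a string $z$ with $|z|\eqp C(x)-|p| \eqp C(x)-C(x\mid y)\eqp I(x:y)$ that is random conditioned on $t$. The cleanest route is to note that Alice can compute a shortest program $z$ for $x$ given $t$ (this requires knowing $C(x\mid t)$, which she can approximate from the complexity profile and the length of the transcript), and Bob can compute the same canonical object once he has $x$ and $t$; alternatively one applies Theorem~\ref{t:compression} a second time, with the roles arranged so the two parties produce the identical output. Either way one must verify the two defining conditions of a shared secret key. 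The randomness-deficiency bound $C(z\mid t)\ge |z|-O(\log(1/\epsilon))$ follows because $z$ is (close to) a shortest program for $x$ given $t$, so it is incompressible given $t$ up to the usual logarithmic slack. The length bound $|z|\ge I(x:y)-O(\log(n/\epsilon))$ follows from the chain-rule computation already carried out in the warm-up: using that $x$ is reconstructible from $(t,z)$, one gets $C(x,y)\leqp C(p)+C(z\mid t)+C(y\mid x)$, hence $C(z\mid t)\geqp C(x,y)-C(x\mid y)-C(y\mid x)\geqp I(x:y)$, provided we also arrange that the random bits used do not destroy the complexity profile, which holds with probability $1-O(\epsilon)$ exactly as in inequality~\eqref{e:randH}.

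I would organize the write-up as: (i) describe the protocol precisely, including how the approximate values of $C(x\mid y)$, $C(x)$, $C(x,y)$ are read off from the complexity profile and the $O(\log n)$ errors absorbed; (ii) a ``correctness'' lemma — with probability $\ge 1-\epsilon$, Bob's reconstructed string equals $x$ and Alice's and Bob's outputs coincide, which is a union bound over the failure event of Theorem~\ref{t:compression} and the bad-randomness event; (iii) a ``secrecy'' lemma — conditioned on the good event, $C(z\mid t)\geqp I(x:y)$ and $C(z\mid t)\geq |z|-O(\log(1/\epsilon))$, via the chain-rule manipulation above; (iv) the length accounting for the single message. The main obstacle, and the place where care is genuinely needed, is the interplay between the two uses of the compression primitive and the bookkeeping of which complexity quantities each party actually knows: Theorem~\ref{t:compression} only guarantees that $p$ is a valid program with probability $1-\epsilon$, so on the exceptional event $z$ may be garbage, and one must make sure that the definition of ``succeeds with error probability $\epsilon$'' tolerates exactly this; moreover, to get the \emph{same} $z$ on both sides one cannot simply say ``Alice compresses and Bob decompresses'' — Bob must regenerate $z$ himself from $x$ and $t$, so $z$ has to be a canonical function of $(x,t)$ (e.g. the length-lexicographically first shortest program), and one has to check this canonical object is still incompressible given $t$. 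The remaining ingredients — the near-additivity $I(\xa,\ra:\xb,\rb)\eqp I(x:y)$ under few random bits, and the chain rule with logarithmic precision — are quoted directly from Section~\ref{s:prelims} and the warm-up.
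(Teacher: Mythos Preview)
Your overall architecture is right and matches the paper: use Theorem~\ref{t:compression} once to produce a short program $p$ for $x$ given $y$ (this is Alice's message), then use it again to produce a short program $\tilde z$ for $x$ given the transcript, which both parties can compute once they share the same randomness. The chain-rule calculation you sketch does yield $C(\tilde z\mid t)\geqp I(x:y)$, and the paper proves exactly this in its Phase~1.

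There is, however, a real gap concerning the randomness deficiency bound $C(z\mid t)\ge |z|-O(\log(1/\epsilon))$. Your ``cleanest route'' --- taking $z$ to be the length-lex first shortest program for $x$ given $t$ --- is not a computable function of $(x,t)$, so Alice and Bob cannot produce it; this option must be discarded, not just offered as a preference. Your alternative, a second application of Theorem~\ref{t:compression}, is computable, but the output $\tilde z$ has length $C(x\mid t)+O(\log(n/\epsilon))$ while only $C(\tilde z\mid t)\ge C(x\mid t)$ is guaranteed, so the deficiency you get is $O(\log(n/\epsilon))$, not $O(\log(1/\epsilon))$. The paper flags this explicitly (``the randomness deficiency of $\tilde z$ is $O(\log(n/\epsilon))$, which is slightly too large'') and fixes it with an additional Phase~2: apply a strong $(k,\epsilon)$-extractor $E$ with a fresh $O(\log(n/\epsilon))$-bit seed $s$ (also sent by Alice) and set $z=E(\tilde z,s)$. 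A separate lemma (Lemma~\ref{l:def-reduction}) then shows $C(z\mid t,s)\ge |z|-O(\log(1/\epsilon))$ with $|z|=I(x:y)-O(\log(n/\epsilon))$. Without this extractor step your proposal proves a weaker statement than the theorem claims.

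Two smaller points. First, the random bits $\rho$ you attach to the message are not needed for Bob to decode $p$ (Theorem~\ref{t:compression} outputs an honest program for $x$ given $y$); what must be sent are the seeds for the \emph{second} compression and for the extractor, so that both parties compute the identical $\tilde z$ and $z$. Second, your secrecy argument should not invoke an analogue of inequality~\eqref{e:randH} for the full transcript; the cleaner route (used in the paper) is to bound $C(\tilde z\mid p_{x\mid y})$ directly via $C(\tilde z\mid p_{x\mid y})\ge C(x\mid p_{x\mid y})\eqp C(x)-C(x\mid y)\eqp I(x:y)$, and then absorb the $O(\log(n/\epsilon))$ random bits $r,s$ into the slack.
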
 

\begin{theorem}[Upper bound] 
\label{t:upper}
Let us consider a protocol for secret key agreement, let $\xa$ and $\xb$ be input strings of length $n$  on which the  protocol succeeds with error probability $\epsilon$ and randomness deficiency $\delta(n)$, and let $z$ be the random string that
is the shared secret key output by the protocol, \ie, a string satisfying relations (\ref{e:e1}) and (\ref{e:e2}). Then with
probability at least $1-O(\epsilon)$, 
 if $n$ is sufficiently large, $|z| \leq I(\xa : \xb)+ \delta(n) + O(\log (n/\epsilon))$, where the constants in the $O(\cdot)$ notation depend
on  the universal machine and the protocol, but not on $\xa$ and $\xb$.
 \end{theorem}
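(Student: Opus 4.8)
The plan is to follow the skeleton of the light upper bound of Section~\ref{s:warmup} and upgrade it so that it tolerates an arbitrary number of rounds $k$ and arbitrarily many random bits. Write $a=(\xa,\ra)$ and $b=(\xb,\rb)$ for the full inputs (input string together with the private randomness), and let $t$ be the transcript. First I would condition on the success event of the protocol (probability at least $1-\epsilon$): on this event $z=A(\xa,\ra,t)=B(\xb,\rb,t)$ by~\eqref{e:e1} and $C(z\mid t)\ge |z|-\delta(n)$ by~\eqref{e:e2}, so $|z|\le C(z\mid t)+\delta(n)$. Applying the relativized common-information inequality~\eqref{comm-inf-relativ}, $C(z\mid t)\leqp C(z\mid a,t)+C(z\mid b,t)+I(a:b\mid t)$, and noting that on the success event $z$ is computed deterministically from $(a,t)$ and also from $(b,t)$, so that $C(z\mid a,t)$ and $C(z\mid b,t)$ are $O(1)$, we get $|z|\leqp I(a:b\mid t)+\delta(n)$. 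Hence everything reduces to showing that, with probability $1-O(\epsilon)$, $I(a:b\mid t)\leqp I(\xa:\xb)+O(\log(n/\epsilon))$.

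I would obtain this from two ingredients. First, the private randomness is essentially irrelevant for the mutual information of the inputs: since $\ra$ and $\rb$ are uniform and independent of each other and of $(\xa,\xb)$, with probability $1-O(\epsilon)$ we have $I(a:b)=I(\xa,\ra:\xb,\rb)= I(\xa:\xb)\pm O(\log(n/\epsilon))$ — this is equation~\eqref{e:randeffect} (Lemma~\ref{l:chainrandom}(3)), and the important feature is that in $C(\xa,\ra)\eqp C(\xa)+|\ra|$ and $C(\xa,\ra\mid\xb,\rb)\eqp C(\xa\mid\xb)+|\ra|$ the slack with which these hold can be taken independent of $|\ra|$, so the $|\ra|,|\rb|$ terms cancel and the loss is only $O(\log(n/\epsilon))$. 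Second, conditioning on the transcript cannot increase the mutual information of the inputs by more than a logarithmic term: for every transcript $t$ generated by the protocol on $(a,b)$,
\[
I(a:b\mid t)\leqp I(a:b),
\]
equivalently $I(a:b:t)\geqp 0$, equivalently $C(t\mid a)+C(t\mid b)\leqp C(t)$ — with an additive loss that does not depend on the number of rounds. Granting these two facts, a union bound over the $O(1)$ exceptional events gives, with probability $1-O(\epsilon)$, $|z|\leqp I(a:b\mid t)+\delta(n)\leqp I(a:b)+\delta(n)+O(\log(n/\epsilon))\leqp I(\xa:\xb)+\delta(n)+O(\log(n/\epsilon))$, as claimed.

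The second ingredient — which is exactly the inequality~\eqref{eq-0} for non-constant $k$ — is the crux and, I expect, the main obstacle. The telescoping proof used for the light version, $I(a:b\mid x_1,\dots,x_i,y_1,\dots,y_i)\leqp I(a:b\mid x_1,\dots,x_{i-1},y_1,\dots,y_{i-1})$ for each round $i$ (each step an instance of Lemma~\ref{lemma-I(a:b:t)-is-positive}), accumulates $\Theta(k)$ logarithmic errors and is useless once $k$ grows with $n$. Instead I would prove the inequality in one shot, using the whole recursive structure of an interactive transcript at once: since Alice's messages $\baru=(x_1,\dots,x_k)$ are a fixed computable function of $a$ together with Bob's messages $\barv=(y_1,\dots,y_k)$, and symmetrically, one has $C(t\mid a)\eqp C(\barv\mid a)$ and $C(t\mid b)\eqp C(\baru\mid b)$, and the goal becomes $C(\barv\mid a)+C(\baru\mid b)\leqp C(\baru,\barv)$. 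This I would establish through Kolmogorov-complexity inequalities in the spirit of the conditional information inequalities of Kaced and Romashchenko~\cite{kaced2013conditional,kaced2015conditional}, namely Lemmas~\ref{lemma-I(a:b:t)-is-positive-for-transcripts} and~\ref{lemma-j(a,b,c|t)}. The two delicate points are that the additive loss in these lemmas must be independent both of $k$ and of the amount of randomness used (it must stay $O(\log(n/\epsilon))$), so the encoding arguments behind them should refer only to the combinatorial structure of the transcript and not to the full description lengths of $a$ and $b$.
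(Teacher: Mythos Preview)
Your proposal is correct and follows essentially the same route as the paper: reduce via~\eqref{comm-inf-relativ} to the one-shot inequality $I(a:b\mid t)\leqp I(a:b)$, establish this by invoking Lemma~\ref{lemma-I(a:b:t)-is-positive-for-transcripts} (whose proof exploits the combinatorial-rectangle structure of transcript preimages through a ``clones'' counting argument, so the slack is a single $O(\log C(a,b))$ rather than $k$ copies), and finish with Lemma~\ref{l:chainrandom}(3) to strip the private randomness. Your equivalent reformulation $C(\barv\mid a)+C(\baru\mid b)\leqp C(t)$ is exactly what the clones argument yields, and your worry that the encoding overhead must not depend on $k$ is precisely what that lemma is designed to address; the reference to Lemma~\ref{lemma-j(a,b,c|t)} is unnecessary here, as that lemma is the three-party analogue.
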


Theorem~\ref{t:upper} establishes the upper bound claimed in the Introduction. Indeed, for any pair of $n$-bit strings $(x,y)$,  suppose that Alice's input $\xa$ consists of $x$ and the complexity profile of $(x,y)$ and Bob's input $\xb$ consists of $y$ and the complexity profile of $(x,y)$. Note that $I(\xa :  \xb) \eqp I(x:y)$ because the length of the complexity profile is bounded by $O(\log n)$. Hence, Theorem~\ref{t:upper} implies that secret key agreement protocols in which the two parties,  besides $x$ and respectively $y$, are additionally given the complexity profile of their inputs,  can not produce a secret key that is longer than $I(x:y) + O(\log n)$ (provided the randomness deficiency of the key satisfies $\delta(n) = O(\log n)$).

\begin{remark}
\label{remark-on-non-uniform-protocols-bis}
In  Theorem~\ref{t:lower} we construct  a uniform communication protocol (see Remark~\ref{remark-on-non-uniform-protocols} on p.~\pageref{remark-on-non-uniform-protocols}),  i.e., our protocol is described by  one program that deals with all $x$ and $y$ (the complexity profile of $x$ and $y$ is provided as an auxiliary input).  Theorem~\ref{t:upper} is also stated for uniform protocols. In the uniform case we can ignore whether the adversary knows the protocol (this knowledge affects all quantities of Kolmogorov complexity by at most an $O(1)$ term).  

Our negative result (Theorem~\ref{t:upper})  can be extended to non-uniform protocols. In the non-uniform setting the statement of Theorem~\ref{t:upper} has to be adjusted: we should assume that the protocol is ``known'' to the adversary, and all complexity terms are ``relativized''
with respect to the protocol.   Technically this means that  we  append the description of the protocol to the condition of all complexity quantities  involved in the theorem and  in the value of $C(z \mid t)$ in the definition of the randomness deficiency.
\end{remark}

\begin{proof}[Proof of Theorem~\ref{t:lower}]
We first give an overview of the protocol (an adaptation of the scheme ``in a nutshell'' explained on p.~\pageref{communication-protocol--nutshell}). 

 In Step 1, Alice computes $\pxy$ a program for $x$ given $y$ of length $C(x \mid y) + O(\log n)$.
She sends to Bob the string $\pxy$ and also  random  strings $r$ and $s$  that both Alice and Bob will use in Step 2.  The strings $r$ and $s$  have length  $O(\log (n/\epsilon))$.

In Step 2, Bob uses $\pxy$ and his string $y$, to construct $x$.  Next, both Alice and Bob construct a $O(\log n)$-short program $\tz$ for $x$ given $\pxy$ using the random string $r$. Then, using $s$, they modify $\tz$ and obtain the common output $z$ of the protocol.
\if01
First note that,
\[
\begin{array}{rl}
C(x \mid \pxy) & \eqp C(x, \pxy) - C(\pxy) \\
& \eqp  C(x) - |\pxy| \mbox{\qquad (because $C(x, \pxy) \eqp  C(x)$ and $C(\pxy)\eqp  |\pxy|$)} \\
&\eqp  C(x) - C(x \mid y) \eqp  I(x:y).
\end{array}
\]
Next, we have $C(z) \eqp  |z|$ (because $z$ is a short program, so it cannot be compressed by much), and $|z| \eqp  C(x \mid \pxy)$ (by the definition of $z$). Also, $C(z \mid \pxy) = ^+ C(x \mid \pxy)$, because given $\pxy$, one can construct $x$ from $z$, and vice-versa. Since the  transcript $t$ of the protocol is just $\pxy$, we obtain the desired conclusion.
\fi

Let us now see the details.  

Step 1.  Alice, using the complexity profile of $x$ and $y$, first computes $k = C(x,y) - C(x) + 4 \log n$. Taking into account relation~(\ref{e:eqone}), if $n$ is sufficiently large,
$C(x \mid y) +8 \log n \geq k \geq C(x \mid y)$. Next,  using the algorithm from Theorem~\ref{t:compression}, Alice on input $(x, k, \epsilon)$ computes a string $\pxy$ that, with probability $1-\epsilon$, is a program for $x$ given $y$ of length $k + O(\log (n/\epsilon)) \leq C(x \mid y) + O(\log (n/\epsilon))$. In what follows, we assume that $\pxy$ has this property (thus, we exclude an event that has probability at most $\epsilon$).

Note that $C(\pxy) \geq C( x \mid y)$ (because $\pxy$ is a program for $x$ given $y$), and  $C(\pxy) \leq |\pxy| + O(1) \leq k + O(\log (n/\epsilon)) \leq C(x \mid y) + O(\log (n/\epsilon))$. Thus, $C(\pxy) \eqp  C(x \mid y)$. 

Next, 
\begin{equation}
\label{e:eqq2}
\begin{array}{rl}
C(x \mid \pxy) &\eqp  C(x, \pxy) - C(\pxy) \mbox{\quad (chain rule)} \\

&= ^+ C(x) - C(x \mid y) \mbox{\quad ($\pxy$ is computed from $x$ and $O(\log(n/\epsilon))$ randomness)} \\

&\eqp  I(x:y).
\end{array}
\end{equation}

Alice sends to Bob the string $\pxy$ and also random strings $r$ and $s$  of length $O(\log(n/\epsilon))$ that will be used by both parties in Step 2.
\medskip

Step 2.  First,  Bob reconstructs  $x$ from $y$ and $\pxy$.

Next, both Alice and, separately,  Bob do the following. In the first phase, they use the random string $r$ (from Step 1) and compute a string $\tz$, which  almost has the desired properties, except that its randomness deficiency is a little too large. In a second phase, they use the random string $s$ (also from Step 1), to finally obtain $z$ with all the desired properties.

\emph{Phase 1.}  Using the complexity profile of $x$ and $y$ and based on relation~(\ref{e:eqq2}), Alice and Bob compute $k' = I(x:y) + c' \log (n/\epsilon)$, for a constant $c'$ such that $C(x | \pxy) \leq k'$.  Then, using the algorithm from Theorem~\ref{t:compression} on input $(x,k', \epsilon)$, with randomness given by the random string $r$ from Step 1, they compute a string $\tz$ of length $k' + O(\log (n/\epsilon))$. With probability (over $r$) at least $1-O(\epsilon)$, $\tz$ is a program for $x$ given $\pxy$. In what follows, we assume that $\tz$ has this property (thus, we exclude another event that has probability at most $O(\epsilon)$).

Note that
$|\tz| = k' + O(\log n) \eqp  I(x:y)$
and
$C(\tz) \leq |\tz| + O(1) \leq^+  I(x:y)$.
Also
\[
\begin{array}{rl}
C(\tz \mid \pxy) &\geq^+ C(x \mid \pxy) \mbox{\qquad\qquad (because $\tz$ is a program for $x$ given $\pxy$)} \\
&\eqp I(x:y). \mbox{\qquad \qquad  \quad (by using inequality~\eqref{e:eqq2})}
\end{array}
\]
Combining the last three relations, it follows that 
\[
|\tz|\eqp  C(\tz) \eqp  C(\tz \mid \pxy) \eqp  I(x:y).
\]
The transcript of the protocol (without $s$ which has not been used yet) is $t = (\pxy, r)$, and since $|r| = O(\log (n/\epsilon))$,  we get 
\begin{equation}
\label{e:tz}
C(\tz \mid t) \geq I(x:y) - O(\log (n/\epsilon)) \mbox{  and  } |\tz| =  I(x:y) - O(\log (n/\epsilon)).
\end{equation}
 Thus, we are almost done, except for the fact that the randomness deficiency of $\tz$ is $O(\log (n/\epsilon))$, which is slightly too large.
\smallskip

\emph{Phase 2.} To reduce the randomness deficiency, we need to use a strong extractor. 
In what follows we use the standard definitions of extractors  that can be found, e.g., in \cite{shaltiel2002recent}.
 Recall that a function  $E: \zo^n \times \zo^d \mapping \zo^m$ is a $(k, \epsilon)$ extractor if for any set $B \subseteq \zo^n$ of size $|B| \geq 2^k$, and for any $A \subseteq \zo^m$,
\[
\prob(E(U_B, U_{\zo^d}) \in A) \in \big(\frac{|A|}{M}- \epsilon, \frac{|A|}{M} + \epsilon \big),
\]
where $U_B, U_{\zo^d}$ denote independent random variables uniformly distributed  on $B$, and respectively $\zo^d$, and $M = 2^m$.

A function  $E: \zo^n \times \zo^d \mapping \zo^m$ is a $(k, \epsilon)$ \emph{strong} extractor if the function
$E_1$ defined by $E_1(x,y) = (y, E(x,y))$ is a $(k, \epsilon)$ extractor. Using the probabilistic method, one can show that there exist strong extractors with $m = k - 2 \log(1/\epsilon) - O(1)$ and $d = \log n  + 2 \log (1/\epsilon) + O(1)$~\cite[Theorem 6.17]{vad:b:pseudorand}. Alice and Bob compute  by brute-force such a strong extractor $E$. 

 For some value of $k = I(x:y) - O(\log(n/\epsilon))$ (the exact value of $k$ is specified in the proof of the next lemma), Alice and Bob use the strong $(k, \epsilon)$  extractor $E : \zo^{|\tz|} \times \zo^d \mapping \zo^m$ (with $d$ and $m$ as  above),  and the random string $s$ from Step  $1$, and take $z = E(\tz, s)$.  Intuitively, by the properties of  a strong extractor, the output $z$ is random even conditioned by the seed $s$, which is exactly what we need here. 
 The following lemma, whose proof  is given in Section~\ref{s:strongext}, 
 shows that indeed $z$ has the desired properties.
\begin{lemma}
\label{l:def-reduction}
Let $E$ be the above strong extractor, and suppose $\tz$ satisfies inequalities~\eqref{e:tz}. If $s$ is chosen uniformly at random in $\zo^d$, and $z = E(\tz, s)$, then  with probability $1-O(\epsilon)$, $C(z \mid t, s) \geq |z| - O(\log(1/\epsilon))$ and $|z| = I(x:y) - O(\log(n/\epsilon))$.
\end{lemma}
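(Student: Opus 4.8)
The plan is to peel off the two conclusions separately. The length statement is immediate: $|z|=m=k-2\log(1/\epsilon)-O(1)$, and I will fix $k:=I(x:y)-c''\log(n/\epsilon)$ for a constant $c''$ chosen large enough that the first inequality of~\eqref{e:tz} guarantees $C(\tz\mid t)\ge k$ and also $k\le|\tz|$; then $|z|=I(x:y)-O(\log(n/\epsilon))$ as required (when $m\le 0$ both claims are vacuous, so assume $m>0$). All the real work is in showing $C(z\mid t,s)\ge m-O(\log(1/\epsilon))$ for all but an $O(\epsilon)$-fraction of seeds $s$.

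First I would set the threshold $\ell:=\lceil\log(1/\epsilon)\rceil$ and, for every seed $s'\in\zo^d$, introduce the forbidden set $A_{s'}:=\{u\in\zo^m: C(u\mid t,s')<m-\ell\}$. Since there are fewer than $2^{m-\ell}$ programs of length below $m-\ell$, we have $|A_{s'}|<2^{m-\ell}$, so $\Gamma:=\{(s',u):u\in A_{s'}\}$ has density strictly below $2^{-\ell}\le\epsilon$ in $\zo^d\times\zo^m$. The crucial object is the set of \emph{globally bad sources}
\[
  W:=\Bigl\{w\in\zo^{|\tz|}:\Pr_{s'}\bigl[E(w,s')\in A_{s'}\bigr]\ge\beta\Bigr\},\qquad \beta:=3\epsilon .
\]
I claim $|W|<2^k$. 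If not, choose $B\subseteq W$ with $|B|\ge 2^k$; because $E$ is a \emph{strong} $(k,\epsilon)$ extractor, $E_1(w,s')=(s',E(w,s'))$ is a $(k,\epsilon)$ extractor, so applying it to the source $U_B$ and the test set $\Gamma$ gives $\Pr_{w\leftarrow U_B,\,s'}[(s',E(w,s'))\in\Gamma]\le 2^{-\ell}+\epsilon\le 2\epsilon$. But every $w\in B\subseteq W$ contributes at least $\beta=3\epsilon$ to this same average (since $(s',E(w,s'))\in\Gamma$ iff $E(w,s')\in A_{s'}$), a contradiction.

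Next I would observe that $W$ is recursively enumerable from $t$ together with an $O(\log n)$-bit block encoding the parameters $|\tz|,d,m,\ell,\beta,k$ — which also determine the brute-force-chosen extractor $E$: for a fixed $w$ one semi-decidably tests each of the $2^d$ seeds $s'$ for membership $E(w,s')\in A_{s'}$ (searching for a short program for $E(w,s')$ given $t,s'$) and outputs $w$ once a $\beta$-fraction of the seeds succeed; dovetailing over $w\in\zo^{|\tz|}$ enumerates $W$. Hence every element of $W$ has conditional complexity at most $\log|W|+O(\log n)\le k+O(\log n)$ given $t$. But by the choice of $c''$ we have $C(\tz\mid t)\ge I(x:y)-O(\log(n/\epsilon))$ strictly larger than $k+O(\log n)$ (using $\log(n/\epsilon)\ge\log n$), so $\tz\notin W$. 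By definition of $W$ this means $\Pr_s[E(\tz,s)\in A_s]<\beta=3\epsilon$, i.e.\ with probability at least $1-3\epsilon$ over $s$ we get $z=E(\tz,s)\notin A_s$, which is exactly $C(z\mid t,s)\ge m-\ell=|z|-O(\log(1/\epsilon))$.

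The main obstacle — and the reason a direct averaging over the source fails — is that $\tz$ is one fixed string, whereas the extractor only controls the average behaviour of a high-min-entropy source, so nothing about $E(\tz,s)$ follows immediately. The device that resolves this is to bundle all potentially bad sources into the single set $W$, bound $|W|$ by the extractor, note that $W$ is r.e.\ relative to $t$, and then invoke the hypothesis $C(\tz\mid t)\ge k$ from~\eqref{e:tz}: a string of near-maximal conditional Kolmogorov complexity cannot lie inside a small r.e.\ set. The only genuinely fiddly point is the arithmetic of the logarithmic error terms — ensuring the gap $C(\tz\mid t)-k$ strictly dominates the $O(\log n)$ cost of addressing an element of $W$ — which is arranged by choosing the constant $c''$ in $k=I(x:y)-c''\log(n/\epsilon)$ last, after every other constant has been fixed.
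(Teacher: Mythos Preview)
Your argument is correct and noticeably different from the paper's proof. The paper decomposes the statement into four auxiliary lemmas (Lemmas~\ref{l:extract1}--\ref{l:extract4} in Section~\ref{s:strongext}): it views $E_1$ as a bipartite graph, defines \emph{heavy} right nodes (many $B$-preimages) and \emph{poor} left nodes (those mapping to heavy nodes too often), shows that poor nodes are fewer than $2^{k-c_1}$, and then uses the auxiliary quantity $\alpha=C(\tz\mid \tx)$ together with a separate chain-rule computation (Lemma~\ref{l:extract2}) to handle the independence of the seed $s$. Combining the four pieces yields $C(z\mid s,\tx)\ge m-O(\log(1/\epsilon))$.

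You take a more direct route: define the low-complexity output sets $A_{s'}$, bundle all bad sources into the single r.e.\ set $W$, bound $|W|<2^k$ in one shot by applying the strong extractor to the global test set $\Gamma$, and then exclude $\tz$ from $W$ by incompressibility. Conceptually this is the same ``small set of bad sources cannot contain a high-complexity string'' mechanism, but your packaging avoids the heavy/poor bookkeeping, avoids the auxiliary quantity $\alpha$ and the delicate chain-rule Lemma~\ref{l:extract2} altogether, and delivers the conclusion with $t$ (rather than $\tx$) in the condition directly. The only minor imprecision is that the parameter block you feed into the enumeration of $W$ costs $O(\log(n/\epsilon))$ bits rather than $O(\log n)$ (you need $\ell$ and the extractor parameters, which depend on $\epsilon$); this is harmless since your final gap argument already works with a $c''\log(n/\epsilon)$ margin and you explicitly choose $c''$ last.
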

 This concludes the proof of the theorem.
\end{proof}

\begin{remark}
\label{r:lower-approx}
The proof of Theorem~\ref{t:lower} can be adapted to the situation where Alice and Bob are not given the exact value of the complexity profile of $(x, y)$ but only an approximation of this profile. 
Indeed, assume that Alice and Bob are given upper and lower bounds for each component of the complexity profile of $(x,y)$ with  precision  $\le \sigma$, for some integer $\sigma$.
Then we can use the communication protocol from the proof of Theorem~\ref{t:lower}  where the length of $\pxy$ is chosen in accordance with the \emph{upper bound} on $C(x \mid y)$, and the length of the resulting key $z$ is chosen in accordance with the \emph{lower bound} on the value of $I(x:y)$.
As a result,  with probability $1 - O(\epsilon)$,  Alice and Bob agree on a common secret $z$ that is incompressible \textup(i.e., $C(z \mid t) \geq  |z| - O(\log(1/\epsilon))$  where $t$ is the transcript of the protocol\textup),  and the length of $z$ is greater than $I(x:y) - \sigma - O(\log( n/\epsilon))$. 
\end{remark}

\begin{remark}
 In  the  protocol in the proof of Theorem~~\ref{t:lower}, we use  ``digital fingerprints'' of $x$ computed by a combination of randomness extractors and universal hashes  (a part of the construction of these fingerprints  is hidden in the proof of Theorem~\ref{t:compression}).
We apply ``digital fingerprints'' in two significantly different contexts:  in Step~1,   we use a fingerprint of $x$ to identify this string
given the partial information on $x$ held  by Bob, 
and in Step~2,   we  use another fingerprint to extract from $x$  (almost) pure randomness,
even conditional on the  information on $x$ held by  the eavesdropper. 
In the first case, we need to extract from $x$ all the information about this string that Bob misses
(with possibly a minor overhead, which contains a part of information that Bob already knows).
In the second case, we need to extract from $x$  the information that  the eavesdropper misses after observing Step 1
(possibly with a minor shortage, so that the randomness contained in $x$ remains partially unrevealed). 
Both ideas have been used in other secret key agreement protocols. However, their technical implementation  in the AIT framework requires specific constructions  needed for this setting,  such as, to give just one example, \emph{prefix extractors}, which were first  introduced explicitly  in~\cite{rareva:j:extractor}, and in AIT in~\cite{mus-rom-she:j:muchnik}. 
\end{remark}


\begin{proof}[Proof of Theorem~\ref{t:upper}]
Similarly to the argument in the \emph{light upper bound} of the previous section, 
we are going to combine  inequalities~\eqref{comm-inf-relativ-simplified} and~\eqref{eq-0}.  All we need to do is to establish inequality~\eqref{eq-0} for the case when the number of rounds  $k$ is non-constant.
The challenge is that we cannot iterate $k$  applications of Lemma~\ref{lemma-I(a:b:t)-is-positive}. If $k$ is not a constant, then
the sum of $k$ copies of $O(\log n)$ terms in \eqref{e:e3} becomes too large. 
Thus,  we need to handle the transcript $t$ ``in one piece," without splitting it in $k$ messages.
To this end we have to use instead of Lemma~\ref{lemma-I(a:b:t)-is-positive} some other tool.

We start with a reminder of the notion of a combinatorial rectangle (quite standard in communication complexity).
\begin{definition}
A set $S\subset \{0,1\}^* \times\{0,1\}^*$ is a \emph{combinatorial rectangle}, if it can be represented as a Cartesian product $S=A\times B$ for some $A,B\subset \{0,1\}^*$.
\end{definition}
For example, in Fig.~\ref{fig:comb-rectangle} the black cells in the table form a combinatorial rectangle (the black cells occupy the intersection of columns $a_1,a_2,a_3,a_4$ and rows $b_1,b_2,b_3$).
\begin{figure}[h]
\begin{center}
\includegraphics[scale=1.0]{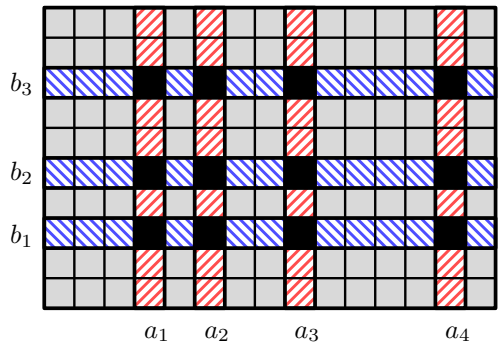}
\caption{Combinatorial rectangles.}\label{fig:comb-rectangle}
\end{center}
\end{figure}
We use the notion of a combinatorial rectangle in the following (less conventional) definition.
\begin{definition}
We say that a computable function $f : \{0,1\}^* \times \{0,1\}^* \to \{0,1\}^*$ is \emph{transcript-like}, if for every $t$, the pre-image $f^{-1}(t)$ is a combinatorial rectangle.
\end{definition}
For example, the cells in the table in Fig.~\ref{fig:comb-rectangle} are colored in four different colors (solid black, solid grey, and two types of hatching), and cells of each color form a combinatorial rectangle. Thus, the function that maps each cell to its color is transcript-like.

By definition, for every  transcript-like function $f$, if $f(a,b')=f(a',b)=t$ for some $a,a',b,b'$, then we have also $f(a,b)=t$.
Notice that for every deterministic communication protocol with two parties (Alice and Bob) the mapping
 \[
    f:  [\mbox{input of Alice $\xa$}] \times [\mbox{input of Bob $\xb$}] \mapsto [\mbox{the transcript of the protocol}]
\]
is transcript-like. A similar statement holds for randomized protocols with private source of randomness, we should only join  the random bits given to Alice and Bob with their inputs : the mapping
 \[
    f:  \langle \xa,\ra \rangle \times \langle \xb,\rb \rangle  \mapsto [\mbox{the transcript of the protocol}]
\]
is also transcript-like.

Now we formulate a generalization  of Lemma~\ref{lemma-I(a:b:t)-is-positive}~(a), which can work with the transcript of the protocol `in one piece'. 
\begin{lemma}\label{lemma-I(a:b:t)-is-positive-for-transcripts}
If  $f : \{0,1\}^* \times \{0,1\}^* \to \{0,1\}^*$ is transcript-like computable function,  then for all $a,b$
 \begin{equation} \label{I(a:b:t)-is-positive-for-transcripts}
  I(a:b \mid f(a,b))  \le^+ I(a:b).
\end{equation}
\end{lemma}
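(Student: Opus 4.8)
The plan is to prove the inequality $I(a:b\mid t)\leqp I(a:b)$, where $t=f(a,b)$, by directly estimating the Kolmogorov complexities that appear in the definition $I(a:b\mid t)=C(a\mid t)+C(b\mid t)-C(a,b\mid t)$. The key structural fact to exploit is that the preimage $f^{-1}(t)=A_t\times B_t$ is a combinatorial rectangle, so $a\in A_t$ and $b\in B_t$, and moreover knowing $t$ together with the \emph{index} (rank) of $a$ inside an enumeration of $A_t$ determines $a$, and likewise for $b$ inside $B_t$. The idea is that $t$, being a ``shared'' object extractable from either side, behaves like common information, and the conditional complexities on both sides, once added, overcount $C(a,b)$ by roughly $2C(t)$, so the slack is exactly $C(t)$ — but $C(t)\leqp\min(C(a),C(b))$, and pushing through the chain rule yields the bound without the factor that would kill iterated use.

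Concretely, first I would rewrite the target inequality. Using $C(a\mid t)\eqp C(a,t)-C(t)\eqp C(a)-C(t)$ (since $t$ is computable from $a$ alone? — no, $t=f(a,b)$ depends on both, so this is where care is needed). This is precisely the obstacle: $t$ is \emph{not} a function of $a$ alone, so $C(t\mid a)$ need not be small, and the one-line chain-rule argument of Lemma~\ref{lemma-I(a:b:t)-is-positive} breaks. The rectangle structure is what rescues it. The plan is: enumerate, using $t$ as an oracle-free parameter, the sets $A_t=\{a':\exists b',\ f(a',b')=t\}$ and $B_t=\{b':\exists a',\ f(a',b')=t\}$; let $i$ be the position of $a$ in the enumeration of $A_t$ and $j$ the position of $b$ in the enumeration of $B_t$. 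Then $a$ is determined by $(t,i)$ and $b$ by $(t,j)$, and crucially, because $f^{-1}(t)$ is the \emph{full} rectangle $A_t\times B_t$, the pair $(a,b)$ is determined by $(t,i,j)$ — we do not need to know $t$ is realized by this particular pair, just that $a\in A_t$, $b\in B_t$. Hence $C(a,b)\leqp C(t)+C(i\mid t)+C(j\mid t)$, while $C(a\mid t)\leqp C(i\mid t)$ and $C(b\mid t)\leqp C(j\mid t)$ and also $C(i\mid t)\leqp C(a\mid t)$, $C(j\mid t)\leqp C(b\mid t)$ (the indices are computable from the strings and $t$).

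Then the computation is: $I(a:b\mid t)=C(a\mid t)+C(b\mid t)-C(a,b\mid t)\eqp C(i\mid t)+C(j\mid t)-C(a,b\mid t)$. Now I need a lower bound on $C(a,b\mid t)$ going the other way, namely $C(a,b\mid t)\geqp C(i,j\mid t)\geqp$ something. Actually the cleaner route: $I(a:b\mid t)=C(a\mid t)+C(b\mid t)-C(a,b\mid t)$ and $I(a:b)=C(a)+C(b)-C(a,b)$, so the difference $I(a:b)-I(a:b\mid t)=[C(a)-C(a\mid t)]+[C(b)-C(b\mid t)]-[C(a,b)-C(a,b\mid t)]$. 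Each bracket is controlled by $C(t)$: $C(a)-C(a\mid t)\eqp I(a:t)\geqp 0$, similarly for $b$, and $C(a,b)-C(a,b\mid t)\eqp I((a,b):t)\leqp C(t)$ (mutual information with $t$ is at most $C(t)+O(\log)$). So $I(a:b)-I(a:b\mid t)\geqp I(a:t)+I(b:t)-C(t)$, and it remains to show $I(a:t)+I(b:t)\geqp C(t)$. This is where the rectangle structure gives the punch: $t$ is computable from $(a,i)$ where $i$ is $t$'s... hmm, let me reconsider — I want $t$ recoverable from each side with few bits. Since $a\in A_t$ and $A_t$-membership is semidecidable given $t$, one gets $t$ from $a$ plus $O(\log)$ help of \emph{which} $t$ (but $t$ could be long). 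The correct handle: there are few candidate transcripts consistent with $a$ in a meaningful sense only if we use that $a$ together with $b$ gives $t$; the right statement to extract is $C(t\mid a)+C(t\mid b)\leqp C(t)$, equivalently $I(a:t)+I(b:t)\geqp C(t)$ up to log terms — this is the heart, proven by the ``fingerprint''/hashing covering-argument: enumerate pairs $(a',b')$ mapping to $t$, and observe $a$ occupies a prescribed slot, so $t$ has a short program given $a$ of length $\approx C(t)-C(t\mid b)$ via a Slepian–Wolf-style or direct counting bound on the rectangle. I expect formalizing this last step — that the two sides of the rectangle ``share'' all of $C(t)$, giving $C(t\mid a)+C(t\mid b)\leqp C(t)+O(\log)$ — to be the main obstacle, and I would handle it by a direct enumeration/counting argument over the rectangle $A_t\times B_t$ rather than by invoking a named theorem.
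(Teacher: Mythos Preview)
Your reduction is correct and matches the paper's endgame: writing $t=f(a,b)$, you reduce $I(a:b\mid t)\leqp I(a:b)$ to the inequality
\[
I(a:t)+I(b:t)\geqp C(t),\qquad\text{equivalently}\qquad C(t)+C(a\mid t)+C(b\mid t)\leqp C(a)+C(b),
\]
using that $C(t\mid a,b)\eqp 0$. This is exactly the paper's inequality~(\ref{eq-3}), and your algebraic derivation of it is clean.

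The gap is that you have not proved this inequality. The phrases ``Slepian--Wolf-style,'' ``fingerprint/hashing covering-argument,'' and ``direct counting over the rectangle'' do not name a workable mechanism here; in particular, simply enumerating $A_t$ and $B_t$ and using indices $i,j$ gives you upper bounds $C(a\mid t)\leqp\log|A_t|$ etc., which go the wrong way. What is needed is a \emph{lower} bound on the number of candidate $a'$'s (resp.\ $b'$'s) in terms of $C(a\mid t)$ (resp.\ $C(b\mid t)$). The paper obtains this via a ``clones'' trick: define $\mathrm{Clones}_A=\{a':a'\in A_t,\ C(a')\le C(a)\}$ and similarly $\mathrm{Clones}_B$. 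Since $\mathrm{Clones}_A$ is enumerable from $t$ and the single number $C(a)$, the index of $a$ in this enumeration gives $C(a\mid t)\leqp\log|\mathrm{Clones}_A|$, i.e.\ $|\mathrm{Clones}_A|\ge 2^{C(a\mid t)-O(\log n)}$. Now pick $(a',b')\in\mathrm{Clones}_A\times\mathrm{Clones}_B$ maximizing $C(a',b'\mid t)$; by counting, $C(a',b'\mid t)\geqp C(a\mid t)+C(b\mid t)$. Here the rectangle property is used in the decisive way you were groping for: since $a'\in A_t$ and $b'\in B_t$ and $f^{-1}(t)=A_t\times B_t$, we get $f(a',b')=t$, hence $C(a',b',t)\leqp C(a')+C(b')\le C(a)+C(b)$. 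Combining with $C(a',b',t)\eqp C(t)+C(a',b'\mid t)\geqp C(t)+C(a\mid t)+C(b\mid t)$ gives~(\ref{eq-3}).

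The idea you were missing is the complexity cap $C(a')\le C(a)$ in the definition of the candidate set: without it the projections $A_t,B_t$ can be arbitrarily large (or infinite), and with it you get both the lower bound on the set's size \emph{and} the upper bound $C(a')+C(b')\le C(a)+C(b)$ needed at the end.
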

\begin{remark}
If $f$ depends only on $a$ or only on $b$, then $f$ is transcript-like. Thus, 
Lemma~\ref{lemma-I(a:b:t)-is-positive}~(a) is a special case of Lemma~\ref{lemma-I(a:b:t)-is-positive-for-transcripts}.
\end{remark}
\begin{proof}
Let us fix a pair  $(a,b)$ and denote $t:=f(a,b)$. We say that a string $a'$ is an A-\emph{clone} of $a$ (conditional on $t$), if  
 \begin{itemize}
 \item[(i${_A}$)] there exists another string $b'$ such that $f(a',b') = t$, and
 \item[(ii${_A}$)]  $C(a')\le C(a)$.
 \end{itemize}
Similarly, we say that $b'$ is a B-\emph{clone} of $b$ (conditional on $t$), if
 \begin{itemize}
 \item[(i${_B}$)] there exists another tuple $a'$ such that $f(a',b') = t$, and
 \item[(ii${_B}$)] $C(b')\le C(b)$.
 \end{itemize}
 \begin{remark}
The A- and B-clones are in some sense ``similar''  to $a$ and $b$ respectively. In what follows we show that the definition of clones is not too restrictive:  there exist  many clones of $a$ and $b$.
\end{remark}
 
Denote the sets of A-clones and B-clones by  $\mathrm{Clones}_A$ and $\mathrm{Clones}_B$ respectively. By definition we have $a\in \mathrm{Clones}_A$ and $b\in \mathrm{Clones}_B$. Let us show that in fact there exist many other A- and B-clones.

\begin{claim} 
If complexities of $a$ and $b$ are bounded by $n$, then
$|\mathrm{Clones}_A| \geq 2^{C(a \mid f(a,b))-O(\log n)} $ and $|\mathrm{Clones}_B| \geq 2^{C(b \mid f(a,b))-O(\log n)}  $. 
\label{claim-number-of-clones}
\end{claim}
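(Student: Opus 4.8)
The plan is to prove the bound for $|\mathrm{Clones}_A|$; the bound for $|\mathrm{Clones}_B|$ is symmetric. First I would argue that the set $R_t = f^{-1}(t) \cap (\{0,1\}^{\le n}\times\{0,1\}^{\le n})$ is a combinatorial rectangle, say $R_t = A_t \times B_t$ (intersecting a rectangle with a product of cubes keeps it a rectangle), and that it is enumerable given $t$ and $n$: since $f$ is computable, we can run $f$ on all pairs of short strings and collect those that map to $t$. The key structural point is that if $a' \in A_t$ and $C(a') \le C(a)$, then $a'$ is an A-clone of $a$ conditional on $t$: condition $(\mathrm{ii}_A)$ holds by assumption, and condition $(\mathrm{i}_A)$ holds because $a$ itself already witnesses that $A_t$ and $B_t$ are nonempty (take any $b'' \in B_t$, which exists since $(a,b)\in R_t$). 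So $\mathrm{Clones}_A \supseteq \{a' \in A_t : C(a') \le C(a)\}$, and it suffices to lower-bound the size of this set.

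The main step is then a counting/incompressibility argument. I would show that $A_t$ itself has at least roughly $2^{C(a\mid t) - O(\log n)}$ elements. Indeed, $a$ is recoverable from $t$, $n$, and its index in an effective enumeration of $A_t$, so $C(a \mid t) \le^+ \log |A_t| + O(\log n)$ (the extra logarithmic term pays for $n$ and for self-delimiting the index), which gives $|A_t| \ge 2^{C(a\mid t) - O(\log n)}$. The remaining subtlety is the complexity cutoff $C(a')\le C(a)$: I would observe that at most $2^{C(a)+1}$ strings have complexity at most $C(a)$, but more to the point, almost all elements of $A_t$ satisfy the cutoff. Precisely, the elements $a' \in A_t$ with $C(a') > C(a)$ number at most... well, we cannot bound them by a constant, so instead I would refine: among the first $2^{C(a\mid t)-O(\log n)}$ enumerated elements of $A_t$, each has complexity at most $C(t) + C(a\mid t) + O(\log n) \le^+ C(a,t) + O(\log n) = C(a) + O(\log n)$, using that $t$ is computable from $a$ (and $b$), so $C(a,t) =^+ C(a,b)$... no — $t = f(a,b)$ is computable from $(a,b)$ but not from $a$ alone in general.

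Let me correct the argument: I would instead count elements of $A_t$ of \emph{low} complexity directly. Consider the set $G = \{a' \in A_t : C(a'\mid t) \le C(a\mid t)\}$. Every $a'\in G$ satisfies $C(a') \le^+ C(a'\mid t) + C(t) \le^+ C(a\mid t) + C(t)$, which need not be $\le C(a)$; so this still does not immediately give the cutoff. The clean fix, which I expect to be the real content of the proof, is to \emph{redefine the relevant set to be enumerated relative to $t$ and then truncate}: enumerate $A_t$ given $t,n$, and let $G$ be its first $N := \lceil 2^{C(a\mid t)-c\log n}\rceil$ elements for a suitable constant $c$; then $a \in G$ with high "probability" in the sense that $a$ is among the first $N$ — here I would invoke that $C(a\mid t)$ is (up to $O(\log n)$) the log of the number of elements of $A_t$ that precede $a$ is false in general, so instead I use the standard trick: since $C(a\mid t) \le^+ \log|A_t|$, we have $|A_t| \ge N$, and each of the first $N$ elements $a'$ has $C(a'\mid t) \le^+ \log N \le^+ C(a\mid t)$, hence $C(a'\mid t) \le C(a\mid t) + O(\log n)$; absorbing the $O(\log n)$ slack into the claim's $O(\log n)$ (and noting condition $(\mathrm{ii}_A)$ as stated uses unconditional complexity, so I would either weaken to $C(a'\mid t)\le C(a\mid t)+O(\log n)$ throughout — which is how the lemma's $\le^+$ absorbs it — or add $t$ into the conditioning of the whole argument), these $N$ elements are all A-clones. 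Thus $|\mathrm{Clones}_A| \ge N = 2^{C(a\mid f(a,b)) - O(\log n)}$. The hard part is precisely reconciling the unconditional cutoff $C(a')\le C(a)$ in $(\mathrm{ii}_A)$ with the conditional quantity $C(a\mid t)$ that the counting argument naturally produces; I expect the intended reading is that all complexity terms in the definitions and in the Lemma are relativized by $t$, or equivalently that the $O(\log n)$ slack in the claim is what mediates between them, and I would make that explicit at the start of the proof.
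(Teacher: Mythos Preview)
Your proposal has a genuine gap. You correctly observe that the A-projection $A_t$ of the rectangle $f^{-1}(t)$ is enumerable given $t$ and $n$, and that describing $a$ by its index yields $|A_t| \ge 2^{C(a\mid t)-O(\log n)}$. But you then get stuck trying to show that enough elements of $A_t$ satisfy the \emph{unconditional} complexity cutoff $C(a')\le C(a)$ from condition $(\mathrm{ii}_A)$. Each of your attempts (bounding high-complexity elements, taking the first $N$ enumerated elements, passing to the set $G$) only controls the \emph{conditional} complexity $C(a'\mid t)$, and you correctly notice that $C(a'\mid t)\le C(a\mid t)$ does not imply $C(a')\le C(a)$. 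You end by proposing to change the definition of A-clone, which is not a proof of the claim as stated.

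The missing idea is much simpler, and it is what the paper does: enumerate $\mathrm{Clones}_A$ \emph{directly}, not $A_t$. Both defining conditions of an A-clone are semi-decidable given $t$ and the integer $k:=C(a)$: condition $(\mathrm{i}_A)$ by dovetailing $f$ over all pairs, and condition $(\mathrm{ii}_A)$ by dovetailing all programs of length $\le k$. Since $C(a)\le n$, the number $k$ costs only $O(\log n)$ bits. Hence $a$ itself, being an A-clone, is described by $t$, $k$, and its index in this enumeration of $\mathrm{Clones}_A$, giving
\[
C(a\mid t)\ \le\ \log|\mathrm{Clones}_A| + O(\log n),
\]
which rearranges to the claim. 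There is no need to reconcile conditional and unconditional complexities at all; by baking the cutoff $C(a')\le k$ into the set being enumerated, the issue you struggled with never arises.
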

\emph{Proof of the claim:} Notice that for every computable function $f$ the property
\[
\emph{a string }a'\emph{ is a clone of some }a \emph{ of complexity }k\emph{ conditional on }t
\]
is a semi-decidable predicate of the triple $\langle a', t, k\rangle$.
Therefore,  given $t$ and the value of  $C(a)$, we can run the process of enumeration of all A-clones of $a$. Thus, to identify some specific A-clone $a'$ of $a$  given $t$,  we need to know the  complexity of $a$ and the index of appearance of  this specific clone in the enumeration of all A-clones. Hence, $C(a'\mid t)$ is not greater than
  $ \log |\mathrm{Clones}_A|  + O(\log n). $
 In particular, this inequality is true for $a$ itself.
Similarly, $C(b\mid t)$ is not greater than
  $ \log |\mathrm{Clones}_B|  + O(\log n). $
Taking the exponent of both sides of these  inequalities,  we obtain the Claim.

\smallskip

Let us  take a pair of ``typical'' clones
 $
 (a',b')  \in \mathrm{Clones}_A \times  \mathrm{Clones}_B
$
 that maximizes $C(a',b' \mid t)$. 
Due to the claim above,  this maximum  is
 $$
 \begin{array}{rcl}
 C(a',b' \mid t)  &\eqp &  \log |\mathrm{Clones}_A\times   \mathrm{Clones}_B|\\
    & \ge^+&  C(a\mid t) + C(b \mid t) .
\end{array}
$$
Thus, we have
\begin{equation}\label{eq-1}
 C(a', b' , t) \eqp   C(t)  + C(a',b' \mid t) \ge^+  C(t) +  C(a\mid t) + C(b \mid t).
\end{equation}

By definition of clones, $a'$ and $b'$ belong to the projection of the pre-image $f^{-1}(t)$.  
Since $f$ is a transcript-like function, it follows that $f(a',b')=t$. That is, given $a'$ and $b'$ we can compute $t$, so $C(t \mid a',b' ) \eqp  0$.  
Hence, the left-hand side of \eqref{eq-1} can be upper bounded as
\begin{equation}\label{eq-2}
 C(a',b', t) \le ^+  C(a') + C(b') \le^+  C(a) + C(b)
\end{equation}
(in the last inequality we used again the definition of clones).  Combining \eqref{eq-1} and \eqref{eq-2} we obtain
\begin{equation}\label{eq-3}
 C(t)+ C(a \mid t) + C(b \mid t)  \le^+  C(a) + C(b).
\end{equation}
Given $C(t \mid a,b)\eqp  0$, inequality~\eqref{eq-3} rewrites to
$
 I(a:b \mid t )  \le^+ I(a:b),
$
 and lemma is proven.
\end{proof}

Now we are ready to prove the theorem.  Denote by $\pi $ the function
  \[
  \pi :    \langle \xa',\ra' \rangle \times \langle \xb',\rb' \rangle  \mapsto [\mbox{the transcript of the protocol}]
  \]
mapping the input data of Alice and Bob to the corresponding transcript of the communication protocol. Clearly, $\pi$ is a transcript-like function (as the outcome of any  communication protocol). 
By applying Lemma~\ref{lemma-I(a:b:t)-is-positive-for-transcripts} to the function $\pi$ we  get inequality~\eqref{eq-0}. This concludes to proof of the theorem. 
\end{proof}
\begin{remark}
The proof of inequality~\eqref{eq-0} applies even to ``invalid'' input data $\xa$ and $\xb$, when Alice and Bob are given false information on the complexity profile of the main input components $x$ and $y$. \textup(See Remark~\ref{remark-on-valid-inputs}\textup). 
\end{remark}
\begin{remark}
In some sense the proof of the weaker  `light upper bound' in section Section~\ref{s:warmup} is more robust than the proof of Theorem~\ref{t:upper},
since it applies to protocols with  small advice obtained from a trustable all-powerful helper, see Section~\ref{s:protocols-with-advice}.
\end{remark}

\section{Secret key agreement for $3$ or more parties}
 \label{section:3-sources}

We analyze  secret key agreement  for $3$ or more parties. All the results in this section   are valid for any constant number $\ell \geq 3$ of parties. For notational convenience, we handle the case of $\ell = 3$ parties, but the proofs with straightforward adaptations are valid for any constant $\ell$.

Thus, this time there are three parties, which we call Alice, Bob, and Charlie. Alice has a string $\xa$, Bob has a string $\xb$, and Charlie has a string $\xc$. They also have private random bits $\ra$, respectively $\rb$ and $\rc$. They run a $k$-round protocol. In each of the $k$ rounds, each party broadcasts a message to the other two parties, where the message is a string computed from the party's input string and private random bits, and the messages from the previous rounds. After the completion of the $k$ rounds, each party computes a string. The requirement is that with probability at least $1-\epsilon$, they compute the same string, and that this string is random conditioned by the transcript of the protocol.

Formally, a $k$-round  $3$-party  protocol for secret key agreement uses three computable functions $A, B, C$, and runs as follows. The first party has as input an $n$-bit string $\xa$ and uses private randomness $\ra$,  the second  party has as input an $n$-bit  string $\xb$ and uses private randomness $\rb$,  and the third  party has as input an $n$-bit string $\xc$ and uses private randomness $\rc$.   The protocol consists of the following calculations:
\[
\begin{array}{lll}
t_1 = A(\xa, \ra), & t_2 = B(\xb, \rb), &  t_3 = C(\xc, \rc), \\
t_4  = A(\xa, \ra, t[1:3]), & t_5  = B(\xb, \rb, t[1:3]),  & t_6  = C(\xc, \rc, t[1:3]), \\

\vdots \\

t_{3k-2} = A(\xa, \ra, t[1:3(k-1)]),  &  t_{3k-1} = B(\xb, \rb, t[1:3(k-1)]),  & t_{3k}=C(\xc, \rc,t[1:3(k-1)])  
\end{array}
\]
Each row corresponds to one round and shows the messages that are broadcast in that round, and we use the notation $t[i:j]$ to denote the tuple of messages $(t_i, \ldots, t_j)$. We also denote $t = t[1:3k]$, the entire transcript of the protocol.
The protocol  succeeds with probability error $\epsilon$ and randomness deficiency $\delta(n)$  on the $3$-tuple  input  $(\xa, \xb, \xc)$ if  with probability $(1-\epsilon)$ over $\ra, \rb, \rc$, 
\begin{equation}
\label{e:e4}
A(\xa,\ra, t) = B(\xb, \rb, t)  = C(\xc, \rc, t)  \stackrel{def.}{=} z,
\end{equation}
and
\begin{equation}
\label{e:e5}
 C (z \mid t) \geq |z| - \delta(n).
\end{equation}

\subsection{Omniscience: definitions and explicit expressions}
We  transfer to AIT the notion of \emph{omniscience} introduced in IT (see the discussion in Section~\ref{s:contrib}):
\begin{definition}
\label{d:slepwolf}
\begin{itemize}
\item[(1)]  For each triple of strings $(\xxa, \xxb, \xxc)$, we denote by $S(\xxa,\xxb,\xxc)$ the set of all triples of integers $(\na, \nb, \nc)$ that satisfy the following inequalities:
\begin{equation}\label{eq-linear-constraints-co}
\begin{array}{ccc}
\na \geq C(\xxa \mid \xxb, \xxc), &  \nb \geq C(\xxb \mid \xxa, \xxc), & \nc \geq C(\xxc \mid \xxa, \xxb), \\
\na + \nb \geq C(\xxa, \xxb \mid \xxc), & \na + \nc \geq C(\xxa, \xxc \mid \xxb), &
 \nb + \nc \geq C(\xxb, \xxc \mid \xxa).
 \end{array}
\end{equation}
The constraints defining $S(\xxa, \xxb, \xxc)$ will be referred to as the \emph{Slepian-Wolf} constraints.

\item[(2)] We define $\co(\xxa, \xxb, \xxc)$ to be the minimal value of $\na + \nb + \nc$ subject to $\na, \nb, \nc$ satisfying the Slepian-Wolf constraints. ($\co$ stands for communication for omniscience.)

\end{itemize}
\end{definition}
This definition has a straightforward (though rather cumbersome) generalization for the tuples of size $\ell>3$:
\begin{definition}
\label{d:slepwolf-multivar}
\begin{itemize}
\item[(1)]  For each tuple of strings $(x_1, x_2, \ldots,x_\ell)$, we denote by $S(x_1, x_2, \ldots,x_\ell)$ the set of all tuples of integers $(n_1,n_2,\ldots,n_\ell)$ that satisfy the following inequalities: for every splitting of the set of all indices into two  disjoint nonempty sets,
$ \{1,\ldots,\ell\} =  \mathcal{I} \cup \mathcal{J}$, we have
\begin{equation}\label{eq-linear-constraints-co-multivar}
\begin{array}{ccc}
 \sum\limits_{i\in \mathcal{I}} n_i  &\ge& C(  x_{\mathcal{I}} \mid x_{\mathcal{J}}).
 \end{array}
\end{equation}
Similar to the case of $\ell=3$, the constraints defining $S(x_1, x_2, \ldots,x_\ell)$ are referred to as the \emph{Slepian-Wolf} constraints.

\item[(2)] We define $\co(x_1, x_2, \ldots,x_\ell)$ to be the minimal value of $n_1 + \dots + n_\ell$ subject to $n_1, \ldots, n_\ell$ satisfying the Slepian-Wolf constraints.
\end{itemize}
\end{definition}
The value of $\co(\xxa, \xxb, \xxc)$ can be viewed as the solution of a problem of linear programming defined by constraints~\eqref{eq-linear-constraints-co}. This solution can be computed explicitly, as shown in the following proposition.
\begin{proposition}[{see Example~3 in \cite{csi-nar:j:seckey}}]
\label{p:co-value}
The value of $\co(\xxa, \xxb, \xxc)$  is equal to the maximum of the following four quantities
 \[
 \begin{array}{l}
 C(\xxa \mid \xxb,\xxc) + C(\xxb,\xxc \mid \xxa),\\ 
 C(\xxb \mid \xxa,\xxc) + C (\xxa,\xxc \mid \xxb),\\ 
 C(\xxc \mid \xxa,\xxb) + C (\xxa,\xxb \mid \xxc),\\ 
 \frac12\left[ C(\xxa,\xxb\mid \xxc) +  C(\xxa,\xxc\mid \xxb) + C(\xxb,\xxc\mid \xxa)   \right].
 \end{array}
 \]
Accordingly,  the difference $C(\xxa,\xxb,\xxc) - \co(\xxa, \xxb, \xxc)$  is equal to the minimum of the following four quantities:
\begin{equation}\label{eq-c-minus-co-formula}
\begin{array}{l}
I(\xxa:\xxb,\xxc),\\
 I(\xxb:\xxa,\xxc), \\ 
 I(\xxc:\xxa,\xxb), \\
 \frac12\left[ C(\xxa) +C(\xxb) + C(\xxc) -   C(\xxa,\xxb, \xxc)   \right].
 \end{array}
\end{equation}
\end{proposition}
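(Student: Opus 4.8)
The plan is to solve the linear program defining $\co(\xxa,\xxb,\xxc)$ explicitly. The feasible region is the set of triples $(\na,\nb,\nc)$ satisfying the six Slepian–Wolf constraints~\eqref{eq-linear-constraints-co}, and we want to minimize $\na+\nb+\nc$. First I would observe that the lower bound $\co(\xxa,\xxb,\xxc) \ge \max\{\cdots\}$ of the four listed quantities is immediate: the first three come from adding one singleton constraint to one pair constraint that together cover all of $\{1,2,3\}$ (e.g.\ $\na \ge C(\xxa\mid\xxb,\xxc)$ and $\nb+\nc \ge C(\xxb,\xxc\mid\xxa)$ give $\na+\nb+\nc \ge C(\xxa\mid\xxb,\xxc)+C(\xxb,\xxc\mid\xxa)$), and the fourth comes from summing the three pair constraints and dividing by $2$. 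So the maximum $T$ of these four quantities is a valid lower bound.

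The substantive direction is that this bound is achieved, i.e.\ there is a feasible $(\na,\nb,\nc)$ with $\na+\nb+\nc = T$. I would argue by LP duality or, more concretely, by an explicit construction / case analysis. One clean approach: set $N := T$ and try to exhibit $(\na,\nb,\nc)$ with $\na+\nb+\nc=N$ meeting all constraints. If one of the first three quantities is the maximum, say $N = C(\xxa\mid\xxb,\xxc)+C(\xxb,\xxc\mid\xxa)$, then take $\na = C(\xxa\mid\xxb,\xxc)$ and split the remaining budget $N-\na = C(\xxb,\xxc\mid\xxa)$ between $\nb$ and $\nc$; one checks using submodularity-type inequalities among Kolmogorov complexities (e.g.\ $C(\xxb\mid\xxa,\xxc) + C(\xxc\mid\xxa,\xxb) \le C(\xxb,\xxc\mid\xxa)$, which is essentially $I(\xxb:\xxc\mid\xxa)\ge 0$, up to $\leqp$) that such a split satisfying the two singleton constraints for $\nb,\nc$ and the three pair constraints exists, precisely because $N$ dominates the half-sum quantity. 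If instead the fourth (half-sum) quantity is the maximum, then the three pair constraints are ``tight on average'' and one can solve the $3\times 3$ linear system making all three pair constraints equalities, checking that the resulting values also satisfy the singleton constraints (again because $N$ dominates each of the first three quantities). The main obstacle is bookkeeping this case analysis cleanly and verifying feasibility in each case; since this is a fixed finite LP, the verification reduces to a handful of Kolmogorov-complexity inequalities, all of which are consequences of the chain rule and non-negativity of (conditional) mutual information, valid up to $O(\log)$ terms — though here the statement is presented as an exact LP identity, so I would phrase it as an identity about the real-valued LP and note the $\leqp$ slack enters only when the $C(\cdot)$ values themselves are used.

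Finally, for the second assertion I would simply subtract from $C(\xxa,\xxb,\xxc)$. Using $C(\xxa,\xxb,\xxc) = C(\xxa\mid\xxb,\xxc) + C(\xxb,\xxc) \eqp C(\xxa\mid\xxb,\xxc) + C(\xxb,\xxc\mid\xxa) + \cdots$ and the analogous chain-rule expansions, each of the four quantities in the first list, subtracted from $C(\xxa,\xxb,\xxc)$, turns into the corresponding quantity in~\eqref{eq-c-minus-co-formula}: e.g.\ $C(\xxa,\xxb,\xxc) - \big(C(\xxa\mid\xxb,\xxc) + C(\xxb,\xxc\mid\xxa)\big) = C(\xxb,\xxc) - C(\xxb,\xxc\mid\xxa) = I(\xxa:\xxb,\xxc)$, and for the half-sum term, $C(\xxa,\xxb,\xxc) - \tfrac12[C(\xxa,\xxb\mid\xxc)+C(\xxa,\xxc\mid\xxb)+C(\xxb,\xxc\mid\xxa)] = \tfrac12[C(\xxa)+C(\xxb)+C(\xxc) - C(\xxa,\xxb,\xxc)]$ after expanding each $C(\cdot\mid\cdot)$ via the chain rule. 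Since subtracting a fixed number flips $\max$ into $\min$, the four maxima become the four minima, completing the proof. (Alternatively, one can cite~\cite{csi-nar:j:seckey} directly, as the reference indicates, and only supply the chain-rule bookkeeping for the AIT restatement.)
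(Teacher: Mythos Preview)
The paper does not actually prove this proposition: it is stated with the attribution ``see Example~3 in \cite{csi-nar:j:seckey}'' and no argument is given (the same is done for the $\ell>3$ generalization, Proposition~\ref{p:co-value-multivar}, which is just cited from~\cite{chan2015multivariate}). So there is nothing to compare against; your proposal supplies more than the paper does.

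Your outline is the standard and correct one. The lower bound $\co \ge \max\{\cdots\}$ via the four indicated combinations of constraints is exactly right, and the achievability argument by case analysis (make the three pair constraints tight when the half-sum dominates; otherwise set $\na=C(\xxa\mid\xxb,\xxc)$ and split the residual) is the usual way to finish. One small point worth making explicit in the first case: when, say, $N=C(\xxa\mid\xxb,\xxc)+C(\xxb,\xxc\mid\xxa)$ is the maximum, the feasibility of the split also requires that the \emph{pair} lower bounds on $\nb$ and $\nc$ (namely $\nb\ge C(\xxa,\xxb\mid\xxc)-\na$ and $\nc\ge C(\xxa,\xxc\mid\xxb)-\na$) be jointly satisfiable with $\nb+\nc=C(\xxb,\xxc\mid\xxa)$; that this works is precisely the assumption that $N$ dominates the half-sum. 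You allude to this but it is the one place where a reader might stumble.

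You are also right to flag the $\leqp$ issue. The quantity $\co$ is defined as the exact optimum of an LP whose right-hand sides are the Kolmogorov complexities, and the achievability argument uses inequalities such as $C(\xxb\mid\xxa,\xxc)+C(\xxc\mid\xxa,\xxb)\le C(\xxb,\xxc\mid\xxa)$ that, for Kolmogorov complexity, hold only up to $O(\log n)$. So the identity in the proposition should really be read with $\eqp$ precision; in the paper this is harmless because every downstream use (Theorems~\ref{t:upperbdthree} and~\ref{t:multilower}) already carries an $O(\log n)$ slack. Your parenthetical about citing \cite{csi-nar:j:seckey} and only doing the chain-rule bookkeeping is, in fact, exactly what the paper does.
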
 
This result can be extended to the case $\ell>3$ as follows.
\begin{proposition}[\cite{chan2015multivariate}]
\label{p:co-value-multivar}
For every tuples of strings $(x_1,\ldots, x_\ell)$ the difference $C(x_1,\ldots, x_\ell) - \co(x_1,\ldots, x_\ell)$  is equal to the minimum of the quantities
\begin{equation}\label{eq-c-minus-co-formula-multivar}
\begin{array}{l}
 \frac1{s-1}\left[ C(x_{{\mathcal{J}}_1}) +  \ldots + C(x_{{\mathcal{J}}_s})   -   C(x_1,\ldots,x_\ell)   \right].
 \end{array}
\end{equation}
over all splittings $\{1,\ldots,\ell\}={\mathcal{J}}_1 \cup {\mathcal{J}}_2 \cup \ldots \cup {\mathcal{J}}_s$, where  ${\mathcal{J}}_1, \ldots, {\mathcal{J}}_s$ are nonempty and disjoint.
\end{proposition}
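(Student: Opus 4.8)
The plan is to treat Proposition~\ref{p:co-value-multivar} as a linear-programming identity, in the spirit of the $\ell=3$ case (Proposition~\ref{p:co-value}, from \cite{csi-nar:j:seckey}) and of \cite{chan2015multivariate}. By Definition~\ref{d:slepwolf-multivar}, $\co(x_1,\ldots,x_\ell)$ is the optimum of $\min\sum_{i=1}^\ell n_i$ over $n_1,\ldots,n_\ell\ge 0$ subject to $\sum_{i\in\mathcal I}n_i\ge C(x_{\mathcal I}\mid x_{[\ell]\setminus\mathcal I})$ for every nonempty proper subset $\mathcal I$ of $[\ell]$ (the singleton constraints already force $n_i\ge 0$, and passing between integer and real $n_i$ changes the value by at most $O(1)$ since $\ell$ is constant). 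Using the chain rule, the multivariate analogue of \eqref{e:eqone}, I would replace each right-hand side by $N-C(x_{[\ell]\setminus\mathcal I})$, where $N:=C(x_1,\ldots,x_\ell)$, at the cost of an $O(\log n)$ term. The dual linear program is $\max\sum_{\mathcal I}y_{\mathcal I}\,C(x_{\mathcal I}\mid x_{[\ell]\setminus\mathcal I})$ over $y\ge 0$ indexed by nonempty proper subsets, subject to $\sum_{\mathcal I\ni i}y_{\mathcal I}\le 1$ for each $i\in[\ell]$; strong duality applies since the primal is feasible and bounded below.

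For the inequality $N-\co\le\frac{1}{|\mathcal P|-1}\bigl(\sum_{B\in\mathcal P}C(x_B)-N\bigr)$ I would exhibit, for each partition $\mathcal P=\{B_1,\ldots,B_s\}$ of $[\ell]$ with $s\ge 2$, the dual-feasible solution that assigns weight $\frac{1}{s-1}$ to each set $[\ell]\setminus B_j$ and $0$ elsewhere. Since every index lies in $B_{j_0}$ for a unique $j_0$, it belongs to $[\ell]\setminus B_j$ for exactly the remaining $s-1$ indices, so each constraint $\sum_{\mathcal I\ni i}y_{\mathcal I}\le 1$ holds with equality. The objective of this solution is $\frac{1}{s-1}\sum_j C(x_{[\ell]\setminus B_j}\mid x_{B_j})=\frac{1}{s-1}\sum_j\bigl(N-C(x_{B_j})\bigr)+O(\log n)=N-\frac{1}{s-1}\bigl(\sum_j C(x_{B_j})-N\bigr)+O(\log n)$, so weak duality gives $\co$ at least this much; rearranging and minimizing over $\mathcal P$ yields one half of \eqref{eq-c-minus-co-formula-multivar} (up to $O(\log n)$).

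For the reverse inequality I would use strong duality and show that the dual optimum is attained by such a partition-type solution. Re-indexing the dual by $\mathcal J=[\ell]\setminus\mathcal I$, its objective becomes $\sum_{\mathcal J}z_{\mathcal J}\bigl(N-C(x_{\mathcal J})\bigr)$ subject to $\sum_{\mathcal J\not\ni i}z_{\mathcal J}\le 1$; because the coefficients $N-C(x_{\mathcal J})$ are nonnegative for proper $\mathcal J$, at an optimum every such constraint is tight, i.e.\ every $i$ is covered exactly $\sum_{\mathcal J}z_{\mathcal J}-1$ times. Then an uncrossing step: whenever two sets $A,B$ in the support of $z$ form no chain, moving weight $\min(z_A,z_B)$ from $A,B$ to $A\cup B$ and $A\cap B$ leaves every coverage sum $\sum_{\mathcal J\ni i}z_{\mathcal J}$ unchanged (because $\mathbf 1_A+\mathbf 1_B=\mathbf 1_{A\cup B}+\mathbf 1_{A\cap B}$) and, by submodularity of $V\mapsto C(x_V)$ — which itself holds up to $O(\log n)$ via the chain rule and non-negativity of conditional mutual information — does not decrease the objective. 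The cases $A\cap B=\emptyset$ and $A\cup B=[\ell]$ are treated directly, using $C(x_\emptyset)=O(1)$ and, in the covering case, discarding the $[\ell]$-piece. Iterating drives the support down to (the blocks of) a single partition $\mathcal P$; tightness of the coverage constraints then pins the common weight to $\frac{1}{|\mathcal P|-1}$, and the resulting dual value is exactly $N-\frac{1}{|\mathcal P|-1}\bigl(\sum_{B\in\mathcal P}C(x_B)-N\bigr)$, which gives the other half.

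The main obstacle is precisely this converse — making the uncrossing reduction rigorous, including the bookkeeping for the ``illegal'' sets $\emptyset$ and $[\ell]$ (which are not dual variables of the original program but arise when uncrossing disjoint, respectively covering, pairs), and confirming that a laminar support with all coverage constraints tight is forced onto one partition with uniform weights. This computation is carried out in \cite{chan2015multivariate} (and, for $\ell=3$, in Example~3 of \cite{csi-nar:j:seckey}); since it uses nothing about the complexity profile beyond submodularity, monotonicity, and $C(x_\emptyset)=O(1)$, all of which hold up to $O(\log n)$, the identity \eqref{eq-c-minus-co-formula-multivar} is valid within the usual $O(\log n)$ precision. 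Finally, specializing to $\ell=3$ and enumerating the five partitions of $\{1,2,3\}$ with at least two blocks recovers Proposition~\ref{p:co-value}.
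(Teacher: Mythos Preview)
The paper does not supply its own proof of Proposition~\ref{p:co-value-multivar}; it is stated with attribution to \cite{chan2015multivariate} and used as a black box (the same is true of Proposition~\ref{p:co-value}, cited from \cite{csi-nar:j:seckey}). So there is nothing in the paper to compare your argument against.

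That said, your LP-duality sketch is exactly the standard route to this identity and is the approach taken in the cited reference. The easy direction (exhibiting the partition-type dual solution $y_{[\ell]\setminus B_j}=\tfrac{1}{s-1}$) is fully correct as you wrote it. For the converse you are right that the work lies in the uncrossing step, and you are honest that you are leaning on \cite{chan2015multivariate} for it. Two points worth tightening if you want a self-contained argument: first, the claim that every dual constraint is tight at the optimum needs the observation that one can always increase the singleton variable $y_{\{i\}}$ without touching any other constraint, and that $C(x_i\mid x_{[\ell]\setminus\{i\}})\ge^{+}0$; second, uncrossing alone yields a \emph{laminar} support, not automatically a single partition, so one still has to argue (using the tight coverage condition $\sum_{\mathcal J\not\ni i}z_{\mathcal J}=1$ for all $i$) that the maximal sets in the laminar family already partition $[\ell]$ and carry equal weight $\tfrac{1}{s-1}$, while any strictly nested sets can be absorbed without loss via submodularity. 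None of this is wrong in your outline, just not yet written out; since the paper itself is content to cite the result, your level of detail already exceeds what the paper provides.
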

\begin{remark}
It is easy to see that the minimum of four values \eqref{eq-c-minus-co-formula} is a special case of the minimum 
of  \eqref{eq-c-minus-co-formula-multivar} \textup(for $\ell=3$\textup).
\end{remark}

In what follows, we show that there exists a protocol that on every input tuple $(\xa, \xb, \xc)$ produces with high probability a secret key of length $C(\xa, \xb, \xc) - \co(\xa, \xb, \xc) - O(\log n)$ (provided the parties have the complexity profile of the input tuple), and that no protocol can produce a secret key of length larger than $C(\xa, \xb, \xc) - \co(\xa, \xb, \xc) + O(\log n)$.
We prove the upper bound in Section~\ref{s:three-parties-non-constant} and the lower bound in Section~\ref{s:three-parties-lower-bound}.

 \subsection{Negative result for multi-party protocols}
 \label{s:three-parties-non-constant}

In this section we prove the upper bound on the length of the longest secret key that Alice, Bob, and Charlie can agree upon. A weak version of this bound (for the protocols with $O(1)$ rounds) can be obtained with the technique of classic information inequalities, very similar to the proof of an analogous statement in \cite{csi-nar:j:seckey} in the setting of IT. We only need to ``translate''  the information inequalities for Shannon's entropy from the proof in \cite{csi-nar:j:seckey} in homologous inequalities for Kolmogorov complexity, see Section~\ref{s:weak-upper-bound-multivar}.
However, this technique does not work for protocols with non-constant number of rounds. So, to prove our next theorem we use a quite different method, employing  the method of ``clones'' and constraint (non classic) information inequalities. This argument generalizes the proof of Theorem~\ref{t:upper}.

\begin{theorem}
\label{t:upperbdthree} Let us consider a  $3$-party  protocol  for secret key agreement with error probability $\epsilon$.  Let $(\xa, \xb, \xc)$ be a $3$-tuple of  $n$-bit strings on which the protocol succeeds. Let $z$ be the random variable which represents the secret key computed from the input $(\xa, \xb, \xc)$ and let $t$ be the transcript of the protocol that produces $z$. Then,  for sufficiently large $n$,  with probability $1-O(\epsilon)$, 
\[
C(z \mid t) \leq C(\xa,\xb,\xc) - \co(\xa, \xb, \xc)+ O(\log (n/\epsilon)),
\]
where the constants in the $O(\cdot)$ notation depend on the universal machine and the protocol, but not on $(\xa, \xb, \xc)$.
\end{theorem}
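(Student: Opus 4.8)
The plan is to imitate the two-party argument (the proof of Theorem~\ref{t:upper}), replacing the two-party inequality $C(z\mid t)\leqp I(a:b)$ by the multi-party ``omniscience'' bound $C(z\mid t)\leqp C(\xa,\xb,\xc)-\co(\xa,\xb,\xc)$, and replacing the two-party clone argument of Lemma~\ref{lemma-I(a:b:t)-is-positive-for-transcripts} by a three-variable version. First I would reduce to a statement about the transcript-like structure of the protocol. As in the two-party case, joining the private random strings to the inputs, the map $\pi:\langle\xa,\ra\rangle\times\langle\xb,\rb\rangle\times\langle\xc,\rc\rangle\mapsto t$ has the property that each fiber $\pi^{-1}(t)$ is a combinatorial box $A\times B\times C$ (in each round every party's message is a function of its own data and the past common transcript, so membership in the fiber is a conjunction of three independent conditions). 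Since $z$ is computable from $(\xa,\ra,t)$, from $(\xb,\rb,t)$, and from $(\xc,\rc,t)$, and since $C(z\mid t)\geq|z|-\delta(n)$, it suffices to show that for the transcript $t$ of any such protocol,
\begin{equation}
\label{eq-three-party-key-bound}
C(z\mid t) \leqp C(\xa,\xb,\xc\mid t).
\end{equation}
Indeed $z$ is determined from $(\xa,\ra,t)$ and from $(\xc,\rc,t)$, so $C(z\mid t)$ is small relative to each of the three pieces; a straightforward chain-rule manipulation (the AIT analogue of the first steps in Csisz\'ar--Narayan's argument, cf. Section~\ref{s:weak-upper-bound-multivar}) turns this into the bound $C(z\mid t)\leqp \tfrac1{s-1}[\sum_j C(x_{\mathcal J_j}\mid t)-C(\xa,\xb,\xc\mid t)]$ for every splitting, which by Proposition~\ref{p:co-value}/\ref{p:co-value-multivar} is exactly $C(\xa,\xb,\xc)-\co(\xa,\xb,\xc)$ once we know \eqref{eq-three-party-key-bound} and the analogous relativized bounds. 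So the crux is to handle the transcript ``in one piece.''

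The main work, then, is a three-variable generalization of Lemma~\ref{lemma-I(a:b:t)-is-positive-for-transcripts}: for a transcript-like $f:\{0,1\}^*\times\{0,1\}^*\times\{0,1\}^*\to\{0,1\}^*$ and any $a,b,c$, each of the quantities $\sum_{j}C(x_{\mathcal J_j}\mid f(a,b,c))-C(a,b,c\mid f(a,b,c))$ is at least (up to $O(\log n)$) the corresponding unconditioned quantity $\sum_j C(x_{\mathcal J_j})-C(a,b,c)$; equivalently, conditioning on the transcript does not decrease these ``higher-order mutual informations.'' I would prove this by the clones method. Fix $(a,b,c)$, set $t=f(a,b,c)$, and define $\mathrm{Clones}_A$ to be the set of $a'$ with $C(a')\le C(a)$ such that some $(b',c')$ has $f(a',b',c')=t$, and similarly $\mathrm{Clones}_B$, $\mathrm{Clones}_C$. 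As in Claim~\ref{claim-number-of-clones}, each clone set is semi-decidable given $t$ and the relevant complexity value, hence $|\mathrm{Clones}_A|\geq 2^{C(a\mid t)-O(\log n)}$, and likewise for $B$, $C$. Now pick a triple $(a',b',c')\in\mathrm{Clones}_A\times\mathrm{Clones}_B\times\mathrm{Clones}_C$ maximizing $C(a',b',c'\mid t)$; by a counting/incompressibility argument $C(a',b',c'\mid t)\geqp C(a\mid t)+C(b\mid t)+C(c\mid t)$. Crucially, because $f$ is transcript-like, $a'\in A$, $b'\in B$, $c'\in C$ where $A\times B\times C=f^{-1}(t)$, so $f(a',b',c')=t$, i.e.\ $C(t\mid a',b',c')\eqp 0$. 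Combining: $C(a',b',c',t)\eqp C(t)+C(a',b',c'\mid t)\geqp C(t)+C(a\mid t)+C(b\mid t)+C(c\mid t)$, while $C(a',b',c',t)\leqp C(a')+C(b')+C(c')\leqp C(a)+C(b)+C(c)$, which yields $C(t)+C(a\mid t)+C(b\mid t)+C(c\mid t)\leqp C(a)+C(b)+C(c)$. This is the three-way version of \eqref{eq-3}; the same argument applied to a coarser splitting (treating, say, $(b,c)$ as a single block — the fiber is still a box in the coarser coordinates) gives the bound for each of the four quantities in \eqref{eq-c-minus-co-formula}, and analogously for each splitting in \eqref{eq-c-minus-co-formula-multivar} for general $\ell$.

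Finally I would assemble the pieces: by the relativized version of the elementary common-information inequality, $C(z\mid t)\leqp C(z\mid x_{\mathcal J_j},r_{\mathcal J_j},t)+\ldots$ summed appropriately, and using that $z$ is computable from each party's data together with $t$, one gets $C(z\mid t)\leqp \tfrac1{s-1}[\sum_j C(x_{\mathcal J_j}\mid t)-C(\xa,\xb,\xc\mid t)]$ for each splitting; then the clone lemma pushes the conditioning on $t$ away, replacing each term by its unconditioned counterpart at a cost of $O(\log(n/\epsilon))$, and Proposition~\ref{p:co-value-multivar} identifies the minimum over splittings with $C(\xa,\xb,\xc)-\co(\xa,\xb,\xc)$. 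The randomness-deficiency term $\delta(n)$ and the probability-$O(\epsilon)$ slack enter exactly as in Theorem~\ref{t:upper}: we condition on the $1-O(\epsilon)$ event that the protocol succeeds and that $I(\xa,\ra:\ldots)\eqp I(\xa:\ldots)$ type relations hold, which requires the random strings to have polynomially bounded length so that the $O(\log)$ terms stay $O(\log(n/\epsilon))$. The main obstacle I anticipate is not any single step but correctly setting up the clone argument so that it simultaneously yields all the splitting inequalities of Proposition~\ref{p:co-value-multivar} with a uniform $O(\log n)$ loss — in particular verifying that the fiber of $\pi$ really is a combinatorial box in every grouping of the coordinates, and that the ``maximizing typical clone triple'' counting argument survives when one of the ``coordinates'' is itself a tuple. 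Getting the bookkeeping of the additive $O(\log(n/\epsilon))$ terms right (they must not be multiplied by the number of rounds $k$, which is the whole point of the clones method over the iterated chain-rule of Section~\ref{s:weak-upper-bound-multivar}) is the other place where care is needed.
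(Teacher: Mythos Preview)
Your proposal is essentially the paper's proof: define three-argument transcript-like functions whose fibers are combinatorial boxes, prove a clone lemma showing that the quantity $J(a,b,c\mid t)=\tfrac12[C(a\mid t)+C(b\mid t)+C(c\mid t)-C(a,b,c\mid t)]$ does not exceed $J(a,b,c)$, combine with the relativized ``common-information'' inequality (the three-party analogue of Lemma~\ref{l:comm-inf}, i.e.\ Lemma~\ref{lemma-eq-j}) to bound $C(z\mid t)$ by each of the four quantities in~\eqref{eq-c-minus-co-formula} conditioned on $t$, and finally strip the randomness $(\ra,\rb,\rc)$ with probability $1-O(\epsilon)$.

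Two small remarks. First, your displayed inequality~\eqref{eq-three-party-key-bound}, $C(z\mid t)\leqp C(\xa,\xb,\xc\mid t)$, is not the right ``sufficient'' statement (and is not what your subsequent text actually argues); the real target is the family of inequalities $C(z\mid t)\leqp \tfrac{1}{s-1}[\sum_j C(x_{\mathcal J_j},r_{\mathcal J_j}\mid t)-C(\xa,\ra,\xb,\rb,\xc,\rc\mid t)]$, which you state correctly a few lines later. Second, the paper handles the three bipartite splittings $\{A\}\!\mid\!\{B,C\}$ etc.\ by directly invoking the two-party Lemma~\ref{lemma-I(a:b:t)-is-positive-for-transcripts}, reserving the new three-argument clone lemma (Lemma~\ref{lemma-j(a,b,c|t)}) only for the genuinely tripartite quantity $J$; your plan to treat all splittings uniformly by observing that the fiber remains a box in the coarser coordinates is equally valid and arguably cleaner for the general-$\ell$ statement.
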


\begin{proof}
Assume that Alice, Bob, and Charlie agreed on a common key $z$. Denote by $t$ the transcript of the communication protocol. We need to show that $C(z\mid t)$ is not greater than the minimum of four quantities~\eqref{eq-c-minus-co-formula}.

\smallskip

\emph{Trivial case:  no communication nor randomness.} Similarly to the \emph{light upper bound} in Section~\ref{s:warmup}, we start with a naive question: 
on which  $z$, can Alice, Bob, and Charlie  agree without any communication?  The upper bound from  Section~\ref{s:warmup} applies to the case of three participants. Indeed,  inequality~\eqref{comm-inf} specializes to
$$
C(z) \le^+ C(z\mid \xa,\xb) + C(z\mid \xc) + I(\xa,\xb:\xc).
$$
Hence, if $z$ can be computed by Alice, Bob, and Charlie given each of the strings $\xa,\xb,\xc$, then
\begin{equation}\label{eq-ab:c}
C(z) \le^+ I(\xa,\xb:\xc).
\end{equation}
Similarly, we have 
\begin{equation}\label{eq-ac:b}
C(z) \le^+ I(\xa,\xc:\xb)
\end{equation}
and
\begin{equation}\label{eq-bc:a}
C(z) \le^+ I(\xb,\xc:\xa).
\end{equation}
In addition to the three mutual information quantities handled above, the maximum in~\eqref{eq-c-minus-co-formula} involves another  quantity, which is specific for protocols with three participants. We treat it separately:

\begin{lemma} If $z$ can be computed from each of the strings $\xa,\xb,\xc$, then
\begin{equation}\label{eq-j}
C(z) \le^+  \frac12\left( C(\xa)+ C(\xb) + C(\xc)- C(\xa,\xb,\xc) \right).
\end{equation}
\label{lemma-eq-j}
\end{lemma}
(See the proof in Section~\ref{s:technical}.)

The combination of \eqref{eq-ab:c}, \eqref{eq-ac:b}, \eqref{eq-bc:a}, \eqref{eq-j} implies
that if a string $z$ can be ``extracted''  from $\xa$, $\xb$, and $\xc$ (without communication),  then its complexity $C(z)$
is not greater than \eqref{eq-c-minus-co-formula}.

\smallskip

\emph{Randomized protocol without communication.}
We join $\xa,\xb$, and $\xc$ with  strings of random bits  $\ra$, $\rb$, $\rc$, and repeat the same arguments for the ``common information'' $z$, which is 
extracted from the triple $\langle \xa,\ra\rangle$, $\langle \xb,\rb\rangle$, and $\langle \xc,\rc\rangle$. We obtain that $C(z)$ is not greater than the minimum of 
\begin{equation}\label{eq-c-minus-co-formula-with-rand}
\begin{array}{l}
I(\xa,\ra : \xb,\rb,\xc,\rc) ,\\ 
I(\xb,\rb : \xa,\ra,\xc,\ra) ,\\ 
I(\xc,\rc : \xa,\ra,\xb,\rb) ,\\ 
 \frac12\left[ C(\xa,\ra) +C(\xb,\rb) + C(\xc,\rc) -   C(\xa,\ra,\xb, \rb, \xc,\rc)   \right].
 \end{array}
\end{equation}
Then we notice that for  randomly chosen $\ra$, $\rb$, and $\rc$, with probability $1-\epsilon$
 the difference between \eqref{eq-c-minus-co-formula-with-rand}  and \eqref{eq-c-minus-co-formula}
is only  $O(\log n/\epsilon)$.

\smallskip

\emph{The general case: protocols with any number of rounds.} The argument from the case of \emph{zero communication} examined above easily relativizes: 
if Alice, Bob, and Charlie agree (after some communication) on a common key $z$, and the transcript of the communication is $t$,
then $C(z\mid t)$ is not greater than the minimum of 
\begin{equation}\label{eq-c-minus-co-formula-with-rand-and-t}
\begin{array}{l}
I(\xa,\ra : \xb,\rb,\xc,\rc \mid t) , \\
I(\xb,\rb : \xa,\ra,\xc,\rc \mid t) , \\
I(\xc,\rc : \xa,\ra,\xb,\rb \mid t) , \\
 \frac12\left[ C(\xa,\ra \mid t) +C(\xb,\rb \mid t) + C(\xc,\rc \mid t) -   C(\xa,\ra,\xb, \rb, \xc,\rc \mid t)   \right].
 \end{array}
\end{equation}
Let us prove that \eqref{eq-c-minus-co-formula-with-rand-and-t} is not greater than \eqref{eq-c-minus-co-formula-with-rand}.

From the proof of Theorem~\ref{t:upper} we know that the mutual information between parties' data cannot increase when we relativize it with the transcript $t$. Thus, the first three quantities in  \eqref{eq-c-minus-co-formula-with-rand} cannot become greater when we add to these terms the transcript $t$  as a  condition.

It remains to prove a similar inequality between the information quantity 
 $$
 J(\xa,\xb,\xc) := \frac12\left[ C(\xa)+ C(\xb) + C(\xc)- C(\xa,\xb,\xc) \right]
 $$
and its relativized version
 $$
 J(\xa,\xb,\xc \mid t) := \frac12\left[ C(\xa \mid t )+ C(\xb\mid t) + C(\xc\mid t)- C(\xa,\xb,\xc \mid t) \right].
 $$
Notice that, in general (for an arbitrary $t$), the inequality $J(\xa,\xb,\xc \mid t) \le^+ J(\xa,\xb,\xc)$ can be  \emph{false}. So we need to employ the fact that $t$ is the transcript of a communication protocol. 
Here we reuse the technique from Theorem~\ref{t:upper}.
We  adapt the notion of a transcript-like function to the protocols with three parties.
\begin{definition}
A set $S\subset \{0,1\}^* \times\{0,1\}^* \times\{0,1\}^*$ is a \emph{combinatorial parallelepiped}, if it can be represented as a Cartesian product $S=A\times B \times  C$ for some $A,B,C \subset \{0,1\}^*$.
\end{definition}
\begin{definition}
We say that a computable function $f : \{0,1\}^* \times \{0,1\}^* \times \{0,1\}^* \to \{0,1\}^*$ is \emph{transcript-like}, if for every $t$, the pre-image $f^{-1}(t)$ is a combinatorial parallelepiped.
\end{definition}

Similarly to the proof of Theorem~\ref{t:upper}, we will use the trick with ``clones'' of inputs. 
Let $f$ be any transcript-like function of three arguments.
We  fix a triple of strings  $(a,b,c)$ (of length at most $n$) 
and denote $t:=f(a,b,c)$. We say that a string $a'$ is an A-\emph{clone} of $a$ (conditional on $t$), if 
 \begin{itemize}
 \item[(i)] there exist some strings $b',c'$ such that $f(a',b',c') = t$, and
 \item[(ii)]   $C(a') \le C(a)$. 
 \end{itemize}
Denote the sets of A-clones  by  $\mathrm{Clones}_A$. Similarly, we define the set of B-\emph{clones} of $b$  (denoted   $\mathrm{Clones}_B$) and  C-\emph{clones} of $c$   (denoted   $\mathrm{Clones}_C$), conditional on the fixed value $t=f(a,b,c)$.

\smallskip

\begin{claim} Assuming that complexities of the strings $a$, $b$, and $c$ are bounded by an integer $n$, we get
$|\mathrm{Clones}_A|  \geq 2^{C(a \mid f(a,b,c))-O(\log n)}  $,
$|\mathrm{Clones}_B| \geq 2^{C(b \mid f(a,b,c))-O(\log n)}  $, 
 and $ |\mathrm{Clones}_C| \geq 2^{C(c \mid f(a,b,c))-O(\log n)} $.
\end{claim}

\begin{proof} Similar to the proof of Claim~\ref{claim-number-of-clones}.
\end{proof}

\begin{lemma}\label{lemma-j(a,b,c|t)}
If  $f : \{0,1\}^* \times \{0,1\}^* \times  \{0,1\}^* \to \{0,1\}^*$ is transcript-like function,  then for all $a,b,c$
we have  $J(a,b,c\mid f(a,b,c)) \le J(a,b,c)$.
\end{lemma}
\begin{proof}
Fix $a,b,c$ and denote $t:=f(a,b,c)$. 
We choose a triple 
$$
 (a',b',c' )  \in \mathrm{Clones}_A \times  \mathrm{Clones}_B \times \mathrm{Clones}_C
$$
that maximizes $C(a',b',c' \mid t)$. Due to the Claim above  we have
 $$
 \begin{array}{rcl}
 C(a',b',c', t) & \eqp & C(t)  + \log | \mathrm{Clones}_A \times \mathrm{Clones}_B \times \mathrm{Clones}_C| \\
  &\ge ^+& C(t) + C(a\mid t) +C(b\mid t) +C(c\mid t).
 \end{array}
 $$
On the other hand, $t$ can be deterministically computed as $f(a',b',c')$, so
\[
 C(a',b',c', t) \le^+ C(a') +C(b') +C(c') \le^+ C(a) +C(b) +C(c).
\]
Combining the last two inequalities we obtain
$$
C(t) + C(a\mid t) +C(b \mid t) +C(c \mid t)  \le^+ C(a) +C(b) +C(c),
$$
which rewrites (assuming $C(t\mid a,b,c)\eqp 0$) to 
$$
 C(a\mid t) +C(b\mid t) +C(c\mid t)  - C(a,b,c\mid t) \le^+ C(a) +C(b) +C(c) - C(a,b,c).
$$
Hence, $J(a,b,c\mid t) \le J(a,b,c)$, and the lemma is proven.
\end{proof}

Now we  conclude the  proof of the theorem.  Denote by $\pi $ the function
  \[
  \pi :    \langle \xa',\ra' \rangle \times \langle \xb',\rb' \rangle  \times \langle \xc',\rc' \rangle  \mapsto [\mbox{the transcript of the protocol}]
  \]
mapping the input data of Alice,  Bob, and Charlie to the  transcript of the communication protocol. 
This is a transcript-like function (as the outcome of any  communication protocol with three parties in the model \emph{input in the hand}). So we can apply 
Lemma~\ref{lemma-j(a,b,c|t)} to this mapping $\pi$, and we are done.
\end{proof}

\begin{remark}
The inequality $J(a,b,c\mid t) \le^+ J(a,b,c)$ can be easily proven in case when $t$ is a deterministic function of one of the arguments $a$, $b$ or $c$. This observation \textup(plus the idea to split the transcript $t$ into separate messages\textup) essentially gives the proof of Theorem~\ref{t:upperbd} \textup(for a constant number of rounds\textup). 
\end{remark}

\paragraph{The case of $\ell>3$ parties.}
Let us sketch the proof of a general version of Theorem~\ref{t:upperbdthree} for any  constant number of parties $\ell\ge3$.
 To this end we should generalize the information quantities  $J(a,b,c)$,  $J(a,b,c\mid t)$ defined above. We need to consider the $s$-valent version of $J$ for all $s=2,\ldots,\ell$,
 \[
  J(x_1,x_2,\ldots,x_s) := \frac{1}{s-1}\big[  C(x_1)+\ldots + C(x_s) - C(x_1,\ldots,x_s)\big]
 \] 
and 
 \[
  J(x_1,x_2,\ldots,x_s \mid t ) := \frac{1}{s-1}\big[  C(x_1 \mid t )+\ldots + C(x_s\mid t ) - C(x_1,\ldots,x_s \mid t)\big].
 \] 
(Observe that these quantities appear in  \eqref{eq-c-minus-co-formula-multivar}.)

The notions of a transcript-like function can be extended to the functions with $\ell>3$ arguments.
The technique of ``clones'' permits to  generalize Lemma~\ref{lemma-j(a,b,c|t)} to this case. It follows that for all transcript-like function $f=f(x_1,x_2,\ldots,x_s)$ it holds
 \[
 J(x_1,x_2,\ldots,x_s \mid f(x_1,x_2,\ldots,x_s) ) \le  J(x_1,x_2,\ldots,x_s).
 \]
 Combining this observation with Proposition~\ref{p:co-value-multivar} we obtain the generalization of Theorem~\ref{t:upperbdthree} for any number of parties $\ell>3$. We omit the details of this proof \textup(which is a rather straightforward but cumbersome generalization of the proof of Theorem~\ref{t:upperbdthree}\textup).

 \begin{remark}
\label{remark-on-non-uniform-protocols-bis-bis}
Similarly to the case of two party protocols, the upper bounds for multi-party protocols can be extended to the setting with non-uniform communication protocols (see Remarks~\ref{remark-on-non-uniform-protocols},~\ref{remark-on-non-uniform-protocols-bis}). The generalization of the proof is quite straightforward, though  the statement of the theorem becomes more technical (we have to append a description of the protocol to the condition of all complexity terms involved in the theorem).
\end{remark}

\subsection{Positive results for  multi-party protocols}\label{s:three-parties-lower-bound}

In this section we prove the positive statement: we describe a 3-party  protocol that on every input tuple $(\xa, \xb, \xc)$ produces with high probability a secret key of length 
$$C(\xa, \xb, \xc) - \co(\xa, \xb, \xc) - O(\log n)$$ 
assuming that the parties are given the complexity profile of the input tuple.

\begin{theorem}
\label{t:multilower}
There exists a $3$-party protocol for secret key agreement with the following characteristics:
\begin{itemize}
\item[(1)] For every $n$, for every tuple $(\xxa, \xxb, \xxc)$ of $n$-bit strings, for every $\epsilon > 0$, if 
Alice's input $\xa$ consists of $\xxa$,  the complexity profile of the tuple and $\epsilon$, 
Bob's input  $\xb$ consists  of $\xxb$,  the complexity profile of the tuple and $\epsilon$,  and
Charlie' input $\xc$ consists of $\xxc$, the complexity profile of the tuple and $\epsilon$, 
then, at the end,  the three parties compute with probability $1 - O(\epsilon)$ a common string $z$ such that  $$C(z \mid t) \geq |z| - O(\log(1/\epsilon)) \mbox{ and } |z| \geq C(\xxa, \xxb, \xxc) - \co(\xxa, \xxb, \xxc) - O( \log (n/\epsilon)), $$ where $t$ is the transcript of the protocol, and the constants in the $O(\cdot)$ notation depend only on the universal machine.
\item[(2)] The protocol has $1$ round  and each party uses $O(\log(n/\epsilon))$ random bits.
\end{itemize}
\end{theorem}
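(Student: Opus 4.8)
The plan is to implement the two‑stage strategy sketched in Section~\ref{s:contrib}: first run an \emph{omniscience} subprotocol so that all parties learn the whole tuple, and then extract from the tuple a string that is random conditioned on the transcript, reusing the compression/extraction machinery from the proof of Theorem~\ref{t:lower}. Throughout, set $v:=C(x_1,x_2,x_3)-\co(x_1,x_2,x_3)$, which the parties can compute from the complexity profile (using Proposition~\ref{p:co-value}).

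\emph{Rates and omniscience.} From the complexity profile, each party recovers the quantities $C(x_1\mid x_2,x_3),C(x_1,x_2\mid x_3),\ldots$ up to $O(\log n)$ via the chain rule, and solves the linear program of Definition~\ref{d:slepwolf} to obtain an integer vertex $(n_1,n_2,n_3)$ with $n_1+n_2+n_3 \eqp \co(x_1,x_2,x_3)$ satisfying the Slepian--Wolf constraints (adding a small slack to each $n_i$ so that the \emph{true} constraints hold in spite of the $O(\log n)$ imprecision of the profile). Using the explicit form~\eqref{eq-c-minus-co-formula} one picks this vertex so that, in addition, some ``leave‑one‑out'' constraint is tight, i.e.\ $n_i=C(x_i)+\co(x_1,x_2,x_3)-C(x_1,x_2,x_3)$, equivalently $C(x_i)-n_i\eqp v$, for some $i\in\{1,2,3\}$; such a vertex always exists (for $\ell=3$ this is immediate from the four cases in~\eqref{eq-c-minus-co-formula}, and in general from~\eqref{eq-c-minus-co-formula-multivar}). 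In the single round, party $i$ runs the encoder of Theorem~\ref{t:compressmulti} on $(x_i,n_i,\epsilon)$, using $O(\log(n/\epsilon))$ private random bits, and broadcasts the resulting string $p_i$ of length $n_i+O(\log(n/\epsilon))$; Alice additionally broadcasts two fresh random strings $r,s$ of length $O(\log(n/\epsilon))$ for use below. The transcript is $t=(p_1,p_2,p_3,r,s)$, of length $\co(x_1,x_2,x_3)+O(\log(n/\epsilon))$. One checks that the Slepian--Wolf constraints imply the hypotheses of Theorem~\ref{t:compressmulti} when the side information $y$ is $x_j$, for every $j$ (for index sets $V\ni j$ the conditional complexity drops to that of a proper subtuple, and for $V\not\ni j$ it is a genuine Slepian--Wolf constraint); hence each party $j$, from $x_j$ and $(p_1,p_2,p_3)$, runs the decoder and reconstructs $(x_1,x_2,x_3)$, and a union bound over the three parties gives success with probability $1-O(\epsilon)$, which we assume from now on.

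\emph{The key.} Now every party knows $(x_1,x_2,x_3)$ and $t$. As in the proof of Theorem~\ref{t:lower}, one first shows $C(x_1,x_2,x_3\mid t)\eqp v$. The bound $C(x_1,x_2,x_3\mid t)\geqp v$ is automatic: $|t|\leqp\co(x_1,x_2,x_3)$ and $C(t\mid x_1,x_2,x_3)=O(\log(n/\epsilon))$ (the transcript is computed from the tuple, the profile, and the random bits), so by the chain rule $C(x_1,x_2,x_3\mid t)\geqp C(x_1,x_2,x_3)-C(t)\geqp v$. For the matching upper bound, use that the party $i$ selected above reconstructs the tuple from $(x_i,t)$, so $C(x_1,x_2,x_3\mid t)\leqp C(x_i\mid t)\leqp C(x_i\mid p_i)$, together with the fact that $C(x_i\mid p_i)\leqp C(x_i)-n_i$ with probability $1-O(\epsilon)$ --- a property of the construction underlying Theorem~\ref{t:compressmulti}, where $p_i$ is essentially a random hash of $x_i$ with $n_i$ output bits; by the choice of $i$ this yields $C(x_1,x_2,x_3\mid t)\leqp C(x_i)-n_i\eqp v$. (Equivalently, $C(t)\geqp\co(x_1,x_2,x_3)$ since $p_1,p_2,p_3$ are mutually independent --- each a function of the corresponding $x_i$ and independent private randomness --- and each is $\epsilon$-close to uniform on strings of its length, so their concatenation is incompressible with high probability.) Given $C(x_1,x_2,x_3\mid t)\eqp v$, the parties set $k'=v+c'\log(n/\epsilon)\ge C(x_1,x_2,x_3\mid t)$ and run the algorithm of Theorem~\ref{t:compression} on $\big((x_1,x_2,x_3),k',\epsilon\big)$ with randomness $r$; with probability $1-O(\epsilon)$ this outputs a string $\tz$ of length $v+O(\log(n/\epsilon))$ that is a program for $(x_1,x_2,x_3)$ given $t$, whence $C(\tz\mid t)\geqp C(x_1,x_2,x_3\mid t)\eqp v$. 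Thus $\tz$ has length $\eqp v$ and is incompressible given $t$ up to $O(\log(n/\epsilon))$ --- the two properties needed for the last step. Finally, by the argument of Lemma~\ref{l:def-reduction} (which uses only these two properties of $\tz$), the parties brute‑force a strong $(v-O(\log(n/\epsilon)),\epsilon)$-extractor $E$ and output $z=E(\tz,s)$; with probability $1-O(\epsilon)$ one gets $C(z\mid t)\ge|z|-O(\log(1/\epsilon))$ and $|z|=v-O(\log(n/\epsilon))=C(x_1,x_2,x_3)-\co(x_1,x_2,x_3)-O(\log(n/\epsilon))$. Absorbing the $O(1)$ many failure events into $O(\epsilon)$ and rescaling $\epsilon$ finishes the argument; the protocol has one round and each party used $O(\log(n/\epsilon))$ random bits. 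The case $\ell>3$ is identical, with Proposition~\ref{p:co-value-multivar} and~\eqref{eq-c-minus-co-formula-multivar} replacing Proposition~\ref{p:co-value} and~\eqref{eq-c-minus-co-formula}, and the corresponding multi‑argument forms of Theorems~\ref{t:compressmulti}, \ref{t:compression} and Lemma~\ref{l:def-reduction}.

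I expect the main obstacle to be the upper bound $C(x_1,x_2,x_3\mid t)\leqp v$ in the key step --- equivalently, the claim that the omniscience transcript is essentially incompressible and carries $\approx\co(x_1,x_2,x_3)$ bits of information about the tuple. This is not automatic from Theorem~\ref{t:compressmulti} used as a black box (the transcript has the right \emph{length}, but a priori its Kolmogorov complexity could be much smaller, making $C(x_1,x_2,x_3\mid t)$ too large to compress down to a length‑$v$ program); it needs both the careful choice of linear‑programming vertex and the ``bit‑revealing''/near‑uniformity property of the Slepian--Wolf messages from the underlying construction. Everything else --- solving the $\co$ linear program, the union bound for omniscience, the bookkeeping with the chain rule, and the final extractor step --- parallels the proof of Theorem~\ref{t:lower} and is routine; and the resulting upper bound on $|z|$ is of course consistent with Theorem~\ref{t:upperbdthree}.
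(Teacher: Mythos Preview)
Your high-level strategy---omniscience via distributed Slepian--Wolf compression, then compress the tuple to a program given the transcript, then apply a strong extractor---is exactly the paper's. The lower bound $C(x_1,x_2,x_3\mid t)\geqp v$, the use of Theorem~\ref{t:compression} with randomness $r$, and the final deficiency-reduction via Lemma~\ref{l:def-reduction} all match. The one substantive divergence is how you obtain the \emph{upper} bound $C(x_1,x_2,x_3\mid p_1,p_2,p_3)\leqp v$, and here your proposal has a real gap that the paper closes by a different (and slicker) route.

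Your argument for that bound needs $C(p_i)\geqp n_i$ (equivalently, that each Slepian--Wolf message is essentially incompressible / near-uniform). You correctly flag that this is \emph{not} a black-box consequence of Theorem~\ref{t:compressmulti}: the theorem only guarantees decodability, and a priori the encoder could emit a highly compressible $p_i$ while still allowing joint decoding. Your fix---pick an LP vertex with a tight ``leave-one-out'' constraint and then invoke a ``bit-revealing'' property of the underlying construction---may be salvageable for specific encoders, but as written it is an assumption, not a proof; and the alternative you mention (that $p_1,p_2,p_3$ are independent and each $\epsilon$-close to uniform) is even further from anything stated in Theorem~\ref{t:compressmulti}.

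The paper avoids all of this with a short bootstrapping argument (Claim~\ref{c:sestimate}): pretend the parties were additionally given the integer $s=C(x_1,x_2,x_3\mid p_1,p_2,p_3)$ (only $O(\log n)$ bits), run the compression of Theorem~\ref{t:compression} with target length $s$ to produce a key $z'$, observe that $z'$ is a valid shared secret key, and then apply the \emph{upper bound} Theorem~\ref{t:upperbdthree} to conclude $s\le C(z'\mid t)\le C(x_1,x_2,x_3)-\co(x_1,x_2,x_3)+O(\log n)$. This yields exactly your missing inequality without touching the internals of the Slepian--Wolf encoder and without your special choice of LP vertex. If you plug this argument in place of your ``key step'' upper bound, the rest of your write-up goes through essentially unchanged.
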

\begin{proof}
We first give an overview of the protocol. 

In the first step, the three parties compute an optimal tuple $(\na, \nb, \nc)$ for the optimization problems  that defines  $\co(\xxa, \xxb, \xxc)$ (see Definition~\ref{d:slepwolf}), and then Alice computes $\pa$ of length $\na + O(\log n)$, Bob computes $\pb$ of length $\nb + O(\log n)$, and Charlie computes $\pc$  of length $\nc + O(\log n)$, such that with high probability $(\xa, \pb, \pc)$ is a program for $(\xxa, \xxb, \xxc)$,  $(\pa, \xb, \pc)$ is a program for $(\xxa, \xxb, \xxc)$,  and  $(\pa, \pb, \xc)$ is a program for $(\xxa, \xxb, \xxc)$. Next,  Alice broadcasts $\pa$, Bob broadcasts $\pb$, and Charlie broadcasts $\pc$. Alice also broadcasts a random string $r$ of length $O(\log n)$ bits, which will be used in the second step.

In the second step, each party  first computes $(\xxa, \xxb, \xxc)$. Next, each party, using the randomness $r$,  computes a string $z$ that with high probability is a short program for $(\xxa, \xxb, \xxc)$ given $(\pa, \pb, \pc)$.  We will show that $z$ has the desired properties.

We proceed with the details.

Step 1. Using the complexity profile of $(\xxa, \xxb, \xxc)$, each party computes a common tuple $(\na, \nb, \nc)$ which is optimal for the optimization problem that defines $\co(\xxa, \xxb, \xxc)$ (there may be more optimal tuples, but the parties agree on choosing  the same tuple in some canonical way).

Next,  Alice  uses the encoding procedure $E$ from Theorem~\ref{t:compressmulti} for $\ell=2$, on input $x_1, \na, \epsilon$ and obtains $\pa$.
The length of $\pa$ is bounded by $\na + O(\log (n/\epsilon))$. Similarly,  Bob and Charlie obtain $\pb$ and respectively $\pc$, from $x_2, \nb$, and respectively $x_3, \nc$. The parties broadcast  $\pa, \pb, \pc$ as indicated in the overview.
\smallskip

Step 2.  Alice uses the decoding algorithm $D$ from Theorem~\ref{t:compressmulti} for $\ell=2$,  on input $x_1, \pb, \pc$. Since $\nb, \nc$ satisfy the Slepian-Wolf constraints (\ie, $\nb + \nc \geq C(\xxb , \xxc \mid \xxa)$, $\nb \geq C(\xxb \mid \xxa, \xxc)$, $\nc \geq C(\xxc \mid \xxa, \xxb)$),  it follows that  with probability $1 - O(\epsilon)$,  Alice can reconstruct $\xxb$ and $\xxc$, and since she has $\xxa$, she obtains $(\xxa, \xxb, \xxc)$. The similar facts hold  for Bob and Charlie. Therefore, with probability $1 - O(\epsilon)$, each party obtains the tuple $(\xxa, \xxb, \xxc)$, an event which henceforth we assume to hold.

For the next operation, the plan is  to use the encoding procedure given in Theorem~\ref{t:compression} and compute a short program for $(\xxa, \xxb, \xxc)$ given $(\pa, \pb, \pc)$.  For that, the parties need to have a tight upper bound for $C(\xxa, \xxb, \xxc \mid \pa, \pb, \pc)$. Such an upper bound is provided in the following claim, which will be proved later.

\begin{claim}
\label{c:sestimate}
There exists a constant $d$ such that $$C(\xxa, \xxb, \xxc \mid \pa, \pb, \pc) \leq C(\xxa, \xxb, \xxc) - \co(\xxa, \xxb,\xxc) + d \log n.$$
\end{claim}

Now, using the upper bound provided by Claim~\ref{c:sestimate}, Alice (and also Bob, Charlie) use the encoding procedure from Theorem~\ref{t:compression} and on input $(\xxa, \xxb, \xxc)$, upper bound $k = C(\xxa, \xxb, \xxc) - \co(\xxa, \xxb, \xxc) + d \log n$ and randomness $r$, computes a string $z$, of length $|z| = k + c \log (n/\epsilon)$ for some constant $c$, that, with probability $1 -O(\epsilon)$ is a program for $(\xxa, \xxb, \xxc)$ given $(\pa, \pb, \pc)$. The transcript of the protocol is $t = (\pa, \pb, \pc, r)$.
We have
\[
\begin{array}{rl}
C(z \mid t) &= C(z \mid \pa, \pb, \pc,r) \eqp C(z \mid \pa, \pb, \pc) \quad \mbox{(because $|r| = O(\log n)$)} \\
& \geqp C(\xxa, \xxb, \xxc \mid \pa, \pb, \pc)  \quad \mbox{(because $z$ is a program for $(\xxa, \xxb, \xxc)$ given $\pa, \pb, \pc$)} \\
&\eqp C(\xxa, \xxb, \xxc) - C(\pa, \pb, \pc)  \\
& \quad  \quad \mbox{(because $\pa$ is computed from $\xxa$ and $O(\log n)$ bits, similarly for $\pb, \pc$)} \\
&\geqp C(\xxa, \xxb, \xxc) - (\na + \nb + \nc) \quad \mbox{( $|\pa| \leqp \na$, similarly for $\pb, \pc$)} \\
&= C(\xxa, \xxb, \xxc) - \co(\xxa, \xxb, \xxc) \\
&\geqp |z|,
\end{array}
\]
which shows that $z$ is a secret key with randomness deficiency $O(\log (n/\epsilon))$. Using randomness extractors and an additional random string $s$ broadcast by Alice in Step 1, similar to the proof of Theorem~\ref{t:lower}, the randomness deficiency of $z$ can be reduced to $O(\log(1/\epsilon))$.

It remains to prove Claim~\ref{c:sestimate}.
\smallskip

\begin{proof}[Proof of Claim~\ref{c:sestimate}]
Suppose that Alice, Bob, and Charlie  know $s =  C(\xxa, \xxb, \xxc \mid \pa, \pb, \pc)$. Since each of $\xxa, \xxb$ and $\xxc$ are $n$-bits long,  $C(\xxa, \xxb, \xxc \mid \pa, \pb, \pc) \leq 3n + O(1)$, and thus the length of $s$ in binary representation is bounded by $\log n + O(1)$ bits. Suppose $s$ is given as part of the input, \ie, 
Alice's input is $(\xxa, s)$ instead of just $\xxa$, and similarly $s$ is given to Bob and Charlie.
Then Alice, Bob and Charlie can use the encoding procedure given in Theorem~\ref{t:compression}: From $(\xxa, \xxb, \xxc)$, the upper bound $s$  and the randomness $r$, they compute a string $z'$ that, with probability $1- 1/n$ is a program for $(\xxa, \xxb, \xxc)$ given $(\pa, \pb, \pc)$ and $z'$ has length
$|z'| \leq s + c \log n = C(\xxa, \xxb, \xxc \mid \pa, \pb, \pc) + c \log n$ for some constant $c$. We have
\[
\begin{array}{rl}
C(z' \mid \pa, \pb, \pc) & \geq C(\xxa, \xxb, \xxc \mid \pa, \pb, \pc) \\
& \mbox{\quad \quad \quad \quad (because $z'$ is a program for $(\xxa,\xxb,\xxc)$ given $(\pa, \pb, \pc)$} \\
				& \geq |z'| - c \log n.
\end{array}
\]
The transcript of the protocol (that uses $s$ as part of the input) is $t = (\pa, \pb, \pc, r)$ and since the length of $r$ is only $O(\log n)$, it follows that $C(z' \mid t) \eqp C(z' \mid \pa, \pb, \pc)$ and therefore $C(z' \mid t) \geqp |z'|$. It follows that $z'$ is a secret key that Alice, Bob, and Charlie have obtained via the protocol with $s$ added to the input.  It follows from 
Theorem~\ref{t:upperbdthree} that we get the upper bound
$C(z' \mid t) \leq C(\xxa, \xxb, \xxc,s) - \co(\xxa, \xxb, \xxc) + d' \log n$, for some constant $d'$. 
Since $s$ is only
$\log n + O(1)$ bits long, we can drop $s$ at the cost of increasing the constant $d'$.
The conclusion follows, because $C(\xxa, \xxb, \xxc \mid \pa, \pb, \pc)\leq C(z' \mid \pa, \pb,\pc) \leq C(z' \mid t)$.
\end{proof} \emph{(End of proof of Claim~\ref{c:sestimate}.)}
\end{proof}

\section{Communication complexity of  secret key agreement}
\label{s:comm}
It is of interest to find the communication complexity for the task of finding a shared secret key  having the optimal length of $I(x:y)$.   We solve this problem in the model of randomized protocols with \emph{public random bits}, visible to Alice, Bob, and the adversary.  This model is obtained by modifying slightly the definition from Section~\ref{s:defprotocol} (in which the random bits are private): we require that $\ra = \rb = r$ and we change equation~\eqref{e:e2} to $C(z \mid t, r) \geq |z| - \delta(n)$.

The protocol presented in the proof of Theorem~\ref{t:lower} solves the task with communication $\min (C(x \mid y), C(y \mid x)) + O(\log n)$.
As presented, that protocol uses private random bits, but it can be easily modified to work in the  model with  public randomness.  Indeed,  Alice and Bob use only $O(\log n)$ private  random bits. So, if we make these bits known to the adversary, this will not affect much the secrecy of the common key $z$ produced in the protocol. That is,  we still have the inequality
 \[
  C(z \mid [\text{everything known to the adversary}]) \ge |z| - O(\log n),
 \]
only the constant in the term $O(\log n)$ becomes bigger (compared to equation~\eqref{e:tz}). 
Thus, the assumption that all participants of the protocol have access to a public source of random bits, only makes the protocol simpler, and with decreased communication complexity by  $O(\log n)$. 

We argue that within the model with public randomness the communication complexity of this protocol  is optimal, up to the $O(\log n)$ term.
In what follows, we assume as usual that  Alice is given a string $x$ and Bob is given a string $y$, and both parties know the complexity profile of $(x,y)$.

\begin{theorem}
\label{thm:comm-complexity} Let $\epsilon, \delta_1, \delta_2$ be arbitrary positive real constants.
There is no secret key agreement protocol with public random  bits such that for all inputs $x$ and $y$,  
\begin{enumerate}
\item  the communication complexity of the protocol (the total number of all bits sent by Alice and Bob) is less than  $(1-\delta_1)\min\{  C(x \mid y), C(y \mid x)   \}$, 
\item  Alice and Bob agree with probability $>\epsilon$  on a  common key $z$ such that
$C(z\mid t,r) > \delta_2  I(x:y)$,
where $r$ is the public randomness and $t=t(x,y,r)$ is the transcript of the protocol.
\end{enumerate}
\end{theorem}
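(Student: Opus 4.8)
Suppose, toward a contradiction, that a protocol $P$ with public randomness satisfies conditions (1) and (2) for all inputs. I will exhibit, for each large $n$, a single pair of $n$-bit inputs $(x,y)$ on which (2) fails. The instance is built using the theory of stochastic strings, and its analysis rests on a Muchnik-style counting bound for common information. Concretely, I use stochastic strings~\cite{shen1983concept,muchnik-romash,razenshteyn-2011} to produce $(x,y)$ that is, up to $O(\log n)$ precision, a ``random vertex and a random neighbour'' in a suitably generic combinatorial design (equivalently, $y$ is $x$ passed through a generic noisy channel), with complexity profile $C(x)=C(y)=\ell+m$, $C(x,y)=2\ell+m$; thus $C(x\mid y)=C(y\mid x)=\ell$ and $I(x:y)=m$. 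I take $\ell=\Theta(n)$ and $m=\Theta(n)$ with the ratio $m/\ell$ so small (depending only on $\delta_1,\delta_2$) that $\delta_1\ell$ exceeds a large constant multiple of $m$. Two features of the construction are crucial, and are exactly what the stochastic-strings results give: (i) $(x,y)$ is a G\'acs--K\"orner pair --- it has no common information of length $\gg\log n$ even with $O(\log n)$ help bits --- and (ii) this rigidity is \emph{robust}: it persists even after conditioning on a large amount of help, provided that help has the combinatorial structure of a communication transcript and is shorter than Alice's (equivalently Bob's) private information $\ell$.

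\emph{Reducing away the public randomness.} The random string $r$ is, for all but an $O(\epsilon)$-fraction of values, ``independent'' of the fixed instance, in the sense that $C(x,y\mid r)\eqp C(x,y)$; for such good $r$ the complexity profile of $(x,y)$ and --- importantly --- its stochasticity are preserved relative to $r$. So it suffices to show that for every good $r$, if $P(x,y,r)$ produces agreement on a key $z$ with transcript $t$, then $C(z\mid t,r)\le\delta_2 m=\delta_2 I(x:y)$; this makes the event in (2) have probability $<\epsilon$, a contradiction. From now on fix such a good $r$ and relativize all complexities to it.

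\emph{Analysis on the hard instance.} Since $z=A(x,r,t)=B(y,r,t)$ we have $C(z\mid x,t)=O(1)$ and $C(z\mid y,t)=O(1)$ (relative to $r$), so $z$ is a piece of common information of $x$ and $y$ unlocked by the help string $t$, and by hypothesis $|t|<(1-\delta_1)\ell$. For fixed $r$ the transcript map $(x',y')\mapsto t(x',y',r)$ is transcript-like, so the inputs producing transcript $t$ form a combinatorial rectangle $R_A\times R_B$ enumerable from $t$, with $x\in R_A$ and $y\in R_B$. From $C(x\mid y,t)\geqp C(x\mid y)-|t|\geqp\delta_1\ell$ we get $\log|R_A|\geqp\delta_1\ell$, and by the stochasticity of the instance a large portion of $R_A$ consists of strings $x'$ that, paired with $y$, form essentially the same correlated pair as $(x,y)$ --- i.e., $R_A$ contains at least $2^{\delta_1\ell-O(\log n)}$ ``clones'' of $x$. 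The single string $z$ is extracted with $O(1)$ help both from $y$ and from every one of these clones; if $C(z\mid t)$ were $\gg\log n$, then too much of the information in $z$ would have to be ``packed'' uniformly into the clones inside $R_A$, contradicting the incompressibility of a typical element of the slice --- unless the would-be $z$ already lives in the $m$-bit shared part, and there the G\'acs--K\"orner rigidity of the design together with $\delta_1\ell\gg m$ forces $C(z\mid t)=O(\log n)\le\delta_2 m$. This establishes the required contradiction.

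\emph{The main obstacle} is precisely the last step: turning ``the help $t$ is shorter than the private information $\ell$ by a factor $(1-\delta_1)$'' into ``only $O(\log n)$ bits of common information are unlocked'', with a quantitative bound. The classical G\'acs--K\"orner and Muchnik results forbid common information only in the presence of $O(\log n)$ help bits, whereas here we must tolerate help of near-maximal length that --- crucially --- has the structure of a combinatorial rectangle, and we must convert the structure-function ``diagonality'' of the stochastic instance into the needed loss. This requires controlling the stochasticity not only of $x$, $y$, and $(x,y)$, but of all the conditional slices cut out by possible transcripts simultaneously, which is exactly why the argument needs the full strength of the stochastic-strings machinery (and methods with no apparent parallel in Shannon's framework), rather than merely an incompressible or counting-generic pair.
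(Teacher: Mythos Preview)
Your outline has the right architecture --- a hard stochastic pair, derandomize by fixing a good $r$, exploit the rectangle structure of transcripts --- but the step you yourself flag as ``the main obstacle'' is the entire proof, and you do not actually carry it out. Saying that the needed rigidity ``requires the full strength of the stochastic-strings machinery'' is not a proof; the machinery has to be \emph{invoked on a concrete statement}, and your sketch never names which statement.

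The paper's proof pivots on one idea you are missing: instead of treating $z$ as common information of $(x,y)$ that is ``unlocked by the help string $t$'', it packages them together and studies $w:=\langle t,z\rangle$ as the common-information string. This is decisive, because then
\[
C(w\mid x)\leqp C(t)+C(z\mid t,x)\leqp(1-\delta_1)n,\qquad C(w\mid y)\leqp(1-\delta_1)n,
\]
and a direct calculation (Lemma~\ref{l:calculations}) together with the transcript inequality $I(x:y\mid t)\leqp I(x:y)$ (Lemma~\ref{lemma-I(a:b:t)-is-positive-for-transcripts}) gives
\[
C(w)\geqp C(w\mid x)+C(w\mid y)+C(z\mid t)\geqp C(w\mid x)+C(w\mid y)+\delta_2 n.
\]
Now the triple $(C(w),C(w\mid x),C(w\mid y))$ lands exactly in the region that Muchnik's theorem (Theorem~\ref{thm:muchnik}) forbids for a suitable hard pair with profile $C(x)=C(y)=2n$, $C(x,y)=3n$. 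For public randomness, one needs the stochastic version (Theorem~\ref{thm:razenshteyn}) plus the fact that $\ci(x,y\mid r)$ stays within $O(\log n)$ of $\ci(x,y)$ for stochastic pairs (Theorem~\ref{thm:muchnik-romash}); these are the three black boxes the argument calls. Note that Muchnik's theorem is precisely a ``large-help'' G\'acs--K\"orner statement (inaccuracies up to $(1-\delta)n$, not $O(\log n)$), so the obstacle you identified is already resolved in the literature --- but only once $t$ is absorbed into $w$.

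Two further comments on your sketch. First, your parameter regime ($I(x:y)=m\ll \ell=C(x\mid y)$, chosen so that $\delta_1\ell\gg m$) makes the target inequality $C(z\mid t,r)\le\delta_2 m$ \emph{harder}, not easier: you must push the key complexity down to $O(\log n)$, whereas the paper's balanced profile only needs it below $\delta_2 n$. Second, your clone argument (``too much information packed into the clones inside $R_A$'') does not yield a contradiction as stated: the secret $z$ is a function of $x$ and $t$, so \emph{every} clone $x'\in R_A$ produces \emph{some} $z'=A(x',r,t)$, and there is no counting tension until you control how many distinct $z'$ arise and how they relate to $y$ --- which is exactly what the $w=\langle t,z\rangle$ trick and Muchnik's bound do for you.
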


\emph{Specifying the parameters.}   To prove the theorem, we find a `hard' pair $(x,y)$, on which any communication protocol with public randomness fails.  
More specifically, we will choose a `hard' pair with the profile
\begin{equation}\label{standard-profile}
C(x)\eqp 2n,\  C(y)\eqp 2n,\  C(x,y)\eqp 3n
\end{equation}
(for large enough $n$);  notice that \eqref{standard-profile} implies $I(x:y)\eqp n$.
We will show that if the protocol satisfies condition~(1) from Theorem~\ref{thm:comm-complexity}, then, for the chosen `hard' strings $(x,y)$,  condition~(2) from this theorems is false.

\begin{remark} 
The construction that we  present in this section can be adapted to construct a `hard'  pair $(x,y)$ with any given nontrivial complexity profiles  (we only need that $C(x|y)$ and $C(y|x)$ are both not too small).
We could make Theorem~\ref{thm:comm-complexity} stronger by replacing the constant values $\delta_i$ 
by $\delta_i = \lambda_i\log n/n$   with  large enough constant coefficients $\lambda_i>0$ ($i=1,2$).
We sacrifice the generality  to simplify the notation.
\end{remark}

\emph{Digression: common information.} Before we prove Theorem~\ref{thm:comm-complexity}, we remind some results on common information.
We say (somewhat informally) that strings  $x$ and $y$ share $k$ bits of common information, if there exists a string $w$ of complexity $k$ such that $w$ can be easily ``extracted'' from $x$ and from $y$, i.e., $C(w \mid x)\approx 0$ and $C(w \mid y)\approx0$. To make this statement formal, we must specify the precision of the equalities $C(w \mid x)\approx 0$ and $C(w \mid y)\approx0$. In the next definition we introduce a suitable notation.

\begin{definition}\label{d:common-i}
We say that $k$ bits of common information can be extracted from $x$ and $y$ with inaccuracies $(k_A,k_B)$, if there exists a string $w$ such that 
 $$
 \left\{
 \begin{array}{ccl}
 C(w) &\ge & k,\\
 C(w\mid x) &\le & k_A,\\
 C(w \mid y) &\le & k_B.
 \end{array}
 \right.
 $$
 Denote  by $\ci(x,y)$  the  set of all triples $(k,k_A,k_B)$  such that $k$ bits of common information can be extracted from $x$ and $y$ with inaccuracies $(k_A,k_B)$.
\end{definition}

It is know that  $\ci(x,y)$ is not uniquely defined by the complexity profile of $(x,y)$. That is,  pairs $(x,y)$ and $(x',y')$ with very similar (or even identical) complexity profiles can have very different properties of ``extractability'' of the mutual information, 
see \cite{che-muc-rom-she-ver:j:commoninfo}.
The next theorem describes the  minimal possible set of triples $\ci(x,y)$ for a pair with complexity profile \eqref{standard-profile}.

\begin{theorem} \cite{muc:j:commoninfo} \label{thm:muchnik}
For every $\delta>0$ and for all large enough $n$ there exists a pair of strings $(x,y)$ such that
$C(x)\eqp 2n$, $C(y)\eqp 2n$, $C(x,y)\eqp  3n$, and \emph{no} triple $(k,k_A,k_B)$ satisfying
 $$
 \left\{
 \begin{array}{l}
 k_A \le  (1-\delta)n,\\
 k_B \le  (1-\delta)n,\\
 k_A+k_B +\delta n \le  k
 \end{array}
 \right.
 $$
belongs to $\ci(x,y)$.
\end{theorem}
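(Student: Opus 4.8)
\emph{Proof proposal.} Theorem~\ref{thm:muchnik} is due to An.~Muchnik~\cite{muc:j:commoninfo}; here I outline the route I would take, which is the one underlying the refinements of the G\'{a}cs--K\"{o}rner theorem. The plan is to realize $(x,y)$ as a ``generic'' edge of a highly pseudorandom \emph{low-complexity} bipartite graph and then to rule out every putative common-information string by a covering argument over combinatorial rectangles combined with a single Kolmogorov-complexity inequality.

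First I would fix the combinatorial skeleton: let $G$ be the bipartite graph whose left vertices are the $2^{2n}$ non-vertical lines $\{y=ax+b\}$ of the affine plane over the field with $2^n$ elements, whose right vertices are its $2^{2n}$ points, and whose edges are the incidences. Then $G$ is $2^n$-regular on both sides, has $2^{3n}$ edges, contains no $K_{2,2}$, has $C(G)=O(\log n)$ (so it may be used as a free conditioning), and --- this is what the argument needs --- is a good expander: its non-trivial singular values are $O(\sqrt{2^n})$, so by the expander mixing lemma $e_G(A,B)\le 2^{-n}|A|\,|B|+2^{n/2}\sqrt{|A|\,|B|}$ for every set $A$ of left vertices and $B$ of right vertices, where $e_G(A,B)$ denotes the number of edges between them. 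Every edge $(x,y)$ satisfies $C(x)\leqp 2n$, $C(y)\leqp 2n$, $C(x,y)\leqp 3n$, and for all but a $1/\poly(n)$ fraction of edges these are equalities up to $O(\log n)$, whence also $C(x\mid y)\eqp C(y\mid x)\eqp n$ and $I(x:y)\eqp n$; I would take $(x,y)$ to be such an edge chosen \emph{at random}. (It is essential that it be random: the lexicographically first ``good'' edge would have complexity $O(\log n)$.)

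Next comes the covering argument. Suppose some $w$ witnessed $(k,k_A,k_B)\in\ci(x,y)$ with $k_A,k_B\le(1-\delta)n$ and $k\ge k_A+k_B+\delta n$; set $m=C(w)$, and note $m\leqp 2n+\min(k_A,k_B)$ because $m\le C(x)+C(w\mid x)=2n+k_A$ and symmetrically. The sets $A_w=\{x'\ \text{a left vertex}: C(w\mid x')\le k_A\}$ and $B_w=\{y'\ \text{a right vertex}: C(w\mid y')\le k_B\}$ are enumerable given $w$, contain $x$ and $y$, and have $|A_w|\le 2^{2n+k_A-m+O(\log n)}$, $|B_w|\le 2^{2n+k_B-m+O(\log n)}$, since $x'\in A_w$ forces $C(x'\mid w)\leqp C(x')+k_A-m\le 2n+k_A-m$. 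As $(x,y)$ is an edge of $G$ lying in the rectangle $A_w\times B_w$, and that rectangle is specified by $w$ plus $O(\log n)$ bits, enumerating the edges of $G$ inside it gives
\[
C(x,y)\ \leqp\ m+\log e_G(A_w,B_w).
\]
Bounding $\log e_G(A_w,B_w)$ by the larger of the two mixing-lemma terms and using $C(x,y)\geqp 3n$ produces a dichotomy: the ``density'' term yields $3n\leqp m+\log|A_w|+\log|B_w|-n\leqp 3n+k_A+k_B-m$, i.e.\ $m\leqp k_A+k_B$, contradicting $m\ge k_A+k_B+\delta n$; the ``error'' term yields $3n\leqp m+\tfrac{n}{2}+\tfrac{\log|A_w|+\log|B_w|}{2}\leqp \tfrac{5n}{2}+\tfrac{k_A+k_B}{2}$, i.e.\ $k_A+k_B\geqp n$. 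So whenever $k_A+k_B<n-O(\log n)$ no such $w$ exists; since only $\poly(n)$ triples $(k,k_A,k_B)$ are relevant and the bad edges form a $1/\poly(n)$ fraction, a pair $(x,y)$ with the required property exists in this range.

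The main obstacle is the complementary regime, where $k_A$ and $k_B$ are both near $n$ (so $k_A+k_B\geqp n$) and the mixing lemma is no longer sharp enough. There I would stop viewing $(x,y)$ as a single edge: fix a generic right vertex $y$ and treat $x$ as a random left neighbour of $y$. There are only $2^{k_B+O(\log n)}$ candidate witnesses $w$ with $C(w\mid y)\le k_B$, and describing $(x,y)$ via $w$, then $y$ inside $B_w$, then $x$ inside $A_w\cap N(y)$, together with $C(x,y)\geqp 3n$, would force $|A_w\cap N(y)|\geqp 2^{\,n-k_B}$ --- that is, the $2^n$-element pencil $N(y)$ would contain a subset of size $\geqp 2^{n-k_B}$ of complexity $\leqp k_B$ conditional on $y$. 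Ruling this out is the crux: one argues (using that such a sub-pencil mostly consists of lines that are conditionally generic through $y$, hence pairwise almost independent, so that a string computable with only $k_A$ bits of help from each of exponentially many of them must be short) that no $w$ with $C(w)\ge k_A+k_B+\delta n$ can arise; counting over the few candidates per vertex and symmetrizing in $x$ and $y$ then shows a random generic edge is good for the remaining triples as well. Making this step fully rigorous --- cleanly via results on stochastic strings~\cite{shen1983concept,muchnik-romash,razenshteyn-2011} in the spirit of Theorem~\ref{thm:comm-complexity} below, or via a combinatorial design engineered so that no neighbourhood carries a structured large low-complexity subset --- is where the real work lies; for the complete argument I would defer to~\cite{muc:j:commoninfo} and to the exposition in~\cite[Chapter~11]{suv:b:kolmenglish}.
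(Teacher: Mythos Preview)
The paper does not contain a proof of Theorem~\ref{thm:muchnik}: it is quoted from~\cite{muc:j:commoninfo} (with a pointer to the exposition in~\cite[Chapter~11]{suv:b:kolmenglish}) and used as a black box in the proof of Theorem~\ref{thm:comm-complexity}. So there is no ``paper's own proof'' to compare your attempt against.

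On the merits of your proposal: the line--point incidence graph with the expander mixing lemma is a natural first try, and your analysis of the two regimes is correct as far as it goes. The density term of the mixing lemma indeed forces $m\leqp k_A+k_B$, and the error term only tells you $k_A+k_B\geqp n$. But this means your argument proves nothing in the range $n\leqp k_A+k_B\le 2(1-\delta)n$, which is exactly the content of the theorem beyond the classical G\'acs--K\"orner statement. You acknowledge this honestly, but the paragraph you offer for the remaining regime is not a proof: the claim that ``a string computable with only $k_A$ bits of help from each of exponentially many pairwise almost-independent lines must be short'' is precisely the hard statement, and you neither prove it nor reduce it to something proved elsewhere. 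Invoking the stochasticity machinery of~\cite{muchnik-romash,razenshteyn-2011} here is circular (those results build on Muchnik's, not the other way around), and ``defer to~\cite{muc:j:commoninfo}'' is not a step in a proof of~\cite{muc:j:commoninfo}.

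There is also a structural concern: the spectral gap of the line--point graph is $\Theta(\sqrt{\text{degree}})$, and that is exactly why your error-term case degenerates to $k_A+k_B\geqp n$. No amount of cleverness with \emph{this} graph will push the mixing-lemma argument past that threshold; Muchnik's full theorem needs either a different combinatorial object (one whose ``rectangle edge counts'' are controlled more tightly than spectral expansion gives) or a genuinely different counting argument over the candidates $w$. Your write-up would be accurate if presented as a proof of the special case $k_A+k_B<(1-\delta)n$, with a clear statement that the general case requires the construction in~\cite{muc:j:commoninfo}.
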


I.~Razenshteyn proposed a more constructive version of Theorem~\ref{thm:muchnik}. Though we cannot  algorithmically construct a pair $(x,y)$ satisfying Theorem~\ref{thm:muchnik}  (there is obviously no short algorithmic description of any object of high Kolmogorov complexity), we can immerse such a pair in a large constructive set where the majority of elements have the desired property:

\begin{theorem} \cite{razenshteyn-2011}\label{thm:razenshteyn}
For every $\delta>0$ there exists an algorithm that for all large enough $n$ generates some list of pairs $S_n$ such that 
for the majority of $(x,y)\in S_n$ 
$C(x)\eqp 2n$, $C(y)\eqp 2n$, $C(x,y)\eqp  3n$, and \emph{no} triple $(k,k_A,k_B)$ satisfying
 $$
 \left\{
 \begin{array}{l}
 k_A \le  (1-\delta)n,\\
 k_B \le  (1-\delta)n,\\
 k_A+k_B +\delta n \le  k
 \end{array}
 \right.
 $$
belongs to $\ci(x,y)$.
\end{theorem}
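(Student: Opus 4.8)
The plan is to revisit the proof of Muchnik's theorem (Theorem~\ref{thm:muchnik}; see also \cite[Chapter~11]{suv:b:kolmenglish}) and to notice that it already produces a hard pair inside an explicit finite set, and in fact that ``most'' elements of that set are hard. Recall the shape of that proof: for each $n$ one fixes a left-regular bipartite graph $M_n$ with both vertex classes equal to $\{0,1\}^{2n}$ and left-degree $2^n$ (so $|E(M_n)|=2^{3n}$) enjoying a suitable pseudorandomness/mixing property, and one shows that a ``typical'' edge of $M_n$ is a hard pair. Two points make this listable. First, $M_n$ may be taken to be computable from $n$ with $C(M_n\mid n)=O(1)$: almost every graph of this density has the required property, the property is a statement about finitely many finite sets and is therefore decidable, so we may simply let $M_n$ be the lexicographically first candidate passing the test (the search is only required to terminate, not to be efficient). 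Second --- and this is Razenshteyn's observation --- the argument that a typical edge is hard is a union bound which in fact shows that the \emph{bad} edges form a small minority of $E(M_n)$. So I would set $S_n:=E(M_n)$, a set of size $2^{3n}$ generated by an algorithm from $n$, and verify that all but a vanishing fraction of $(x,y)\in S_n$ satisfy both conclusions of the theorem.

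The complexity-profile part is routine counting. The bounds $C(x),C(y)\le 2n+O(1)$ are trivial, and $C(x,y)\le 3n+O(\log n)$ holds for \emph{every} $(x,y)\in S_n$ because $M_n$ is computable from $n$ and the edge is named by its $3n$-bit index in an enumeration of $E(M_n)$. For the lower bounds: at most $2^{3n-c\log n+O(1)}$ edges have $C(x,y)<3n-c\log n$; at most $2^{2n-c\log n}$ left vertices have complexity below $2n-c\log n$, and each is incident to $2^n$ edges; and, since the pseudorandomness of $M_n$ keeps every right-degree below, say, $n\cdot 2^n$, at most $n\cdot 2^{3n-c\log n}$ edges touch a right vertex of complexity below $2n-c\log n$. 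Taking $c$ large enough, all but an $o(1)$ fraction of $(x,y)\in S_n$ satisfy $C(x)\eqp 2n$, $C(y)\eqp 2n$ and $C(x,y)\eqp 3n$.

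The heart is the common-information part: for all but an $o(1)$ fraction of $(x,y)\in S_n$, no triple $(k,k_A,k_B)$ with $k_A,k_B\le(1-\delta)n$ and $k\ge k_A+k_B+\delta n$ lies in $\ci(x,y)$. Since there are only $\poly(n)$ such triples, a union bound reduces this to fixing one triple and bounding the number of failing edges. An edge $(x,y)$ fails for $(k,k_A,k_B)$ iff there is a witness $w$ with $C(w)\ge k$, $C(w\mid x)\le k_A$ and $C(w\mid y)\le k_B$; writing $X_w=\{x':C(w\mid x')\le k_A\}$ and $Y_w=\{y':C(w\mid y')\le k_B\}$, the failing edges all lie in $\bigcup_{w:\,C(w)\ge k}(X_w\times Y_w)\cap E(M_n)$. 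One has $\sum_w|X_w|\le 2^{2n+k_A+1}$ and $\sum_w|Y_w|\le 2^{2n+k_B+1}$, because every vertex $v$ admits at most $2^{k_A+1}$ (resp.\ $2^{k_B+1}$) strings of conditional complexity at most $k_A$ (resp.\ $k_B$) given $v$; feeding these bounds, the mixing property of $M_n$, and the slack $\delta n$ into the rectangle count should give that the failing edges for a single triple are only a $2^{-\Omega(\delta n)}$ fraction of $E(M_n)$. This estimate is precisely the counting carried out inside the proof of Theorem~\ref{thm:muchnik}, now read as a bound on a fraction rather than as the extraction of one good element, and making it go through --- in particular controlling the contribution of witnesses $w$ for which $X_w$ or $Y_w$ is large, where a crude rectangle bound is \emph{not} enough and one must use the fine structure of $M_n$ together with the definition of $\ci$ --- is the one place where real work is needed and is the main obstacle of the proof. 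Granting it, a final union bound over the $\poly(n)$ triples and over the $o(1)$ exceptional edges of the profile step shows that a strict majority of $(x,y)\in S_n$ meet all requirements, which is the statement of the theorem.
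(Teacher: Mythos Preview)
The paper does not give a proof of this theorem; it is cited from \cite{razenshteyn-2011} (with the remark that Razenshteyn states it only for the symmetric case $k_A=k_B$ but that his proof extends) and is then used as a black box in the proof of Theorem~\ref{thm:comm-complexity}. So there is nothing in the paper to compare your proposal against.

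That said, your plan is the right one and matches the strategy of the cited reference and of \cite[Chapter~11]{suv:b:kolmenglish}: take the computable mixing graph $M_n$ underlying Muchnik's argument (found by brute-force search for the lexicographically first graph passing a decidable combinatorial test, so that $C(M_n\mid n)=O(1)$), let $S_n$ be its edge set, and read the counting argument as a bound on the \emph{fraction} of bad edges rather than as the extraction of a single good one. Your treatment of the complexity-profile step is correct. You also correctly isolate the rectangle count over witnesses $w$ as the one place where real work is required, and you are honest that you have not actually carried it out. What your sketch is missing there is the stratification of the witnesses: the naive bound on $\sum_w |(X_w\times Y_w)\cap E(M_n)|$ using only $\sum_w|X_w|$ and $\sum_w|Y_w|$ does not close, and one must group the witnesses by the approximate sizes of $X_w$ and $Y_w$, combine the mixing property of $M_n$ with the fact that a witness appearing in a short enumerable family must itself be simple (this is precisely where the constraint $C(w)\ge k$ bites), and then sum over the groups. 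Once that stratification is written out, your outline becomes essentially the published proof.
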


\begin{remark} Technically, \cite{razenshteyn-2011} formulates  Theorem~\ref{thm:razenshteyn} only in the symmetric setting $k_A=k_B$. However Razenshteyn's proof applies in the general case.
\end{remark}

\begin{remark}
When we claim in Theorem~\ref{thm:razenshteyn} that an algorithm \emph{generates} a set $S_n$, we mean that on the input $n$ the algorithm prints the list of elements of this set and stops. It is important that the enumeration of $S_n$ has a distinguishable completion. 

 The weak version of ``constructiveness'' that appear in Theorem~\ref{thm:razenshteyn} is called \emph{stochasticity}, \cite{shen1983concept}. An individual string $x$ is called $(\alpha,\beta)$-stochastic, if there exists an algorithm of complexity $<\alpha $ that prints the list of elements of a set $S$ such that $x\in S$ and $C(x)\ge \log |S|-\beta$. That is, if $\alpha$ and $\beta$ are small, then an $(\alpha,\beta)$-stochastic string $x$ is, so to say, a \emph{typical} element in a \emph{simple} set. 

The notion of $(\alpha,\beta)$-stochasticity   can be easily adjusted to deal with pairs of strings. Thus, Theorem~\ref{thm:razenshteyn} asserts that there exist $(O(\log n), O(\log n))$-stochastic pairs with  the worst possible extractability of the mutual information.
\end{remark}

\emph{On the uselessness of random oracles.}
In computability theory,  it is very common to study ``relativized'' computations, i.e.,  computations with an oracle. In the theory of Kolmogorov complexity,  a natural version of relativization  consists in adding  an oracle (which can be a finite or infinite sequence of bits) to the ``condition'' part  of all complexity terms. So, the property of extracting of $k$ bits of  common information from $x$ and $y$ with an oracle $r$ can be formulated as follows:  there exists a string $w$ such that $C(w\mid r) = k$, while $C(w\mid x,r)\approx0$ and $C(w\mid y,r)\approx 0$. More generally, we define $\ci(x,y\mid r)$ as the set of all triples $(k,k_A, k_B)$ such that
 $$
 \left\{
 \begin{array}{ccl}
 C(w \mid r) &\ge & k,\\
 C(w \mid x,r) &\le & k_A,\\
 C(w \mid y,r) &\le & k_B.
 \end{array}
 \right.
 $$

If a string $r$ is independent of $(x,y)$, i.e., $I(x,y:r)\approx 0$, then it seems natural to conjecture that all reasonable properties of Kolmogorov complexity concerning $(x,y)$ do not change 
 if we relativize them with  $r$.  
 In particular, it seems plausible that 
 the difference between the sets $\ci(x,y|r)$ and $\ci(x,y)$ must be negligible. Surprisingly, this conjecture remains unproven. However, a version of this conjecture for stochastic pairs is known to be true.
 \begin{theorem} \cite{muchnik-romash} \label{thm:muchnik-romash} 
 If  $(x,y)$ is $(O(\log n),O(\log n))$-stochastic, then the set $\ci(x,y\mid r)$ belongs to an $O(\log n)$-neighborhood of $\ci(x,y)$ and vice-versa.
\end{theorem}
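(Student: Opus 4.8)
The plan is to reduce the question about the individual pair $(x,y)$ to a purely combinatorial question about the finite set that witnesses its stochasticity, and then to observe that this combinatorial quantity does not see the oracle $r$. Throughout I use the (implicit) hypothesis that $r$ is independent of $(x,y)$, i.e. $I((x,y):r)\eqp 0$: as explained just before the theorem, this is exactly the situation of the conjecture of which Theorem~\ref{thm:muchnik-romash} is the ``stochastic instance'', and it is the regime in which the theorem gets applied ($r$ being fresh public randomness).

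\textbf{Relativized stochasticity.} Fix a finite set $S$ witnessing that $(x,y)$ is $(O(\log n),O(\log n))$-stochastic: $C(S)\eqp 0$, $(x,y)\in S$, and $C(x,y)\geqp\log|S|$. From $I((x,y):r)\eqp 0$ we get $C(x,y\mid r)\geqp C(x,y)\geqp\log|S|$ and $C(S\mid r)\leqp C(S)\eqp 0$, so the \emph{same} set $S$ witnesses that $(x,y)$ is stochastic in the world relativized by $r$. Hence both $\ci(x,y)$ and $\ci(x,y\mid r)$ are ``$\ci$-sets of a typical element of the simple set $S$'', the latter computed relative to $r$.

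\textbf{A combinatorial characterization.} The core step is the following oracle-free claim: for every finite set $S$ with $C(S)\eqp 0$ and every typical $(x,y)\in S$, the set $\ci(x,y)$ agrees up to $O(\log n)$ in each coordinate with a set $\mathcal{K}_S$ depending only on $S$ --- roughly, $(k,k_A,k_B)\in\mathcal{K}_S$ iff there is a colouring of a $(1-o(1))$-fraction of $S$ by about $2^{k}$ colours, produced from $S$ by a program of length $O(\log n)$, such that every colour class is, up to lower-order terms, a combinatorial rectangle whose two coordinate projections have sizes bounded by $2^{k_A}$ and $2^{k_B}$. The inclusion $\ci(x,y)\subseteq\mathcal{K}_S+O(\log n)$ comes from a witness $w$ of $(k,k_A,k_B)\in\ci(x,y)$: the short programs that compute $w$ from $x$ and from $y$, read off along $S$, organize a large portion of $S$ into such rectangles, and the resulting structure can be cleaned up and made canonical using only $O(\log n)$ extra bits given $S$ (typicality of $(x,y)$ rules out pathological colour classes). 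The converse $\mathcal{K}_S\subseteq\ci(x,y)+O(\log n)$ is the delicate direction: given an $O(\log n)$-simple colouring of $S$ one sets $w=\big(\text{colour of }(x,y),\ \text{position of }(x,y)\text{ inside the coordinate projections of its class}\big)$ and checks $C(w\mid x)\leqp k_A$, $C(w\mid y)\leqp k_B$ and $C(w)\geqp k$; the last inequality needs $(x,y)$ to be a \emph{typical}, hence not atypically compressible, element of $S$, and this is where one runs a counting argument --- the pairs of $S$ failing it lie in a subset of $S$ enumerable from $\langle S,k,k_A,k_B\rangle$ of size at most $|S|/\poly(n)$, so they miss the typical $(x,y)$. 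Since there are only $\poly(n)$ triples $(k,k_A,k_B)$ to treat (coordinates in $[0,O(n)]$ at $O(\log n)$ granularity), a single union bound over all of them costs only $O(\log n)$.

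\textbf{Oracle-insensitivity, and conclusion.} Carrying out the previous step in the relativized world gives $\ci(x,y\mid r)\eqp\mathcal{K}_S^{\,r}$, where $\mathcal{K}_S^{\,r}$ is the same quantity but with ``a program of length $O(\log n)$'' replaced by ``a program of length $O(\log n)$ given $r$''. But the colourings defining $\mathcal{K}_S$ can be chosen canonically --- by a fixed greedy procedure run on an enumeration of $S$ --- so they are already $O(\log n)$-simple given $S$ alone, and handing the algorithm the oracle $r$ on top of $S$ cannot make them simpler; the cardinalities $|S|$ and those of the coordinate projections do not involve $r$ at all. Hence $\mathcal{K}_S^{\,r}=\mathcal{K}_S$ up to $O(\log n)$ per coordinate, and combining the three steps,
\[
\ci(x,y\mid r)\ \eqp\ \mathcal{K}_S^{\,r}\ \eqp\ \mathcal{K}_S\ \eqp\ \ci(x,y),
\]
which is precisely the asserted mutual $O(\log n)$-containment. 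The main obstacle is the combinatorial characterization, and inside it the direction $\mathcal{K}_S\subseteq\ci(x,y)+O(\log n)$: one must show that a typical element of a simple set actually \emph{realizes} every common-information structure present in the set, which requires the counting lemma above together with uniform control of all slack terms over the $\poly(n)$ triples. By contrast, relativized stochasticity and the oracle-insensitivity of a counting quantity are comparatively routine; and the alternative, more hands-on route --- ``derandomize $r$'' by replacing $(x,y)$ with a uniformly random element of $S$, for which the fixed string $r$ is independent --- still relies on the same typization machinery to transfer the conclusion back to $(x,y)$, so it does not really circumvent this obstacle.
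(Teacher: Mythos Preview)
The paper does not prove Theorem~\ref{thm:muchnik-romash}; it is quoted from \cite{muchnik-romash} and used as a black box in the proof of Theorem~\ref{thm:comm-complexity}. So there is no in-paper argument to compare against. Your high-level plan --- show that stochasticity of $(x,y)$ relativizes to $r$, reduce $\ci(x,y)$ for a typical element of a simple $S$ to a purely combinatorial invariant $\mathcal{K}_S$, and observe that a counting invariant cannot depend on the oracle --- is indeed the skeleton of the argument in \cite{muchnik-romash}, and your first and third steps are correct and essentially routine.

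The genuine gap is in the middle step: your proposed $\mathcal{K}_S$ has the parameters attached to the wrong objects, and with that definition neither inclusion goes through. The condition $C(w\mid x)\le k_A$ says that each $x$ is compatible with at most $\approx 2^{k_A}$ values of $w$; it does \emph{not} bound the number of $x$'s compatible with a given $w$. So ``colour classes are rectangles $A_c\times B_c$ with $|A_c|\le 2^{k_A}$, $|B_c|\le 2^{k_B}$'' is the wrong shape. This bites concretely in your $\mathcal{K}_S\subseteq\ci(x,y)$ direction: if the colour classes merely partition $S$ into rectangles, then from $x$ alone you cannot recover the colour of $(x,y)$ (two pairs $(x,y_1),(x,y_2)$ can lie in different rectangles whose $A$-projections overlap), and you certainly cannot recover ``the position of $y$ in $B_c$''. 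Hence your witness $w=(\text{colour},\text{positions})$ does not satisfy $C(w\mid x)\leqp k_A$. The correct combinatorial object is closer to: a pair of simple multivalued maps $F,G$ (enumerable from $S$ and the parameters) with $|F(x')|\le 2^{k_A}$ and $|G(y')|\le 2^{k_B}$ for all $x',y'$, such that for most $(x',y')\in S$ the intersection $F(x')\cap G(y')$ meets a large label set; ``$C(w)\ge k$'' becomes, after typization, a lower bound on the number of labels actually hit as $(x',y')$ ranges over $S$. With the quantifiers flipped in this way both inclusions can be carried out, and your oracle-insensitivity step then applies verbatim. So the plan is sound, but $\mathcal{K}_S$ must be redefined before the ``delicate direction'' can even be formulated correctly.
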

Now we are ready to prove Theorem~\ref{thm:comm-complexity}.

\begin{proof}[Proof  of Theorem~\ref{thm:comm-complexity}]
We fix some $\epsilon,\delta_1,\delta_2>0$. Assume for the sake of the argument that in a protocol with communication complexity $<(1-\delta_1)n$,  Alice and Bob agree with a probability greater than $\epsilon$  on a common  secret key $z$ such that $C(z\mid t,r)>\delta_2 n$. We show that this assumption leads to a contradiction. We will provide a counterexample, i.e., a pair $(x,y)$ on which the protocol fails. First, we consider a degenerate case --- a communication protocol without randomness.

\emph{(a) The case of deterministic protocols.} At first, we focus on a very special case: we assume that the communication protocol is \emph{deterministic} (Alice and Bob do not use  the source of random bits). In this case $C(z\mid t,r)>\delta_2 n$ simplifies to $C(z\mid t)>\delta_2 n$.

  Denote $w:=\langle t,z\rangle$. We look at this $w$ as sort of ``common information'' extracted from $x$ and $y$. Let us estimate the parameters of this ``shared part'' of two strings.
 \begin{itemize}
 \item {} [a trivial calculation] 
 $C(w \mid x) \le^+ C(t)+ C(z \mid t,x) \le^+ (1-\delta_1)n $,
 \item {} [a similar calculation] 
 $C(w \mid y) \le^+ C(t)+ C(z \mid t,y) \le^+ (1-\delta_1)n $, 
 \item {} [a less trivial but also straightforward calculation] 
 \[
 \begin{array}{rcll} 
    C(w)  &\eqp &  C(w \mid x)  +  C(w\mid y)  + I(x:y ) - I(x:y \mid w) & [\text{Section~\ref{s:technical}, Lemma~\ref{l:calculations}(a)}]\\
      & \eqp  &  C(w \mid x)  +  C(w\mid y)  +  I(x:y ) - I(x:y \mid t)  + C(z\mid t)  & [\text{Section~\ref{s:technical},  Lemma~\ref{l:calculations}(b)}]\\
    & \ge^+ &  C(w \mid x)  +  C(w\mid y)   + C(z\mid t) \\
     & \ge^+  &  C(w \mid x)  +  C(w \mid y) + \delta_2 n. 
   \end{array}
   \]      
\end{itemize}
For the transition from the second to the third line we used the inequality
$I(x:y ) - I(x:y \mid t)\ge^+0$,
which can be false for arbitrary $x,y,t$.
Fortunately, it is true if $t$  is the transcript of a communication protocol with inputs $x$ and $y$ (Lemma~\ref{lemma-I(a:b:t)-is-positive-for-transcripts}).
\smallskip

We know from Theorem~\ref{thm:muchnik}  that the three inequalities above cannot hold together, if $(x,y)$ is a pair with a minimal $\ci(x,y)$. Thus, we get a contradiction.

\emph{(b) Randomized protocols.} Now we want to extend the result from the previous paragraph to randomized protocol. 
We assume now that there exists a communication protocol with public random bits $r$ such that  for most $r$ the protocol produces 
a transcript $t$ of size $<(1-\delta_1)n$ that permits to extract from $x$ and $y$ a common key $z$ such that $C(z \mid t,r)\ge \delta_2 n$. 

Denote again $w:=\langle t,z\rangle$. Then the triple $(C(w\mid r), C(w\mid x,r), C(w \mid y,r))$ belongs to $\ci(x,y \mid r)$. 
We can repeat the calculations from Case (a),  adding the string $r$  to the condition of all complexities:
 \begin{itemize}
 \item {}  $C(w\mid x,r) \le^+ (1-\delta_1)n $,
 \item {}  $C(w \mid y,r) \le^+ (1-\delta_1)n $, 
 \item {}  $C(w \mid r)    \ge^+   C(w \mid x,r)  +  C(w \mid y,r) + \delta_2 n$
  \end{itemize}
We may assume that $r$ is independent of $(x,y)$,  i.e., $I(x,y:r) \eqp 0$. If $(x,y)$ is a stochastic pair, then from Theorem~\ref{thm:muchnik-romash} it follows that an almost the same triple  of integers
  $$
  (C(w \mid r)\pm O(\log n), C(w \mid x,r)\pm O(\log n), C(w \mid y,r)\pm O(\log n))
  $$
must belong to $\ci(x,y)$. Due to Theorem~\ref{thm:razenshteyn} this is false for some stochastic $(x,y)$, and we get a contradiction.
\end{proof}
\section{Several technical lemmas}
\label{s:technical}
\subsection{Proof of Lemma~\ref{l:def-reduction}}
\label{s:strongext}


Let us fix a  $\tz$ and $t$ that satisfy  the relations~\eqref{e:tz}.   We introduce some notation: $$n' = |\tz|, \quad \tx = (t, n', \epsilon), \quad \alpha = C(\tz \mid \tx).$$

By the relations~\eqref{e:tz} and the fact that $n' \leq n$, we have, for some constant $c$,
\[
C(\tz \mid \tx) \geq |\tz| - c \log (n/\epsilon).
\]
Let $k =   |\tz| - c \log (n/\epsilon)$. Thus, $C(\tz \mid \tx) \geq k$.

\if01
We use strong extractors. Recall that a function  $E: \zo^n \times \zo^d \mapping \zo^m$ is a $(k, \epsilon)$ extractor if for any set $B \subseteq \zo^n$ of size $|B| \geq 2^k$, and for any $A \subseteq \zo^m$,
\[
\prob(E(U_B, U_{\zo^d}) \in A) \in \big(\frac{|A|}{M}- \epsilon, \frac{|A|}{M} + \epsilon \big),
\]
where $U_B, U_{\zo^d}$ denote independent random variables uniformly distributed  on $B$, and respectively $\zo^d$, and $M = 2^m$.

A function  $E: \zo^n \times \zo^d \mapping \zo^m$ is a $(k, \epsilon)$ \emph{strong} extractor if the function
$E_1$ defined by $E_1(x,y) = (y, E(x,y))$ is a $(k, \epsilon)$ extractor. Using the probabilistic method, we can obtain strong extractors with $m = k - 2 \log(1/\epsilon) - O(1)$ and $d = \log n  + 2 \log (1/\epsilon) + O(1)$~\cite[Theorem 6.17]{vad:b:pseudorand}.
\fi 

We use $E: \zo^{n'} \times \zo^d \mapping \zo^m$, a $(k- c_1, \epsilon)$- strong extractor (for some constant $c_1$ that will be specified later), having $m = k - c_1 - 2 \log(1/\epsilon) - O(1)$ and $d = \log n' + 2 \log (1/\epsilon) + O(1)$.

In the following lemmas, $s$ is a seed of the above strong extractor $E$  chosen uniformly at random  and  $z = E(\tz, s)$.
\begin{lemma}
\label{l:extract1}
With probability $1- O(\epsilon)$ over the choice of $s$,
\[
C(\tz,s \mid \tx, \alpha) \leq C(z,s \mid \tx) + C(\tz \mid \tx) - m + \log(1/\epsilon) + O(1).
\]
\end{lemma}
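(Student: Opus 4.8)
The plan is to read Lemma~\ref{l:extract1} as a ``compression via extractors'' statement: once we hand over $z=E(\tz,s)$, the seed $s$, the condition $\tx$, and the single integer $\alpha=C(\tz\mid\tx)$, the string $\tz$ should be recoverable using only about $\alpha-m$ further bits. First I would introduce the pigeonhole set
\[
B \;=\; \{\, w\in\zo^{n'} : C(w\mid\tx)\le \alpha \,\},
\]
which is enumerable given $\tx$ and $\alpha$, contains $\tz$, and satisfies $|B|<2^{\alpha+1}$ (at most $2^{\alpha+1}$ programs of length $\le\alpha$). I would also record a matching lower bound $|B|\ge 2^{k-c_1}$ for the constant $c_1$ fixed above: this is where the value of $c_1$ gets pinned down, and intuitively it holds because $\tz\in B$ has $C(\tz\mid\tx)=\alpha\ge k$, so if $B$ were too small every member of $B$ — in particular $\tz$ — would be describable by its index in the enumeration of $B$ and hence compressible relative to $\tx$. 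In this range the strong extractor $E:\zo^{n'}\times\zo^{d}\mapping\zo^{m}$ is guaranteed to behave well on the uniform distribution over $B$.

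The heart of the argument is the claim that for all but an $O(\epsilon)$ fraction of seeds $s$ the fiber $F_s=\{w\in B:E(w,s)=E(\tz,s)\}$ has size at most $O(|B|\,2^{-m}/\epsilon)\le O(2^{\alpha-m}/\epsilon)$. Since $E$ is a \emph{strong} extractor, the map $E_1(w,s)=(s,E(w,s))$ is a $(k-c_1,\epsilon)$-extractor, so $E_1(U_B,U_{\zo^d})$ is $\epsilon$-close to uniform on $\zo^{d+m}$. A standard ``excess mass'' computation then bounds by $O(\epsilon)$ the $E_1(U_B,U_{\zo^d})$-mass of the ``heavy'' pairs $(s,v)$ for which $|\{w\in B:E(w,s)=v\}|>2|B|2^{-m}$; unfolding $E_1$ this reads $\prob_{w\sim U_B,\,s}\big[\,|\{w'\in B:E(w',s)=E(w,s)\}|>2|B|2^{-m}\,\big]=O(\epsilon)$. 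The remaining (and fiddly) point is to transfer this from a \emph{random} $w\in B$ to the specific string $\tz$: if $\tz$ had an atypically large fraction of ``bad'' seeds, it would lie in a small, $(\tx,\alpha)$-enumerable subset of $B$ and would therefore be compressible relative to $\tx$, contradicting $C(\tz\mid\tx)=\alpha$; pushing the failure probability down to exactly $O(\epsilon)$ (rather than $O(\epsilon\cdot\poly(n))$) is the one place where the extractor's error parameter has to be taken a bit below $\epsilon$, which is what the stated values of $c_1$ and $d$ absorb. For every good $s$ we then recover $\tz$ from $(z,s,\tx,\alpha)$ by enumerating $B$, listing $F_s$, and naming the index of $\tz$ in $F_s$, which gives
\[
C(\tz\mid z,s,\tx,\alpha)\;\le\;\log|F_s|+O(1)\;\le\;\alpha-m+\log(1/\epsilon)+O(1).
\]

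Finally I would assemble the inequality with the chain rule. Because $z$ is computable from $(\tz,s)$ we have $C(\tz,s\mid\tx,\alpha)=C(\tz,z,s\mid\tx,\alpha)+O(1)$, and the Kolmogorov--Levin form of the chain rule gives $C(\tz,z,s\mid\tx,\alpha)\le C(z,s\mid\tx,\alpha)+C\big(\tz\mid z,s,\tx,\alpha,\,C(z,s\mid\tx,\alpha)\big)+O(1)$. The fiber-indexing bound above still controls the last term (it only uses extra conditioning) by $\alpha-m+\log(1/\epsilon)+O(1)$; moreover adding the number $\alpha$ to the condition can only decrease complexity, so $C(z,s\mid\tx,\alpha)\le C(z,s\mid\tx)+O(1)$. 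Putting these together and substituting $\alpha=C(\tz\mid\tx)$ yields the asserted bound
\[
C(\tz,s\mid\tx,\alpha)\;\le\;C(z,s\mid\tx)+C(\tz\mid\tx)-m+\log(1/\epsilon)+O(1).
\]
I expect the main obstacle to be precisely this probabilistic transfer of the extractor's ``most buckets are light'' property to the incompressible but not literally random string $\tz$, together with the bookkeeping needed to keep every loss at the level of $O(1)$ and $\log(1/\epsilon)$ rather than $O(\log n)$.
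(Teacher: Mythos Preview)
Your plan is essentially the paper's own argument, and the ingredients you identified---the set $B$, the bipartite extractor graph, a heavy/light dichotomy on the right, and a poor/non-poor dichotomy on the left settled by the incompressibility of $\tz$---are exactly what is used. The only substantive wrinkle is the threshold you chose for ``heavy.'' You announce the target fiber bound $|F_s|\le O(|B|\,2^{-m}/\epsilon)$ (which is correct, and is what produces the $\log(1/\epsilon)$ in the final inequality), but then you define heavy as $|\{w\in B:E(w,s)=v\}|>2|B|\,2^{-m}$. With that threshold, your excess-mass computation correctly gives $\prob_{w\sim U_B,\,s}[(s,E(w,s))\text{ heavy}]=O(\epsilon)$, but the transfer to the specific $\tz$ then fails: Markov's inequality applied to this average only tells you that the set $\mathrm{POOR}=\{w\in B:\prob_s[\text{heavy}]>c\epsilon\}$ has size $\Theta(|B|)$, which is far too large for membership of $\tz$ in $\mathrm{POOR}$ to contradict $C(\tz\mid\tx)=\alpha$.

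The paper's fix is to put the $1/\epsilon$ on the other side: declare $(s,v)$ heavy when its $B$-degree is at least $(1/\epsilon)\,|B|/2^m$. A direct edge count (no extractor needed at this step) then shows that the \emph{uniform} mass of heavy pairs on $\zo^d\times\zo^m$ is at most $\epsilon$. Now set $\mathrm{POOR}=\{w\in B:\prob_s[(s,E(w,s))\text{ heavy}]>2\epsilon\}$; if $|\mathrm{POOR}|\ge 2^{k-c_1}$ one applies the $(k-c_1,\epsilon)$-extractor to the source $U_{\mathrm{POOR}}$ with test set $\mathrm{HEAVY}$ and reaches a contradiction, so $|\mathrm{POOR}|<2^{k-c_1}$. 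Since $\mathrm{POOR}$ is enumerable from $(\tx,\alpha)$, membership of $\tz$ would force $C(\tz\mid\tx,\alpha)<k-c_1$, contradicting $C(\tz\mid\tx,\alpha)\ge C(\tz\mid\tx)-O(1)\ge k-O(1)$ for suitable $c_1$. Hence for a $1-2\epsilon$ fraction of seeds the fiber has size at most $(1/\epsilon)|B|/2^m\le 2^{\alpha-m+\log(1/\epsilon)+1}$, and your chain-rule assembly finishes the proof exactly as you wrote. In short: same proof, but the extractor must be invoked on the \emph{poor} set rather than on $B$, and the $1/\epsilon$ belongs in the heaviness threshold, not in the poverty threshold.
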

\begin{lemma}
\label{l:extract2}
With probability $1- O(\epsilon)$ over the choice of $s$,  
\[
C(\tz,s \mid \tx, \alpha) \geq C(\tz \mid \tx, \alpha) + |s| - O(\log(1/\epsilon)) - O(1).
\]
\end{lemma}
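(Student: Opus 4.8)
The plan is to exploit that the seed $s$ is drawn uniformly at random and independently of $\tz$, $\tx$ and $\alpha$, so it should contribute essentially its full length $|s|=d$ on top of the complexity of $\tz$, everything conditioned on $(\tx,\alpha)$. The quantitative input is the elementary incompressibility bound for uniform strings: for every fixed string $u$ there are fewer than $2^{d-\log(1/\epsilon)}=\epsilon\cdot 2^{d}$ seeds $s\in\zo^{d}$ with $C(s\mid u)<d-\log(1/\epsilon)$. Applying this with $u=\langle\tz,\tx,\alpha\rangle$, which does not depend on $s$, gives that with probability at least $1-\epsilon$ over the choice of $s$
\[
C(s\mid \tz,\tx,\alpha)\ \ge\ |s|-\log(1/\epsilon)-O(1).
\]

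Next I would combine this with the Kolmogorov--Levin symmetry of information inequality, used in the form
\[
C(\tz,s\mid\tx,\alpha)\ \ge\ C(\tz\mid\tx,\alpha)+C(s\mid\tz,\tx,\alpha)-(\text{slack}),
\]
which together with the previous display yields exactly the claimed bound $C(\tz,s\mid\tx,\alpha)\ge C(\tz\mid\tx,\alpha)+|s|-O(\log(1/\epsilon))-O(1)$. The reason the value $\alpha=C(\tz\mid\tx)$ was carried inside the conditioning in the first place is precisely to keep this slack down to $O(\log(1/\epsilon))$ rather than the naive $O(\log n)$: with $\alpha$ at hand the quantity $C(\tz\mid\tx,\alpha)$ is pinned into the interval $[\alpha-O(\log n),\,\alpha+O(1)]$, so its exact value is cheap to describe on top of $\alpha$, and the ``small side'' of the chain rule here is the seed $s$, whose conditional complexity never exceeds $d=O(\log(n/\epsilon))$; feeding these facts into the precise (Kolmogorov--Levin) chain rule keeps the correction of order $\log(1/\epsilon)$.

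An equivalent, more self-contained route argues by direct counting: assuming for contradiction that more than an $\epsilon$-fraction of seeds $s$ violates the inequality by some slack $m_{0}$, one enumerates, given $(\tx,\alpha)$ and the value $\beta:=C(\tz\mid\tx,\alpha)$, all pairs $(w,s')$ with $C(w,s'\mid\tx,\alpha)<\beta+d-m_{0}$; the strings $w$ having more than $\epsilon\cdot 2^{d}$ such partners form an enumerable set of size $<2^{\beta-m_{0}+\log(1/\epsilon)+O(1)}$; since $\tz$ would belong to it, this contradicts $C(\tz\mid\tx,\alpha)=\beta$ once $m_{0}$ exceeds $\log(1/\epsilon)$ by a large enough constant (the handful of bits needed to transmit $\beta$ given $\alpha$ being negligible, again because $\alpha$ confines $\beta$ to a narrow range). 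I expect the only genuine difficulty to be this bookkeeping of the additive $O(\log)$ corrections --- guaranteeing the final loss is $O(\log(1/\epsilon))$ and not $O(\log(n/\epsilon))$ --- and the presence of $\alpha$ in the conditioning is exactly the device that makes it go through.
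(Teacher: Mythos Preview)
Your second route (the direct counting argument) is correct and is essentially what the paper does. You fix the deficit threshold $m_{0}$, enumerate the pairs $(w,s')$ of joint complexity below $\beta+d-m_{0}$, and observe that the set of $w$'s with more than $\epsilon\cdot 2^{d}$ partners has size below $2^{\beta-m_{0}+\log(1/\epsilon)}$; since $\beta=C(\tz\mid\tx,\alpha)$ differs from $\alpha$ by only $O(1)$ (this is Lemma~\ref{l:extract3}, so your interval should read $[\alpha-O(1),\alpha+O(1)]$ rather than $[\alpha-O(\log n),\alpha+O(1)]$), the enumeration needs only $O(1)$ bits on top of $(\tx,\alpha)$, and the contradiction follows once $m_{0}>\log(1/\epsilon)+O(1)$. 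The paper organizes the same counting differently: it first restricts to seeds $s$ with $C(s\mid\tx,\alpha)\ge|s|-\log(1/\epsilon)$, sets $w$ to be the deficit for that particular $s$, runs the two-sided Kolmogorov--Levin count with the threshold $u=C(\tz,s\mid\tx,\alpha)$ expressed via $\alpha$, $|s|$ and $w$, and obtains the self-referential bound $w-4\log w\le\log(1/\epsilon)+O(1)$. Your version avoids the $\log w$ bookkeeping at the price of fixing $m_{0}$ in advance; both are fine.

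Your first route, however, is not a proof as written. Invoking ``the chain rule'' as a black box gives only
\[
C(\tz,s\mid\tx,\alpha)\ \ge\ C(\tz\mid\tx,\alpha)+C(s\mid\tz,\tx,\alpha)-O\bigl(\log C(\tz,s\mid\tx,\alpha)\bigr),
\]
and $C(\tz,s\mid\tx,\alpha)$ can be of order $n$ (since $|\tz|\approx I(x:y)$), so the slack is $O(\log n)$, not $O(\log(1/\epsilon))$. Neither the smallness of $|s|$ nor the presence of $\alpha$ improves the \emph{off-the-shelf} Kolmogorov--Levin bound. Your intuition that $\alpha$ is there precisely to tame this slack is exactly right, but exploiting it requires reopening the counting argument and using that the threshold $u$ is determined by $(\tx,\alpha)$ up to the unknown deficit --- which is precisely your second route (or the paper's). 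So the first paragraph should be read as motivation, and the second paragraph as the actual proof.
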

\begin{lemma}
\label{l:extract3}
$C(\tz \mid \tx, \alpha) \geq  C(\tz \mid \tx) - O(1)$.
\end{lemma}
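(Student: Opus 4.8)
The plan is to show that conditioning on $\alpha=C(\tz\mid\tx)$, a number of at most $\log n+O(1)$ bits, can only decrease the complexity of $\tz$ given $\tx$ by an additive constant, rather than by the $O(\log n)$ one would naively expect. The point is that $\alpha$ is \emph{determined} by $\tx$ and $\tz$ — it is a function of them — so it should carry essentially no information beyond $\tx$ once $\tz$ is also available, and in the direction we need, it should be ``self-locating.''

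Here is the argument I would carry out. Consider the following semi-decidable process, given $\tx$ as input: dovetail all programs (with $\tx$ in the conditional) and, for each string $u$ that gets printed, record the length of the shortest program found so far that outputs $u$ from $\tx$. For a target value $a$, the set $G_a=\{u : C(u\mid\tx)\le a\}$ is enumerable given $\tx$ and $a$; moreover $G_{a-1}\subsetneq G_a$ exactly when some string of complexity precisely $a$ appears. Now $\tz\in G_\alpha\setminus G_{\alpha-1}$ by the definition of $\alpha=C(\tz\mid\tx)$. The key observation is that from $\tx$ alone we can enumerate $G_0\subseteq G_1\subseteq G_2\subseteq\cdots$, and the value $\alpha$ is recoverable as soon as we know \emph{which} enumerated string is $\tz$: namely, $\alpha=C(\tz\mid\tx)$ is computed by running the dovetailing until a program of length $\le a$ for $\tz$ shows up, for increasing $a$. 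Hence $C(\alpha\mid\tx,\tz)=O(1)$.

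Given this, the lemma follows from a short chain-rule computation. We have
\[
C(\tz\mid\tx)\ \le^+\ C(\tz,\alpha\mid\tx)\ =^+\ C(\alpha\mid\tx)+C(\tz\mid\tx,\alpha),
\]
but this inequality goes the wrong way and loses $C(\alpha\mid\tx)=O(\log n)$. Instead I use the chain rule in the other order:
\[
C(\tz,\alpha\mid\tx)\ =^+\ C(\tz\mid\tx)+C(\alpha\mid\tx,\tz)\ =\ C(\tz\mid\tx)+O(1),
\]
using $C(\alpha\mid\tx,\tz)=O(1)$ from the previous paragraph, and also
\[
C(\tz\mid\tx,\alpha)\ \le\ C(\tz,\alpha\mid\tx)+O(1).
\]
Combining the two gives $C(\tz\mid\tx,\alpha)\le C(\tz\mid\tx)+O(1)$, which is exactly the claimed inequality (rearranged, $C(\tz\mid\tx,\alpha)\ge C(\tz\mid\tx)-O(1)$ is the nontrivial one and is in fact trivial: adding information to the condition cannot increase complexity, so $C(\tz\mid\tx,\alpha)\le C(\tz\mid\tx)+O(1)$ is what needs proof — wait, re-examine the direction).

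Let me restate cleanly, since the direction matters. The lemma asserts $C(\tz\mid\tx,\alpha)\ge C(\tz\mid\tx)-O(1)$, i.e., putting the \emph{extra} datum $\alpha$ into the condition does not help compress $\tz$ by more than a constant. This is the content that needs the self-locating trick: from $\tx$ and $\alpha$ we get an enumeration of $G_\alpha=\{u:C(u\mid\tx)\le\alpha\}$, and a shortest program for $\tz$ relative to $(\tx,\alpha)$ amounts to an index into this enumeration together with $O(1)$ overhead; but the same index into the enumeration of $G_0,G_1,\dots$ generated from $\tx$ alone also specifies $\tz$ from $\tx$, since one can always recompute $\alpha$ once $\tz$ is in hand ($C(\alpha\mid\tx,\tz)=O(1)$). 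Formally, $C(\tz\mid\tx)\le^+ C(\tz\mid\tx,\alpha)+C(\alpha\mid\tx,\tz)= C(\tz\mid\tx,\alpha)+O(1)$, which is precisely the desired bound. The main (and only) obstacle is pinning down that $C(\alpha\mid\tx,\tz)=O(1)$ — that $\alpha$ is computable from its defining pair — which is immediate from the definition of Kolmogorov complexity via dovetailing; everything else is the standard chain-rule bookkeeping with $O(1)$ slack.
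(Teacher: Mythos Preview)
Your proposal contains a genuine error. The crux of your argument is the claim that $C(\alpha\mid \tx,\tz)=O(1)$, justified by saying that ``$\alpha$ is computable from its defining pair'' via dovetailing. This is false: Kolmogorov complexity is not computable. Dovetailing over programs with condition $\tx$ produces, at each finite stage, only an \emph{upper bound} on $C(\tz\mid\tx)$; it never halts with a certificate that no shorter program exists. So there is no algorithm that, given $\tz$ and $\tx$, outputs $\alpha=C(\tz\mid\tx)$, and in general $C(\alpha\mid\tx,\tz)$ is not $O(1)$.

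Even setting that aside, the displayed inequality you rely on,
\[
C(\tz\mid\tx)\ \le\ C(\tz\mid\tx,\alpha)+C(\alpha\mid\tx,\tz)+O(1),
\]
is not a form of the chain rule and is false in general: take $\tx$ empty, $\tz$ a random $n$-bit string, and replace $\alpha$ by $\tz$ itself; the right-hand side is $O(1)$ while the left-hand side is $n$. The chain rule gives $C(\tz\mid\tx)=C(\tz\mid\tx,\alpha)+\big[C(\alpha\mid\tx)-C(\alpha\mid\tx,\tz)\big]\pm O(\log n)$, and the bracket is $I(\tz:\alpha\mid\tx)\ge -O(\log n)$; this yields only the trivial direction, and in the other direction only $C(\tz\mid\tx)\le C(\tz\mid\tx,\alpha)+C(\alpha\mid\tx)+O(\log n)$, which loses $O(\log n)$ rather than $O(1)$.

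The paper does not reprove the lemma from scratch; it simply invokes the classical (relativized) fact $C(u\mid v, C(u\mid v))\ge C(u\mid v)-O(1)$, referenced earlier in the proof of Claim~\ref{c:poor} (the first inequality in~\eqref{e:cc}), with a pointer to \cite[Exercise~44]{suv:b:kolmenglish}. The standard proof of that fact is \emph{not} via computing $C(u\mid v)$ from $(u,v)$; it is a short bootstrapping argument: if $p$ is a shortest program for $\tz$ given $(\tx,\alpha)$, describe $\tz$ from $\tx$ alone by first writing $d:=\alpha-|p|$ in self-delimiting form and then $p$; from this one recovers $|p|$, hence $\alpha=|p|+d$, and then $\tz=U(p,\tx,\alpha)$. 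This shows $\alpha=C(\tz\mid\tx)\le |p|+2\log d+O(1)$, i.e.\ $d\le 2\log d+O(1)$, forcing $d=O(1)$.
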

\begin{lemma}
\label{l:extract4}
For every $s$,  
$C(z,s \mid  \tx) \leq  |s| + C(z \mid s, \tx) + O(1)$.
\end{lemma}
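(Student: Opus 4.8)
\textbf{Proof proposal for Lemma~\ref{l:extract4}.}
The plan is to exhibit an explicit short description of the pair $\langle z,s\rangle$ from the condition $\tx$. The only thing one has to check is that, given $\tx$, one can recognize where the encoding of $s$ ends; everything else is the standard ``concatenate a self-delimited part with a plain part'' trick.

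First I would recall that $\tx=(t,n',\epsilon)$ contains $n'$ and $\epsilon$ as explicit components, so the seed length $d=|s|=\log n'+2\log(1/\epsilon)+O(1)$ of the strong extractor $E:\zo^{n'}\times\zo^{d}\mapping\zo^{m}$ is a computable function of $\tx$. Consequently there is no need to delimit $s$: a machine that has $\tx$ can first compute $d$ and then read off exactly the first $d$ bits of its program as the string $s$.

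Next I would describe the decompressor $M$ witnessing the bound. On auxiliary input $\tx$ and program $p=s\,q$ (where $s$ is the first $d$ bits, determined as above, and $q$ is the remaining suffix), $M$ runs the reference universal machine $U$ on $q$ with auxiliary input $\langle s,\tx\rangle$; if $U$ halts with output $z$, then $M$ outputs $\langle z,s\rangle$. Taking $q$ to be a shortest program for $z$ given $\langle s,\tx\rangle$, which has length $C(z\mid s,\tx)$, the total program length is $d+C(z\mid s,\tx)$, and hence
\[
C(z,s\mid\tx)\ \le\ |s|+C(z\mid s,\tx)+O(1),
\]
where the $O(1)$ absorbs the cost of simulating $M$ on the fixed universal machine.

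There is essentially no obstacle here; the lemma is a bookkeeping statement. The one point that deserves an explicit sentence is precisely why $s$ need not be prefix-encoded, namely that $|s|$ is computable from $\tx$ because $\tx$ already carries $n'$ and $\epsilon$. (If one preferred not to rely on that, one could instead prepend a self-delimited encoding of $d$ of length $O(\log d)=O(\log n)$, but then the bound would only hold up to an additive $O(\log n)$ term, so the explicit form stated in the lemma really does use that $d$ is recoverable from $\tx$.)
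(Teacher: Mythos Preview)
Your proof is correct and follows exactly the same approach as the paper: since $\tx$ contains $n'$ (and $\epsilon$), the seed length $d=|s|$ is computable from $\tx$, so no delimiter is needed and one can simply concatenate $s$ with a shortest program for $z$ given $s,\tx$. The paper's own proof is a one-sentence version of precisely this argument.
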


Combining the four  lemmas, we obtain that, with probability $1- O(\epsilon)$, $C(z \mid s, \tx) \geq m - O(\log(1/\epsilon)) - O(1)$.  Since $|z| = m = I(x:y) - O(\log n) - O(\log(1/\epsilon))$ and taking into account that $\tx$ includes $t$, 
we obtain $C(z \mid  t, s) \geq |z| -  O(\log(1/\epsilon)) - O(1)$, and  that the length of $z$ is $I(x:y) - O(\log n) - O(1/\epsilon)$.

It remains to prove the lemmas.
\medskip

\begin{proof}[Proof of Lemma~\ref{l:extract1}] Recall that the function $E_1 : \zo^{n'} \times \zo^d \mapping \zo^d \times \zo^m$ defined by $E_1(w,s) = (s, E(w,s))$ is a $(k-c_1, \epsilon)$ extractor.
We view $E_1$ also as a bipartite graph, where the set of left nodes is $\zo^{n'}$, the set of right nodes is $\zo^d \times \zo^m$, and $(w, (s,z))$ is an edge if $E(w,s) = z$ (or, equivalently, $E_1(w,s) = (s,z)$). 
Let
\[
B = \{w \in \zo^{n'} \mid C(w \mid \tx) \leq C(\tz \mid \tx\}.
\]
We say that a right node $(s,z)$ in the above graph is \emph{heavy}, if its $B$-degree is $\geq (1/\epsilon)\frac{|B|}{M}$, where $M = 2^m$
 (the $B$-degree of a node is the number of edges coming from $B$ into the node).  Let $\rm{HEAVY}$ be the set of nodes that are heavy. By counting  from the left and also from the right
the edges that go out  from $B$, we get
\[
|\rm{HEAVY}| \cdot (1/\epsilon) \cdot \frac{|B|}{M} \leq |B| \cdot D,
\]
and therefore,
\[
\frac{|\rm{HEAVY}|}{M\cdot D} \leq \epsilon.
\]
A left node $w \in B$ is \emph{poor} if for more than $2 \epsilon$ fraction of $s \in \zo^d$, $E_1 (w,s)$ is heavy. 
\begin{claim}
\label{c:count}
The number of poor nodes is less than $2^{k-c_1}$.
\end{claim}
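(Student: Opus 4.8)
The plan is a short double-counting contradiction that re-uses the extractor property of $E_1$, this time applied to the set of poor nodes rather than to $B$. Let $P$ denote the set of poor left nodes; recall that $P \subseteq \zo^{n'}$, and that the excerpt has already established $|\rm{HEAVY}| / (M \cdot D) \le \epsilon$, that is, $\rm{HEAVY}$ occupies at most an $\epsilon$-fraction of the right vertex set $\zo^d \times \zo^m$.

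Suppose, for contradiction, that $|P| \ge 2^{k-c_1}$. Since $|P| \ge 2^{k-c_1}$, the $(k-c_1,\epsilon)$-extractor $E_1$ may be applied with source $U_P$ uniform on $P$: for every $A \subseteq \zo^d \times \zo^m$ we have $\prob\bigl(E_1(U_P, U_{\zo^d}) \in A\bigr) < |A|/(D \cdot M) + \epsilon$. Instantiating this with $A = \rm{HEAVY}$ and using $|\rm{HEAVY}|/(D\cdot M) \le \epsilon$ gives $\prob\bigl(E_1(U_P, U_{\zo^d}) \in \rm{HEAVY}\bigr) < \epsilon + \epsilon = 2\epsilon$. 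On the other hand, every $w \in P$ is poor, so $\prob_s\bigl(E_1(w,s) \in \rm{HEAVY}\bigr) > 2\epsilon$ for each such $w$; averaging this inequality over $w$ drawn uniformly from $P$ yields $\prob\bigl(E_1(U_P, U_{\zo^d}) \in \rm{HEAVY}\bigr) > 2\epsilon$, a contradiction. Hence $|P| < 2^{k-c_1}$, which is the Claim.

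The only point that needs care is the bookkeeping of the two occurrences of $\epsilon$: the threshold ``$2\epsilon$'' in the definition of a poor node is chosen precisely so that it exceeds the bound ``(uniform measure of $\rm{HEAVY}$) $+$ (extractor error) $\le 2\epsilon$'' coming from the extractor side, so the strict inequalities in the extractor definition and in the notion of ``poor'' are exactly what makes the contradiction go through. The role of the constant $c_1$ is to fix $2^{k-c_1}$ as the size threshold at which this counting closes, leaving $c_1$ bits of slack to be spent when the Claim is applied inside the proofs of Lemmas~\ref{l:extract1}--\ref{l:extract4}. I do not anticipate any genuine obstacle in this step; the substantive complexity accounting happens in those four lemmas.
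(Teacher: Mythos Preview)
Your argument is correct and is essentially identical to the paper's own proof: both assume $|P|\ge 2^{k-c_1}$, invoke the $(k-c_1,\epsilon)$-extractor property of $E_1$ on the uniform distribution over $P$ with $A=\mathrm{HEAVY}$, and contrast the resulting bound $<2\epsilon$ with the averaged lower bound $>2\epsilon$ coming from the definition of ``poor.'' Your write-up is simply more explicit about the averaging step and the handling of the two $\epsilon$'s.
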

\begin{proof}
Let $\rm{POOR}$ be the set of poor nodes.
Since
\[
\begin{array}{rl}
\prob(E_1 (U_{\rm POOR}, U_{\zo^d}) \in {\rm HEAVY}) & > 2 \epsilon \\
& \geq  \frac{|\rm{HEAVY}|}{M\cdot D}  + \epsilon,
\end{array}
\]
the number of poor nodes has to be less than $2^{k-c_1}$ because otherwise the extractor property of $E_1$ would be contradicted.
\end{proof}
\begin{claim}
\label{c:poor}
$\tz$ is not poor.
\end{claim}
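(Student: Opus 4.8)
The plan is to prove Claim~\ref{c:poor} by an incompressibility argument, in the same spirit as Claim~\ref{c:count} but now applied to $\tz$ itself. Suppose, for the sake of contradiction, that $\tz$ is poor, so that $\tz$ belongs to the set $\rm{POOR}$ of poor left‑nodes. By Claim~\ref{c:count} we already know $|\rm{POOR}| < 2^{k-c_1}$. The additional ingredient I would establish is that $\rm{POOR}$ is \emph{enumerable} from $\tx$ together with only $O(\log(n/\epsilon))$ bits of side information; granting this, every element of $\rm{POOR}$, and in particular $\tz$, is describable relative to $\tx$ by that side information plus its index in the enumeration, so that $C(\tz\mid\tx) \le (k-c_1) + O(\log(n/\epsilon))$. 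Choosing $c_1$ to be a sufficiently large multiple of $\log(n/\epsilon)$ — which does not disturb the earlier parameter choices, since $m = k - c_1 - 2\log(1/\epsilon) - O(1)$ still equals $I(x:y) - O(\log(n/\epsilon))$ — this contradicts the inequality $C(\tz\mid\tx) \ge k$ established before the lemmas. The contradiction shows $\tz$ is not poor.

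It therefore remains to argue the enumerability of $\rm{POOR}$. The side information I would use is the integer $\alpha = C(\tz\mid\tx)$, which satisfies $\alpha \le |\tz| + O(1) \le n + O(1)$ and so can be written with $O(\log n)$ bits (in fact fewer, since $\alpha - k$ is at most an additive $O(\log(n/\epsilon))$ term). Given $\tx$ and $\alpha$ one can effectively enumerate $B = \{w\in\zo^{n'} \mid C(w\mid\tx)\le\alpha\}$, and, since the bipartite graph of $E_1$ is computable from $n'$, $d$, $m$ (all determined by $\tx$ and $c_1$), one maintains running counts of the $B$‑degrees of the right‑nodes, thereby recognizing heavy right‑nodes, then poor left‑nodes, and listing them. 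The one delicate point is that the threshold defining a heavy node, $(1/\epsilon)|B|/M$, refers to the \emph{final} value $|B|$, which is not computable from $\tx$ and $\alpha$; I would deal with this by noting that the inequalities feeding into Claim~\ref{c:count} involve $|B|$ only through ratios in which it cancels, so it suffices to process the enumeration of $B$ in stages and, at the stage in which the $t$‑th element of $B$ appears, to flag as heavy those right‑nodes whose current $B$‑degree already reaches $(1/\epsilon)\,t/M$ — a node that is heavy with respect to the whole of $B$ is in particular flagged at the stage $t=|B|$. Making this bookkeeping fully rigorous while keeping the side information down to $O(\log(n/\epsilon))$ bits is the step I expect to be the main obstacle; this is precisely where Claim~\ref{c:poor} differs in flavour from Claim~\ref{c:count}, the latter being a pure counting argument that does not require the set $\rm{POOR}$ to be effectively enumerable.

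Once Claim~\ref{c:poor} is in hand, the rest is routine: for at least a $1-2\epsilon$ fraction of seeds $s$ the node $(s,E(\tz,s))$ is not heavy, hence has fewer than $(1/\epsilon)|B|/M$ preimages in $B$, so given $\tx$, $\alpha$, $s$ and $z=E(\tz,s)$ the string $\tz$ is pinned down among these preimages by an index of $\log\!\big((1/\epsilon)|B|/M\big) \le \log(1/\epsilon) + C(\tz\mid\tx) - m + O(1)$ bits (using $|B|<2^{\alpha+1}$ to size the index field from $\alpha$ alone). This is exactly the estimate needed in Lemma~\ref{l:extract1}, and combining it with Lemmas~\ref{l:extract2}--\ref{l:extract4} yields Lemma~\ref{l:def-reduction}.
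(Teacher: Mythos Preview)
Your approach is the paper's approach: assume $\tz\in{\rm POOR}$, use Claim~\ref{c:count} to bound $|{\rm POOR}|<2^{k-c_1}$, describe $\tz$ by its index in an enumeration of ${\rm POOR}$ relative to $(\tx,\alpha)$, and contradict $C(\tz\mid\tx)\ge k$ via the standard fact $C(\tz\mid\tx,\alpha)\ge C(\tz\mid\tx)-O(1)$. The paper's write-up is terse to the point of carelessness (it even writes ``$B$'' where it must mean ``${\rm POOR}$'') and indeed glosses over exactly the point you raise: the heaviness threshold $(1/\epsilon)|B|/M$ depends on the non-computable quantity $|B|$.

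That said, the obstacle has a much simpler resolution than your staging proposal. Observe that Lemma~\ref{l:extract1} immediately replaces $|B|$ by the computable upper bound $2^{\alpha+1}$ in its final inequality anyway, so nothing is lost by \emph{defining} heaviness via the threshold $(1/\epsilon)\,2^{\alpha+1}/M$ from the outset. With this threshold the counting $|{\rm HEAVY}|\cdot(1/\epsilon)2^{\alpha+1}/M\le |B|\cdot D\le 2^{\alpha+1}D$ still yields $|{\rm HEAVY}|/(MD)\le\epsilon$, so Claim~\ref{c:count} is unchanged; and now both ${\rm HEAVY}$ and ${\rm POOR}$ are genuinely enumerable from $(\tx,\alpha)$, since the threshold is fixed and $B$-degrees only grow as $B$ is enumerated. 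Claim~\ref{c:poor} then follows with $c_1$ an honest constant, as the paper asserts.

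Your staging idea, as written, does not obviously work: for $t\ll\epsilon M$ the running threshold $(1/\epsilon)t/M$ is below $1$, so every right node adjacent to the first $\lfloor\epsilon M\rfloor$ enumerated elements of $B$ gets flagged, and bounding the \emph{union} of the flagged sets over all stages (rather than the flagged set at a single stage) is exactly the extra work you would have to do. Inflating $c_1$ to $O(\log(n/\epsilon))$ would absorb this, but is unnecessary given the one-line fix above.
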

\begin{proof}
 Suppose $\tz$ is poor.  Then $C(\tz \mid \tx, \alpha) \leq k - c_1$, because, given $\tx, \alpha$, the string $tz$ is described by its index in a canonical enumeration of $B$
and the size of $B$ is bounded by $2^{k-c_1}$ (by Claim~\ref{c:count}). It is  known that for any $z', t'$, $C(z' \mid t', C(z' \mid t')) \geq C(z' \mid t') - O(1)$ (this is the relativized version of the known inequality $C(z' \mid C(z')) \geq C(z') - O(1)$, see~\cite[Exercise 44]{suv:b:kolmenglish}). It follows that
\begin{equation}
\label{e:cc}
C(\tz \mid \tx) \leq C(\tz \mid \tx, \alpha) +O(1) \leq k - c_1 +O(1).
\end{equation}
On the other hand, $C(\tz \mid \tx) \geq k$. If $c_1$ is large enough, we get a contradiction.
\end{proof}
\smallskip

Claim~\ref{c:poor} says that for a fraction of $1-2\epsilon$ of $s \in \zo^d$, the node $(s,p)$ is non-heavy, where $z = E(\tz,s)$. If $(s,z)$ is non-heavy, then the number of its left neighbors in $B$   is bounded by
\[
(1/\epsilon) \frac{|B|}{M} \leq (1/\epsilon) \frac{2^{C(\tz \mid \tx) +1}}{M} = 2^{C(\tz \mid \tx) - m + \log(1/\epsilon) + 1}.
\]
 It follows that, given $\tx, \alpha$, $\tz$ can be described by a non-heavy $(s,z)$ and its index in an enumeration of the left nodes of $(s,z)$ in $B$.
Thus, with probability $1-2\epsilon$ over the choice $s$,
\[
C(\tz, s \mid \tx, \alpha) \leq C(z,s \mid \tx) + (C(\tz \mid \tx) - m + \log(1/\epsilon) + 1) + O(1).
\]
This concludes the proof of the lemma.
\end{proof}

\begin{proof}[Proof of Lemma~\ref{l:extract2}]
Let $w = C(\tz \mid \tx, \alpha) + |s| - C(\tz, s \mid \tx, \alpha)$.  Note that $w$ is a random variable because it depends on $s$. We need to show that $w = O(\log (1/\epsilon))$, with probability $1- \epsilon$. 

By the standard counting argument, we have that with probability $1-\epsilon$, 
\begin{equation}
\label{es1}
C(s \mid \tx, \alpha) \geq |s| - \log(1/\epsilon).
\end{equation}
Let us fix an $s$ that satisfies inequality~\eqref{es1}. We show that the $w$ that corresponds to this $s$ satisfies $w= O(\log (1/\epsilon))$, from which the conclusion follows.

 If $w \leq 0$, we are done. So let us suppose that $w > 0$.

Let $u = C(\tz, s \mid \tx, \alpha) = C( \tz \mid \tx, \alpha) + |s| - w = (\alpha - O(1)) + |s| - w$ (the last equality holds because $C(\tz \mid \tx, \alpha) = C(\tz \mid \tx) - O(1)$, by the first part of inequality~\eqref{e:cc}).

 Let $A = \{(z', s') \mid C(z',s' \mid \tx, \alpha) \leq u\}$. Note that $|A| \leq 2^{u+1}$.  For every $z'$, we define $A_{z'} = \{s' \mid (z',s') \in A\}$. Let $e$ be the integer for which
$2^{e-1} < |A_{\tz}| \leq 2^e$.   Since $s \in A_{\tz}$ and $A_{\tz}$ can be enumerated given $w, \tx, \alpha$ and $O(1)$ bits, it follows that
\begin{equation}
\label{es2}
C(s \mid \tx, \alpha) \leq e + 2 \log w + O(1).
\end{equation}
Let $H = \{ z' \mid |A_{\tz}| > 2^{e-1} \}$.  Note that $|H| \leq |A| / 2^{e-1} \leq 2^{u-e+2}$ and
that $\tz$ is in $H$. We can write the index of $\tz$ in an enumeration of $H$ on exactly $u-e+2$ bits and taking into account that $u$ can be computed from $\tz, \alpha, w$ and $O(1)$ bits, it follows that $e$ can be derived as well from the index of $\tz$ and the information mentioned above. In this way, we have all the information needed to enumerate $H$ and we obtain
\begin{equation}
\label{es3}
C(\tz \mid \tx, \alpha) \leq u-e+2 + 2 \log w + O(1).
\end{equation}
Adding inequalities~\eqref{es2} and~\eqref{es3}, we get
\[
\begin{array}{rl}
C(\tz \mid \tx, \alpha) + C(s \mid \tx, \alpha) &\leq u + 4 \log w + O(1) \\
& \leq (C(\tz \mid \tx, \alpha) + |s| - w) + 4 \log w + O(1).
\end{array}
\]
Taking into account inequality~\eqref{es1}, we obtain $w - 4 \log w \leq \log(1/\epsilon) + O(1)$, which implies $w = O(\log(1/\epsilon))$, as desired.
\end{proof}

\begin{proof}[Proof of Lemma~\ref{l:extract3}]  This is the first part in the chain of inequalities~\eqref{e:cc}.  
\end{proof}

\begin{proof}[Proof of Lemma~\ref{l:extract4}] This follows from the fact that given $\tx$ (which contains $n'$) we can find the length of $s$, and therefore we do not  need delimiters to concatenate $s$ and a description of $z$ given $s$  and $\tx$.
\end{proof}

\subsection{Useful information inequalities}


\begin{lemma}
\label{l:chainrandom}
Let $f$ be a computable function, and suppose that for every string $x$, we choose a string $r$ uniformly at random among the strings of length $f(|x|)$.
Then, 

(1) with probability $1 - \epsilon$,\[
\big| C(x,r) - (C(x) + C(r \mid x)) \big| \leq O(\log (|x|/\epsilon)),
\]
(2) with probability $1 - \epsilon$,
\[
\big| C(x,r) - (C(r) + C(x \mid r)) \big| \leq O(\log (|x|/\epsilon))
\]
and 

(3) with probability $1 - \epsilon$,  
\[
C(x,r) = C(x) + |r| \pm O(\log (|x|/\epsilon)).
\]

\end{lemma}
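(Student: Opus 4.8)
The plan is to read off (1) and (2) from the Kolmogorov--Levin chain rule~\eqref{e:eqone} and then deduce (3) from (1) together with a one-line incompressibility count, so that the only genuinely probabilistic ingredient sits in part (3).

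First I would observe that the random choice of $r$ is irrelevant for (1) and (2). Applying~\eqref{e:eqone} to the pair $(x,r)$ gives $\big|C(x,r) - (C(x)+C(r\mid x))\big| \le 3(\log C(x)+\log C(r))$ for every $x$ and every $r$; since $C(r)\le |r|+O(1) = f(|x|)+O(1)$ is polynomially bounded in $|x|$ in all uses of this lemma, the right-hand side is $O(\log|x|)\le O(\log(|x|/\epsilon))$, which is (1) --- and in fact it holds with probability $1$, not merely $1-\epsilon$. For (2) I would apply~\eqref{e:eqone} to the pair $(r,x)$, use $C(r,x)=C(x,r)+O(1)$ (the two components of a pair can be swapped computably), and argue identically; again this holds for all $r$.

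For (3), by (1) it suffices to show that with probability at least $1-\epsilon$ over a uniform $r\in\zo^{m}$, where $m=f(|x|)$, one has $C(r\mid x)=m\pm O(\log(1/\epsilon))$. The upper bound $C(r\mid x)\le m+O(1)$ is trivial (describe $r$ by its literal bits). The lower bound is the standard counting argument: the number of $r\in\zo^{m}$ with $C(r\mid x)<m-\log(1/\epsilon)$ is at most the number of programs shorter than $m-\log(1/\epsilon)$, i.e.\ at most $2^{m-\log(1/\epsilon)}=\epsilon\,2^{m}$, so a uniform $r$ avoids this set with probability $\ge 1-\epsilon$. On that event, chaining with (1), $C(x,r)=C(x)+C(r\mid x)\pm O(\log|x|)=C(x)+m\pm O(\log(|x|/\epsilon))$, which is exactly (3).

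I do not expect a real obstacle here: the content is entirely classical. The only things to watch are bookkeeping --- confirming that the $3(\log C(x)+\log C(r))$ loss from the chain rule collapses to $O(\log|x|)$, which is precisely the point where one uses that $f$ is at most polynomial (true in every application in the paper), and noting that the ``with probability $1-\epsilon$'' phrasing of (1) and (2) is a harmless weakening of statements that hold for all $r$. If one preferred not to constrain the growth of $f$, the constants hidden in the $O(\cdot)$ of (1) and (2) would simply have to be allowed to depend on $\log f(|x|)$.
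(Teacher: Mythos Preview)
Your approach proves a weaker statement than the lemma actually claims. You invoke the chain rule~\eqref{e:eqone} and obtain an error of $3(\log C(x)+\log C(r))=O(\log|x|+\log f(|x|))$, then argue that this collapses to $O(\log|x|)$ because $f$ is polynomial ``in all uses of this lemma.'' But the lemma as stated puts no growth restriction on the computable function $f$; the bound $O(\log(|x|/\epsilon))$ is asserted for \emph{every} computable $f$, and your last paragraph concedes that your argument cannot deliver this. Your belief that without the polynomial hypothesis one must settle for a $\log f(|x|)$ loss is incorrect.

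The paper's proof shows how to avoid any dependence on $|r|$. The point is that one should not apply the Kolmogorov--Levin theorem as a black box, but reopen its proof and exploit the randomness of $r$. Concretely: set $t=C(x,r)-|r|+\log(1/\epsilon)$. With probability $1-\epsilon$ (this is where randomness enters part~(1), contrary to your claim that (1) holds for all $r$) one has $t>0$, and since $C(x,r)\le |x|+|r|+2\log|x|+O(1)$ one gets $t\le |x|+2\log|x|+\log(1/\epsilon)+O(1)$, so $\log t=O(\log(|x|/\epsilon))$ \emph{independently of $|r|$}. One then runs the usual enumeration argument with the set $A=\{(x',r'):C(x',r')\le t+|r|\}$, using that $|r|=f(|x|)$ is computable from $|x|$; the bookkeeping produces $C(x)+C(r\mid x)\le t+|r|+O(\log t)+O(\log|x|)$, which rearranges to the desired inequality with error $O(\log(|x|/\epsilon))$. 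Thus the probabilistic ingredient is essential already in part~(1), not only in part~(3).
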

\begin{proof}    
Note that part (3) follows, after a rescaling of $\epsilon$, from part (1) and the fact that with probability $1-\epsilon$, $C(r \mid x) \geq |r| - \log(1/\epsilon)$.

We prove (1). The proof of (2) is similar.

Clearly, $C(x,r) \leq C(x) + C(r \mid x) + 2 \log |x| + O(1)$ by using the standard way of appending two strings in a self-delimiting manner.

For the other inequality, we tweak the proof of the chain rule. We fix $x$ and $r$ and let $t = C(x,r)- |r| + \log (1/\epsilon)$.  By the standard counting argument, with probability $1 - \epsilon$, $t$ is positive, which we assume from now on. 
Since $C(x,r) \leq  |x| + |r| + 2 \log |x| + O(1)$ and $C(x,r) =  t + |r| - \log (1/\epsilon)$, we get $t \leq |x| + 2 \log |x| + \log(1/\epsilon) + O(1)$, and from here $\log t \leq O (\log (|x|/\epsilon))$, a fact which we will use later.

Now, let $A = \{(x',r') \mid C(x',r') \leq t + |r|\}$. For each $x'$, let $A_{x'}=\{r' \mid (x',r') \in A\}$. Let $e$ be such  that $2^{e-1} < |A_x| \leq 2^e$. Note that by taking into account that $|r| = f(|x|)$,
\begin{equation}
\label{e:eqqq1}
C(r \mid x) \leq e + 2 \log t + O(1).
\end{equation}
Next, let $H = \{x' \mid A_{x'} > 2^{e-1}\}$. We have $|H| \leq |A|/2^{e-1} \leq 2^{t+|r| + 1}/ 2^{e-1} = 2^{t+|r| - e +2}$.
Note that $x$ can be described by its rank in an enumeration of $H$, and for the enumeration of $H$ we need $t$ and $e$. By writing the rank of $x$ on exactly $t+|r|-e+2$ bits, and keeping in mind that $|r|$ can be computed from $|x|$, we obtain
\begin{equation}
\label{e:eqqq2}
C(x) \leq t+|r|-e+2 + 2 \log t + 2 \log |x| + O(1).
\end{equation}
By adding equations~\eqref{e:eqqq1} and~\eqref{e:eqqq2}, we obtain $C(x) + C(r \mid x) \leq t + |r| + 4 \log t + 2 \log |x| + O(1)$. 
Therefore,
\[
\begin{array}{rl}
C(x,r)  & = t + |r| - \log(1/\epsilon) \\
& \geq C(x) + C(r \mid x) - 4 \log t - 2\log |x| - \log(1/\epsilon) - O(1) \\
&\geq C(x) + C( r \mid x) - O(\log(|x|/\epsilon)).
\end{array}
\]
\end{proof}
\begin{lemma}
\label{l:comm-inf}
For all $\xa,\xb,z$,
$C(z) \le^+ C(z \mid \xa) + C(z \mid \xb) + I(\xa:\xb).$
\end{lemma}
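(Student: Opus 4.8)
The plan is to derive the inequality from three applications of the chain rule together with the essential non-negativity of conditional mutual information. First I would record the ingredients I need, all valid up to the usual $O(\log n)$ slack encoded in $\leqp$: the two (symmetric, Kolmogorov--Levin) chain-rule identities $C(\xa,z)\eqp C(\xa)+C(z\mid\xa)$ and $C(\xb,z)\eqp C(\xb)+C(z\mid\xb)$ — of which in fact only the easy ``$\leqp$'' halves are used — the trivial bound $C(\xa,\xb,z)\geqp C(\xa,\xb)$ (one can compute $(\xa,\xb)$ from $(\xa,\xb,z)$), and the submodularity of Kolmogorov complexity $C(z)+C(\xa,\xb,z)\leqp C(\xa,z)+C(\xb,z)$. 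This last fact is nothing but $I(\xa:\xb\mid z)\geqp 0$ once each conditional complexity is rewritten via the chain rule, and the essential non-negativity of (conditional) mutual information is already recorded in Section~\ref{s:ait}.

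The proof itself is then a short chain of substitutions, carried out in this order: start from submodularity to get $C(z)\leqp C(\xa,z)+C(\xb,z)-C(\xa,\xb,z)$; replace $C(\xa,\xb,z)$ by the smaller quantity $C(\xa,\xb)$; and finally expand $C(\xa,z)$ and $C(\xb,z)$ using the chain rule. What is left is $C(\xa)+C(z\mid\xa)+C(\xb)+C(z\mid\xb)-C(\xa,\xb)$, which equals $C(z\mid\xa)+C(z\mid\xb)+I(\xa:\xb)$ by the definition $I(\xa:\xb)=C(\xa)+C(\xb)-C(\xa,\xb)$. Only a constant number of $O(\log n)$ error terms are accumulated, so the bound remains within the $\leqp$ convention.

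The one genuinely non-elementary ingredient — hence the main obstacle, such as it is — is the non-negativity of conditional mutual information (equivalently, the submodularity of $C$), which is not immediate and ultimately rests on the Kolmogorov--Levin theorem; every other step uses only the trivial direction of the chain rule or a one-line computability argument. Since this non-negativity is already available from the preliminaries, there is nothing essentially new to establish, and the argument is self-contained modulo that fact.
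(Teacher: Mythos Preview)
Your proof is correct and essentially identical to the paper's. Both arguments rest on the submodularity inequality $C(z)+C(\xa,\xb,z)\leqp C(\xa,z)+C(\xb,z)$, then weaken $C(\xa,\xb,z)$ to $C(\xa,\xb)$ and expand via the chain rule; the only cosmetic difference is that the paper obtains submodularity by writing $C(\xa,\xb\mid z)\leqp C(\xa\mid z)+C(\xb\mid z)$ and adding $2C(z)$, whereas you invoke it directly as non-negativity of $I(\xa:\xb\mid z)$.
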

\begin{proof}
For all $\xa,\xb,z$ we have  
$$
C(\xa,\xb \mid z) \le^+ C(\xa \mid z) + C(\xb \mid z).
$$
We add $2C(z)$ to both sides of this inequality and get 
$$
C(z) + C(\xa,\xb,z) \le^+ C(\xa,z) + C(\xb,z).
$$
Now we make it slightly weaker,
$$
C(z) + C(\xa,\xb) \le^+ C(\xa,z) + C(\xb,z).
$$
It follows
$$
C(z)   \le^+ C(z \mid \xa)+ C(z \mid \xb)+ C(\xa) + C(\xb)- C(\xa,\xb),
$$
which is equivalent to  \eqref{comm-inf}, ending the proof.
\end{proof}

\begin{lemma}\label{l:calculations}
(a) For all strings $w,x,y$ of length $O(n)$ 
\[
 C(w \mid x)  +  C(w\mid y)  + I(x:y ) - I(x:y \mid w) - C(w\mid x,y) \eqp  C(w) 
\]
(b)  For all strings $x,y,z,t$  of length $O(n)$, if $C(z\mid x,t)\eqp  C(z\mid y,t)\eqp 0$, then
\[
I(x:y \mid z,t) \eqp  I(x:y \mid t) - C(z\mid t)
\]
\end{lemma}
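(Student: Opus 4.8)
The plan is to derive both identities by unfolding the mutual informations into Kolmogorov complexities and then applying the chain rule~\eqref{e:eqone} a bounded number of times; since all strings have length $O(n)$, each application costs only $O(\log n)$, and a constant number of such steps keeps the total slack within the precision of $\eqp$.

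For part (a), I would substitute $I(x:y) = C(x)+C(y)-C(x,y)$ and $I(x:y\mid w) = C(x\mid w)+C(y\mid w)-C(x,y\mid w)$ into the left-hand side, and then rewrite each remaining conditional term via the chain rule so that everything is expressed through unconditional complexities: $C(w\mid x) \eqp C(x,w)-C(x) \eqp C(x\mid w)+C(w)-C(x)$, symmetrically $C(w\mid y)\eqp C(y\mid w)+C(w)-C(y)$, while $C(x,y\mid w)\eqp C(x,y,w)-C(w)$ and $C(w\mid x,y)\eqp C(x,y,w)-C(x,y)$. After this substitution the terms $C(x\mid w), C(y\mid w), C(x), C(y), C(x,y), C(x,y,w)$ all cancel in pairs, and one is left with $C(w)+C(w)-C(w) \eqp C(w)$. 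Conceptually this just verifies that the left-hand side equals the ``triple information'' $I(x:y)-I(x:y\mid w)$ --- the center cell of the information diagram of $x,y,w$ --- but the computation above is the honest version that does not rely on the diagram heuristic.

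For part (b), I would relativize everything conditional on $t$ and use the hypotheses $C(z\mid x,t)\eqp 0$ and $C(z\mid y,t)\eqp 0$, together with the immediate consequence $C(z\mid x,y,t)\leqp C(z\mid x,t)\eqp 0$. From $C(z\mid x,t)\eqp 0$ and the (relativized) chain rule, $C(x,z\mid t)\eqp C(x\mid t)$; combining with $C(x,z\mid t)\eqp C(z\mid t)+C(x\mid z,t)$ gives $C(x\mid z,t)\eqp C(x\mid t)-C(z\mid t)$, and symmetrically $C(y\mid z,t)\eqp C(y\mid t)-C(z\mid t)$. From $C(z\mid x,y,t)\eqp 0$ one gets in the same way $C(x,y\mid z,t)\eqp C(x,y\mid t)-C(z\mid t)$. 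Plugging these three equalities into $I(x:y\mid z,t)=C(x\mid z,t)+C(y\mid z,t)-C(x,y\mid z,t)$ yields $I(x:y\mid z,t)\eqp C(x\mid t)+C(y\mid t)-C(x,y\mid t)-C(z\mid t)=I(x:y\mid t)-C(z\mid t)$, as desired. The one thing requiring care --- and the closest thing to an obstacle in an otherwise mechanical argument --- is tracking that the $O(\log\cdot)$ error from each chain-rule step is $O(\log n)$ here and that only boundedly many steps are used, so the cumulative error is absorbed into $\eqp$; no idea beyond the chain rule is needed.
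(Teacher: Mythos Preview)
Your proof is correct and follows essentially the same approach as the paper's: expand the mutual informations and conditional complexities via the chain rule into joint complexities and cancel, tracking that only a bounded number of $O(\log n)$ errors accumulate. One small aside: your conceptual remark that the left-hand side of~(a) ``equals the triple information $I(x:y)-I(x:y\mid w)$'' is not right---the left-hand side equals $C(w)$, which is not $I(x{:}y{:}w)$ in general---but this does not affect the actual computation, which is fine.
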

\begin{proof}
 (a) A straightforward calculation gives
  \[
 \begin{array}{lcl}
    C(w \mid x)  +  C(w\mid y)  + I(x:y ) - I(x:y \mid w) - C(w\mid x,y)  \\
   {} \eqp  C(x,w) - C(x) + C(y,w) - C(y)  + C(x) + C(y) - C(x,y)  \\
   \ \ \ \ {}  - C(x,w) - C(y,w) + C(x,y,w) + C(w)  - C(x,y,w) + C(x,y) =
    C(w)
   \end{array}
   \]   
(b) We have
 \[
 \begin{array}{ll}
 I(x:y \mid z,t)  \eqp  C(x,z,t) + C(y,z,t) - C(x,y,z,t)- C(z,t) &\\
 \ \  {} \eqp  C(x,t)+C(y,t)  -  C(x,y,t)-  C(z, t)  &\text{since  $C(z\mid x,t)\eqp  C(z\mid y,t)\eqp 0$}\\
  \ \ {} \eqp  C(x,t)+C(y,t)  -  C(x,y,t)-   C(t) - C(z \mid  t) &\text{the chain rule}\\
  \ \ {} \eqp  I(x:y\mid t) - C(z \mid  t) 
    \end{array}
   \]   
\end{proof}

\subsection{Proof of Lemma~\ref{lemma-eq-j}} %
For all $\xa,\xb,\xc,z$ we have
 $$
 C(\xa,\xb,\xc\mid z) \le^+ C(\xa\mid z)+ C(\xb \mid z) + C(\xc\mid z),
 $$
which rewrites to 
 $$
 2C(z)+ C(\xa,\xb,\xc, z) \le^+ C(\xa,z)+ C(\xb,z) + C(\xc,z).
 $$
Now we use the assumption that $z$ is a deterministic function of $\xa$, of $\xb$, and of $\xc$ :
 $$
 2C(z)  \le^+ C(\xa)+ C(\xb) + C(\xc)- C(\xa,\xb,\xc),
 $$
and the lemma is proven.

\section{Final comments}
We make several considerations regarding secret key agreement protocols. They are not required for an understanding of the main results. The  next subsections can be read independently.

\label{s:final}
\subsection{Time-efficient secret key agreement protocols.}
\label{s:time-efficient}
The secret key agreement protocol in the proof of Theorem~\ref{t:lower} is computable but highly non-efficient. The only slow stage of the protocol is Step~2, where Bob reconstructs $x$ given his input string $y$ and the fingerprint $\pxy$ obtained from Alice. At this stage, Bob has to simulate all programs of size $C(x\mid y)$ until he obtains a string matching the fingerprint $\pxy$.  All the other stages of the protocol can be implemented in polynomial time (to this end we need to use an effective version of an extractor in the definition of fingerprints; this increases the overhead in communication complexity from $O(\log n)$ to $\mathrm{poly}(\log n)$, which is still negligible comparative to the size of $\pxy$; for details see~\cite{zim:c:kolmslepianwolf}).

We cannot make Bob's computation effective in general, but we can do it for some specific pairs of inputs $(x,y)$. Actually, we can make the entire communication protocol fast, if there is a way to communicate $x$ from Alice to Bob so that
 \begin{itemize}
 \item communication complexity of this stage (and therefore the information revealed to the adversary) remains about  $C(x\mid y)$,
 \item all computations are performed by Alice and Bob in time $\mathrm{poly}(n)$.
 \end{itemize}
\emph{Example 1.} (Discussed in Introduction, p.~\pageref{line-point-example}.)
Let Alice get a random line  $x$ in the affine plane over the finite field with $2^n$ elements, and Bob get a random point $y$ on this line.    For \emph{most} inputs of this type we have
  $
 C(x\mid y) = n\pm O(\log n),
 $
 and there exists a simple way to transfer $x$ from Alice to Bob with communication complexity $n$ (Alice just sends to Bob the slope of her affine line, and Bob draws a line with this slope incident to his point). Thus, we see once again that for this simple example there exists an effective (polynomial-time) communication protocol to agree on  common secret key of size $\approx n$ bits.

\emph{Example 2.}
Let Alice and Bob get $n$-bits strings $x$ and $y$ respectively, and 
the Hamming distance between these strings is at most $\delta n$ for some constant $\delta<1/2$. 
For \emph{most} inputs of this type we have
  \[
 C(x\mid y) = h(\delta )n\pm O(\log n),\ I(x:y) = (1-h(\delta))n\pm O(\log n),
 \]
where $h(\delta) = \delta \log\frac1\delta+(1-\delta)\log\frac1{1-\delta}$. 
Can we transfer  $x$  from Alice to Bob with communication complexity  $h(\delta )n+o(n)$? It turns out that such a protocol exists; moreover, there exists a communication protocol with asymptotically optimal communication complexity and polynomial time computations, see \cite{smith2007scrambling,guruswami2010codes,guruswami2016optimal}. Plugging this protocol in our proof of Theorem~\ref{t:upper} we conclude that 
on most pairs of inputs $(x,y)$ of this type Alice and Bob can agree on a common secret key of size $(1-h(\delta))n- o(n)$, with poly-time computations for both parties.

\subsection{On pseudo-deterministic protocols for key agreement}
The communication protocols that we have analyzed (defined in Section~\ref{s:defprotocol})  are randomized. In particular, this means that,  for the same pair of inputs $\xa,\xb$,  Alice and Bob can  end up with different values of the common secret key $z$, since the result depends on the randomness used in the protocol. It seems that for all practical reasons, running  the same communication protocol twice on the same pair of inputs (but with different realizations of random bits) is useless and even harmful: in two independent conversations,  Alice and Bob would  divulge too  much information about $x$ and $y$. However, there remains a natural theoretical question: can we adjust the communication protocol so that for most values of random bits, Alice and Bob agree on one and the same value of the common random key $z$? Such a protocol could be called \emph{pseudo-deterministic} or \emph{Bellagio} protocol, similarly to the  pseudo-deterministic algorithms introduced  in \cite{goldwasser2012pseudo}.

In what follows we show that in some sense the answer to this question is positive. More precisely,  we can  prove a version of  Theorem~\ref{t:lower} with only slightly weaker property of secrecy (we divulge $O(\log n)$ bits of information regarding the resulting secret key) with a protocol where Alice and Bob with high probability agree on  one ``canonical'' value of $z$.

\begin{theorem}[Lower bound  with a unique key]
\label{t:lower-unique}
There exists a secret key agreement protocol with
the following property: For every $n$-bit strings $x$ and $y$, for every constant $\epsilon > 0$,  if Alice's input $\xa$ consists of $x$,  the complexity profile
of $(x, y)$ and $\epsilon$, and Bob's input $\xb$  consists of $y$, the complexity profile of $(x, y)$ and $\epsilon$, then, with probability
$1 - \epsilon$,  the shared secret key is a string $z$ such that,
$C(z \mid t) \geq  |z| - O(\log( n/\epsilon))$ and $|z| \geq I(x : y) - O(\log( n/\epsilon))$,
where $t$ is the transcript of the protocol. 

Moreover, 
with probability $1 - \epsilon$,   Alice and Bob agree on one ``canonical''  value of $z=z(x,y)$.
\end{theorem}

\begin{proof}[Sketch of the proof.]
Let us start with the simplified  scheme ``in a nutshell'' explained on p.~\pageref{communication-protocol--nutshell}.
In this protocol,  we use two randomized ``digital fingerprints'' of $x$ denoted $h_1(x)$ (which is sent by Alice to Bob)
and $h_2(x)$ (which is taken as the common secret key). In the argument on p.~\pageref{communication-protocol--nutshell} both these digital fingerprints were computed as random linear mappings of $x$.  Observe that the scheme works fine if the second digital fingerprint  is computed in a different way, without randomization. For example, we may assume that Alice and Bob search for  the first stopping program of length $C(x)$ that prints $x$, and then take the first $C(x) - C(x\mid y)$ bits of this program as the secret key. 

Clearly, in every execution of the new protocol, Alice and Bob agree on one and the same value of $z$ assuming that, given $y$ and $h_1(x)$ (and the complexity profile of $(x,y)$),  Bob reconstructs correctly Alice's input $x$, which is true with probability close to $1$. Moreover, it can be shown that with high probability the value of $h_1(x)$ contains only $ O(\log (n/\epsilon))$ bits of information regarding this ``canonical'' value of $z$, so the secrecy condition is preserved.

The same idea can be used to adjust the  proof of Theorem~\ref{t:lower}, where the value of $h_1$ is computed in a subtler way (with  randomness extractors  involved in the proof of Theorem~\ref{t:compression}). Roughly speaking, the random ``fingerprint'' of $x$ computed in the proof of Theorem~\ref{t:compression}  with high probability has only negligibly small mutual information with the fixed in advance ``canonical'' value of $z$.
We omit the details.
 \end{proof}
 
 We stress that the  ``secret key uniqueness''  achieved in Theorem~\ref{t:lower-unique} is a property of a specific protocol, and this property  does not follow automatically from the definition of shared secret key agreement.
 Moreover, we can construct a communication protocol so that on each new application of the protocol, Alice and Bob agree on a random value of the secret key $z$ that is \emph{uniformly distributed} on the set of $\{0,1\}^{k}$ with $k=I(x:y)$, and, therefore, the instances of the key $z$ obtained in different executions of the protocol are with high probability independent with each other.
(The protocol still has a positive  error probability, which means that with a small probability $\epsilon>0$ Alice and Bob do not agree on a common key and accomplish the communication with different values of $z$.)
To achieve this property, we should modify the protocol from Theorem~\ref{t:lower} as follows.
First of all,  Alice and Bob should agree on some common key $z$ using the old version of the protocol (and this $z$ may be distributed  non-uniformly). Then,  Alice choses a random bit string $w$ of length $|z|$ and openly sends it to Bob.  At last,  Alice and Bob XOR the bits of $z$ with the bits of $w$. The resulting string $z'$ is a uniformly distributed common secret key.

\smallskip

In the arguments above we used specially designed protocols.
In the ``standard'' communication protocol defined in the proof of Theorem~\ref{t:lower} 
the key $z$ is not uniquely defined by $x,y$, and it is not uniformly distributed either.
The specific attributes of the shared key produced by  a  protocol are determined by the
details of the construction and by the ratio between $C(x)$ and $I(x:y)$, and are positioned between the two extreme cases that we have presented above.
A more detailed analysis of generic communication protocols (e.g., a more precise estimation of $I(z_1:z_2)$ for two keys $z_1$ and $z_2$ obtained in two independent realizations of the protocol from the proof of Theorem~\ref{t:lower})  requires subtler considerations, and we do not discuss this question here.

\subsection{Secret key agreement with an additional  private communication channel}\label{s:private-channel}

In this section, we discuss a slightly more general model of communication that combines a public and private channels.
We assume that the \emph{private} communication channel is not visible to the adversary. There is a naive and straightforward way to gain from such a channel: Alice tosses her private random coin,  sends the obtained bits to Bob via this private channel, and afterwards  Alice and Bob can use these bits as a common secret key.  This naive idea is actually the optimal usage of the private channel. There is no better  way to use the secret communication: if Alice and Bob are given inputs $\xa$ and $\xb$, and they send to each other $s$ bits of information via the private communication, they cannot agree (with probability $1-\epsilon$)  on a common secret key of complexity greater than   $I(\xa:\xb)+s+O(\log (n/\epsilon))$. Indeed, denote by $t_p$ the transcript of the communications via the private channel. Then the common secret $z$ can be computed from $(\xa,\ra,t_p)$ as well as from $(\xb,\ra,t_p)$. It follows that
\[
 \begin{array}{rcl}
 C(z) &\leqp& C(z\mid \xa,\ra,t_p)  +  C(z\mid \xb,\rb,t_p) + I(\xa,\ra,t_p : \xb,\rb,t_p)      \\
         &\leqp& I(\xa,\ra,t_p : \xb,\rb,t_p)   \leqp  I(\xa,\ra: \xb,\rb) + C(t_p) \\ 
         &\le& I(\xa : \xb) + C(t_p) +  O(\log (n/\epsilon))  \\
          &\le& I(\xa : \xb) + s +  O(\log (n/\epsilon))          
 \end{array}
\]
(the third inequality holds with probability $1-\epsilon$).

Let us consider now  a model where Alice and Bob communicate via two channels: via an  `expensive'  private channel of capacity $s$ bits    and a `cheap' public channel of unbounded capacity (accessible to the adversary). 
As above, let $\xa$ and $\xb$ be input strings of length $n$  on which the  protocol succeeds with error probability $\epsilon$ and randomness deficiency $\delta(n) = O(\log n)$, and let $z$ be the random string that
is the shared secret key output by the protocol, \ie, a string satisfying relations (\ref{e:e1}) and (\ref{e:e2}). 
We combine the simple observation above with the proof of Theorem~\ref{t:upper} and conclude  with the following result:
with probability at least $1-O(\epsilon)$, if $n$ is sufficiently large, $|z| \leq I(\xa : \xb)+s+O(\log (n/\epsilon))$.

The `light upper bound' from Section~\ref{s:warmup} does not apply to the model with an additional private channel. Actually, the traffic of size $s$ over the private channel in some sense can be counted as an ``overhead'' in 
inequalities \eqref{I(a:b:t)-is-positive} and \eqref{I(a:b:t|c)-is-positive}.
More precisely, we can prove a version of  Lemma~\ref{lemma-I(a:b:t)-is-positive} with an extra term corresponding to the private channel.
If a message  sent by Alice is computed as a function of its own input $\xa$ and of the data $t_\mathrm{priv}$ sent via the private  channel, than
similarly to \eqref{I(a:b:t)-is-positive}  we can prove that
  \begin{equation}
   I(\xa:\xb \mid f(\xa,t_\mathrm{priv}))  \le^{+}   I(\xa:\xb ) + C(t_\mathrm{priv}).\nonumber
  \end{equation}
Also, if a message sent by Alice is computed as a function of her own input $\xa$, the publicly visible ``prehistory'' of the protocol $t_{\mathrm{pub}}$
(i.e., the bits sent earlier via the public channel), and of the secret part of the transcript  $t_\mathrm{priv}$ (sent via the private channel), 
then similarly to \eqref{I(a:b:t|c)-is-positive} we can prove that
  \begin{equation}
   I(\xa:\xb \mid  g(\xa,t_\mathrm{pub} ,t_\mathrm{priv}), t_\mathrm{pub} )  \le^{+}   I(\xa:\xb \mid t_\mathrm{pub}) + C(t_\mathrm{priv}).\nonumber
     \end{equation}
Using these inequalities and keeping in mind that the size of the private part of the communication $C(t_\mathrm{priv})$ is bounded by $s$, 
we  can adapt the argument from p.~\pageref{e:e3} with a suitable correction:
in the chain of inequalities \eqref{e:e3}  we need to count $s$ bits of the ``overhead'' for each round of communication via the public channel. 
For a $k$-round protocol this would give the bound $|z| \leq I(\xa : \xb)+ks+O(\log (n/\epsilon))$, which is too weak for non-constant $k$.

We revisit these observations in Section~\ref{s:discussion}.

\subsection{Secret key agreement: the Shannon framework  $\mbox{vs.}$ the  Kolmogorov framework}\label{s:discussion}

\textbf{Upper and lower bounds on the size of the shared secret key.}
\label{subsection-shannon-vs-kolmogorov}
In the Introduction we discussed the general connections between IT and AIT.
In what follows we discuss some more precise and formal relations between Shannon-type and Kolmogorov-type 
theorems on secret key agreement.

Actually the upper and lower bounds on the size of the common key in the framework of Kolmogorov complexity 
(Theorem~\ref{t:lower} and Theorem~\ref{t:upper}) formally imply similar bounds
in Shannon's framework  for i.i.d. pairs variables and even for stationary ergodic sources (Theorem~1 in \cite{tyagi2013common}, see also \cite{leu:t:secret,ben-bra-rob:j:privacyamplification,mau:j:seckey,ahl-csi:j:seckeyone}). Indeed, it is well known
that  a sequence of i.i.d. random variables (or, more generally, the outcome of a stationary ergodic source)
gives with a high probability a string of letters whose  Kolmogorov complexity is close to Shannon's entropy of the random source
(this statement was announced by L.~Levin in \cite[Proposition~5.1]{zvo-lev:j:kol} and published with proof in \cite{brudno1982entropy};
see also  \cite{horibe2003note}). 
Hence, if Alice gets a value of
 $(X_1,\ldots, X_n)$,  Bob gets a value of $(Y_1,\ldots, Y_n)$, and the sequence of pairs $(X_i,Y_i)$ is a stationary ergodic source, then with high probability the mutual information  (in the sense of Kolmogorov complexity) between Alice's and Bob's individual inputs  is close to 
the mutual information between these two random sequences (in the sense of Shannon's entropy). 

Thus, to prove the positive  statement in Shannon's framework (Alice and Bob can agree on a secret key of size about $ I(X_1,\ldots, X_n : Y_1,\ldots, Y_n)$) we can apply  the communication protocol from the proof of Theorem~\ref{t:lower}.  Notice that in Theorem~\ref{t:lower} Alice and Bob need to know the complexity profile of the input data; we cannot know the \emph{exact} values of  Kolmogorov complexity for a pair of randomly chosen random strings, but it is enough to provide Alice and Bob with the known \emph{expected} value of the complexity profile of their input data (see Remark~\ref{r:lower-approx} on p.~\pageref{r:lower-approx}). 
In the case when the complexity profile of the values of  $X_1,\ldots, X_n$ and $Y_1,\ldots, Y_n$ is `typical' (close to the average values), Theorem~\ref{t:lower} guarantees that the resulting common key $z$ has a large enough Kolmogorov complexity conditional on the transcript of the protocol.  Otherwise, (in case when the complexity profile of the input data is not typical) the result of the protocol can be meaningless; however, this happens with only a small probability.
So,  assuming that Alice and Bob are given random values of  $X_1,\ldots, X_n$ and $Y_1,\ldots, Y_n$ respectively, we conclude that the outcome of the protocol (the common secret key) is distributed mostly on strings of high Kolmogorov complexity. It is not hard to show that such a distribution must have  Shannon's entropy (conditional on the value of the transcript) close to $ I(X_1,\ldots, X_n : Y_1,\ldots, Y_n)$.

In a similar way, our negative result (the secret common key cannot be made bigger, Theorem~\ref{t:upper}) implies a similar result in Shannon's setting. Indeed, assume that there exists a communication protocol that permits to Alice and Bob to agree on a common secret  on random inputs 
$X_1,\ldots, X_n$ and $Y_1,\ldots, Y_n$ with a probability close to $1$. If the pair of random sources is 
stationary and ergodic, then  with high probability the pair of input values has a  complexity profile close to the corresponding values of Shannon's entropy. It follows from Theorem~\ref{t:upper} that the Kolmogorov complexity of the secret key is not much bigger than the mutual information between $X_1,\ldots, X_n$ and $Y_1,\ldots, Y_n$. It remains to notice that the Shannon's entropy of the common secret key $Z$ (as a  random variable) cannot be much bigger than the average  Kolmogorov complexity of its value (as a binary string).

In Shannon's framework,   we can consider a hybrid communication model (similar to those in  Section~\ref{s:private-channel}),  where Alice and Bob can communicate via a private channel of bounded capacity  (hidden from the adversary)  and a public channel of unbounded capacity (accessible to the adversary). 
The connection between Shannon's and Kolmogorov's settings discussed above combined with the observation made in Section~\ref{s:private-channel} implies  that,
 in this model,  Alice and Bob cannot  agree on a common secret key of entropy greater than
 \[
    I(X_1,\ldots, X_n : Y_1,\ldots, Y_n) +[\text{capacity of the private channel}].
 \]

\textbf{Communication complexity of the protocol.}
  Another issue  that can be studied in both settings (for Shannon's and Kolmogorov's approaches to the key agreement problem) is the optimal communication complexity
 of the protocol. 
 Tyagi investigated in \cite{tyagi2013common} this problem for the Shannon's framework (for i.i.d. sources)  and showed that in some cases (for some distributions
 $(X_i,Y_i)$) Alice and Bob
 need to send to each other at least  
  \[
    \min(
    H(X_1,\ldots, X_n \mid Y_1,\ldots, Y_n),
    H(Y_1,\ldots, Y_n \mid X_1,\ldots, X_n)
    )
  \]
 bits of information to agree on a common secret key of size $ I(X_1,\ldots, X_n : Y_1,\ldots, Y_n)$. Moreover,  Tyagi found a complete description 
 (a so-called \emph{one letter characterization}) of the optimal communication complexity for an arbitrary distribution on Alice's and Bob's inputs. 
 This result is similar to our Theorem~\ref{thm:comm-complexity}, which claims that Alice and Bob
 need to send at least 
   $
    \min(
    C(x\mid y),
    C(y\mid x)
    )
  $
  bits of information to agree even on a common secret key of size $ \delta I(x : y)$.
  However, these results are  incomparable with each other (one does not follow from another):
  \begin{itemize}
  \item Tyagi investigated protocols that agree on a common secret of maximal size  $I(X_1,\ldots, X_n : Y_1,\ldots, Y_n)$, 
  while Theorem~\ref{thm:comm-complexity} applies even if Alice and Bob agree on a secret of size $\delta I(x:y)$ for a small $\delta>0$.
  \item  \cite{tyagi2013common} proposed  a one letter characterization of the communication complexity in terms of the distribution $(X_i,Y_i)$, 
  which arguably cannot be reformulated in the setting of Kolmogorov complexity.
  \item The result in  \cite{tyagi2013common}  applies to communication protocols with private randomness, while Theorem~\ref{thm:comm-complexity} is proven only for protocols with public randomness.
  \end{itemize}
  
\textbf{One-shot paradigm.}
Most known results on the protocols for secret key agreement in Shannon's framework are proven with the assumption that the input data available to Alice and Bob are i.i.d. random variables or at least stationary ergodic random sources. We mentioned above that for this class of inputs (stationary ergodic random sources) the bounds on the optimal size of the secret in Shannon's framework   can be deduced from analogous results on Kolmogorov complexity. But, actually Theorem~\ref{t:lower} and Theorem~\ref{t:upper} apply in more general  settings. We can prove similar bounds for random inputs obtained \emph{in one shot}, without the property of ergodicity.

In many natural instances  of  the secret key agreement problem,  the input data are far from being ergodic, so the classic technique does not apply. In Section~\ref{s:time-efficient},  we  discussed two  examples of this kind: in the first one,  Alice is  given a line in the affine plane, and Bob is given a point in this line; in the second one,  Alice and Bob get $n$-bits strings $x$ and $y$ respectively,  and the Hamming distance between these strings is at most $\delta n$ for some constant $\delta<1/2$.   These examples can be naturally reformulated in the probabilistic setting: we can introduce the uniform distribution on the set of all valid pairs of inputs. For the probabilistic versions of these examples, the matching upper and lower bounds on the size of the common secret key can be easily deduced from Theorem~\ref{t:lower} and Theorem~\ref{t:upper}.

To conclude, the standard results on the secret key agreement  deal with the paradigm  where \emph{the protocol works properly for {most} randomly chosen inputs} (which is typical for  information theory),  while in our approach we prove a somewhat stronger statement :  \emph{for {each} valid pair of input data the protocol works properly with high probability} (which is typical for the theory of communication complexity).



\subsection{Communication protocols with logarithmic advice}
\label{s:protocols-with-advice}

In this section we show that in some setting the `light upper bound' in Section~3 is more robust and flexible  than the more technical proof of Theorem~4.2.
 
Despite many prominent parallels between Shannon's and Kolmogorov's variants of information theory, there is no  canonical way to translate the basic properties from one framework to another. The rule of thumb is to substitute Kolmogorov complexity instead of Shannon's entropy and to change all constraints of type `\emph{a random variable $X$ is a function of a random variable $Y$}' in Shannon's case to `\emph{a string $x$ has small Kolmogorov complexity conditional on  $y$}' in Kolmogorov's version. For example, the classic Slepian--Wolf theorem essentially claims  that, for jointly distributed $X,Y$ (that technically must be a sequence of i.i.d. pairs), there exists another random variable $Z$ such that 
  \[
  \left\{
  \begin{array}{l}
  Z\text{ is a deterministic function of } X,\\
  \text{number of bits in }Z \approx H(X\mid Y),\\
  X\text{ can be recovered from }Y\text{ and }Z \text{ with high probability}.
  \end{array}
  \right.
  \]
The counterpart of this result for Kolmogorov complexity is known as Muchnik's theorem, \cite{ muc:j:commoninfo}: for all strings $x,y$ of length $n$ there exists a string $z$ such that
 \[
  \left\{
  \begin{array}{l}
  C(z\mid x)=O(\log n),\\
  |z| = C(x\mid y)+O(\log n),\\
  C(x\mid y,z) = O(\log n).
  \end{array}
  \right.
  \]
We can use a similar logic to translate in the language of Kolmogorov complexity the general notion of an interactive communication protocol. A protocol adapted to the framework of Kolmogorov complexity can be viewed as a protocol where on each step Alice and Bob get  logarithmic `advice'  strings from a trustable Merlin.

In what follows we give a more formal definition of a protocol with advice bits.
 Let $k$ be a positive integer. We say that Alice and Bob given inputs $\xa$ an $\xb$ compute a common value $z$ in  a $k$-rounds \emph{protocol with $q$ bits of advice on each round},  given inputs $\xa$ and $\xb$, if there exists  is a sequence of strings (`messages') $x_1,y_1,x_2,y_2,\ldots,x_k, y_k$ such that 
 \[
\begin{array}{l} 
C(x_1 \mid \xa) \le q,\\
C(y_1 \mid \xb, x_1) \le q,\\
C(x_2 \mid \xa,  x_1, y_1) \le q, \\
C(y_2 \mid \xb,  x_1, y_1, x_2) \le q,\\

\vdots \\

C(x_k \mid  \xa, x_1,y_1, \ldots, x_{k-1}, y_{k-1}) \le q,\\
C(y_k \mid  \xb, x_1,y_1, \ldots, x_{k-1}, y_{k-1})\le q,\\
C(z\mid  \xa, x_1,y_1, \ldots, x_{k}, y_{k}) \le q, \\
C(z\mid  \xb, x_1,y_1, \ldots, x_{k}, y_{k}) \le q.
\end{array}
\]
In a similar way, we can define the  \emph{randomized} protocols with advice, where Alice and Bob are given in addition random $\ra$ and $\rb$. 
For $c>0, \epsilon >0$,  we say that  Alice and Bob   $(\epsilon, c)$-\emph{succeed} on an  input pair $(\xa, \xb)$  to agree on a  common secret key in a $k$-rounds protocol with $q$ bits of advice, if 
with probability $(1-\epsilon)$ over $\ra, \rb$,  there exists a string $z$  
such that  Alice and Bob can compute common value $z$ in a $k$-rounds protocol with advice as defined above, and 
\begin{equation}\nonumber
C(z \mid t) \geqp |z| - c \log n,
\end{equation}
where $t:=(x_1, y_1, \ldots, x_k, y_k)$ is a transcript of this protocol.

\begin{remark}
In a conventional definition of a communication protocol, every next message  sent by Alice or Bob is computed as a  function of all previous  messages and of the input given to Alice or Bob respectively  (this function is one and the same for all pairs of inputs). 
Thus, every next message is ``simple'' conditional on the previous messages and one party's input in a very strong sense.
Such a  schema is unusual in the context of algorithmic information theory, where the relation ``$a$ is simple conditional on $b$'' is typically understood as $C(a\mid b) \sim \log (C(a)+C(b))$.   This observation motivates the given above definition of a communication protocol with advice.  As a matter of fact, we substitute  the standard definition of a communication protocol by a chain of conditions   ``every new message is simple conditional on the prehistory of the communication and one party's input.''
Considering this motivation, we believe that a natural measure if ``simplicity'' is logarithmic (i.e., on each round of the protocol we can use $O(\log n)$
bits of advice, where $n$ is the length of the inputs).

Formally speaking, we might consider another communication model where the advice is provided  only for some selected rounds, while on other steps of the protocol the messages are computed in a conventional way, by a fixed in advance deterministic function. In this case we should bound \emph{the total} number of advice bits distributed  over all rounds. However, such a mixed model lacks a natural motivation.
\end{remark}

Any conventional communication protocol (without advice) can serve as a protocol with advice strings of size $q=O(1)$, so the lower bound from Theorem~\ref{t:lower} remains true in the new setting. Also the proof of the `light' upper bound form Section~\ref{s:warmup}  applies to the protocols with advice strings  of size $q=O(\log n)$, and we conclude that in the new setting Alice and Bob still cannot agree on a common secret key of size greater than $I(\xa:\xb)$. 
Notice that the proof of Theorem~\ref{t:upper} does not apply to protocol with advice, so we cannot extend this upper bound to non-constant number of protocols. But this is not a real loss: the protocols with advice probably make no sense for non-constant number of rounds (since too much information can be  hidden in the bits of advice).

 \subsection{Weak version of negative results for multi-party protocols}\label{s:weak-upper-bound-multivar}

In this section we  prove a weak version of the upper bound on the length of the longest secret key that three parties can agree upon. The proposed argument follows the ideas from the proof of a similar result in \cite{csi-nar:j:seckey} in Shannon's setting (it is based on classic information inequalities).
While this technique works pretty well in IT, in our case it covers only the protocol with $O(1)$ rounds and with  the number of random bits  bounded polynomially in the input length. Thus, the result proven in this section is much weaker than Theorem~\ref{t:upperbdthree}. However, it is instructive to see how far we can go in the setting of AIT with the classic technique  borrowed from IT.

\begin{theorem}
\label{t:upperbd} Let us consider a  $3$-party  protocol  for secret key agreement with error probability $\epsilon$, having a constant number of rounds, and where the number of random bits is bounded polynomially in the input length.
Let $(\xa, \xb, \xc)$ be a $3$-tuple of  $n$-bit strings on which the protocol succeeds. Let $z$ be the random variable which represents the secret key computed from the input $(\xa, \xb, \xc)$ and let $t$ be the transcript of the protocol that produces $z$. Then,  for sufficiently large $n$,  with probability $1-O(\epsilon)$, 
\[
C(z \mid t) \leq C(\xa,\xb,\xc) - \co(\xa, \xb, \xc)+ O(\log (n/\epsilon)),
\]
where the constants in the $O(\cdot)$ notation depend on the universal machine,  the number of rounds and the protocol, but not on $(\xa, \xb, \xc)$.
\end{theorem}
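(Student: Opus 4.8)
The plan is to mimic the proof of the light upper bound in Section~\ref{s:warmup}, but now carrying the extra quantity $J(\cdot)$ through the rounds, since we already established in the "trivial case" analysis of the proof of Theorem~\ref{t:upperbdthree} that the relevant bound without communication is the minimum of the four quantities in \eqref{eq-c-minus-co-formula}. Concretely, after joining the random strings to the inputs, the argument from the zero-communication case (inequalities \eqref{eq-ab:c}, \eqref{eq-ac:b}, \eqref{eq-bc:a} together with Lemma~\ref{lemma-eq-j}) shows that $C(z\mid t)$ is bounded, up to $O(\log n)$, by the minimum of the four relativized quantities in \eqref{eq-c-minus-co-formula-with-rand-and-t}. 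So the whole task reduces to showing that \eqref{eq-c-minus-co-formula-with-rand-and-t} does not exceed \eqref{eq-c-minus-co-formula-with-rand} up to an additive $O(k\log(n/\epsilon))$ error, and then to invoke inequality~\eqref{e:randeffect}-type estimates (Lemma~\ref{l:chainrandom}) to replace the $r$-augmented quantities by the plain ones, losing only $O(\log(n/\epsilon))$, and finally Proposition~\ref{p:co-value}.

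For the first three mutual-information terms in \eqref{eq-c-minus-co-formula-with-rand-and-t} this is already handled: each is of the form $I(a:b\mid t)$ where $t$ is the transcript, and the chain of inequalities \eqref{e:e3} from Section~\ref{s:warmup} — built by iterating Lemma~\ref{lemma-I(a:b:t)-is-positive} once per message — shows $I(a:b\mid t)\le^+ I(a:b)+O(k\log n)$. Here I would group the inputs appropriately, e.g.\ treat $\langle\xa,\ra\rangle$ against $\langle\xb,\rb,\xc,\rc\rangle$, and observe that every message $t_i$ broadcast in the protocol is a deterministic function either of $(\xa,\ra)$ and the prehistory, or of $(\xb,\rb,\xc,\rc)$ (namely the messages broadcast by Bob or Charlie) and the prehistory; so each step of the chain is a legitimate instance of \eqref{I(a:b:t|c)-is-positive}. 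The only new work is the fourth term, the $J$-quantity. I would prove, as a lemma analogous to Lemma~\ref{lemma-I(a:b:t)-is-positive}, that for a computable function $g$ of one argument (among $a$, $b$, $c$) one has
\[
J\bigl(a,b,c\mid g(a),\,c'\bigr)\;\le^+\;J\bigl(a,b,c\mid c'\bigr)
\]
(with the obvious symmetric versions), which follows from the chain rule exactly as in the proof of Lemma~\ref{lemma-I(a:b:t)-is-positive}: writing $C(a\mid g(a),c')=^+C(a,g(a)\mid c')-C(g(a)\mid c')=^+C(a\mid c')-C(g(a)\mid c')$ and summing, the common terms cancel and one is left with $J(a,b,c\mid c')$ minus a nonnegative quantity (a conditional mutual information). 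Then, just as in \eqref{e:e3}, I iterate this once per message $t_i$: each $t_i$ is a deterministic function of one of the three parties' data and the prehistory, so
\[
J(\xa,\ra,\xb,\rb,\xc,\rc\mid t)\;\le^+\;J(\xa,\ra,\xb,\rb,\xc,\rc)\;+\;O(k\log n).
\]
Since $k=O(1)$ and the random strings have $\poly(n)$ length (so the slack per step is $O(\log n)$), the total error is $O(\log(n/\epsilon))$, which is what the statement allows.

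The main obstacle — and the reason this theorem is only the \emph{weak} version — is precisely the error accumulation: each of the $\le^+$ inequalities in the chains above hides an additive $O(\log C(\cdot))$ term, so a chain of length $k$ costs $O(k\log n)$, which is harmless only when $k$ is constant and the randomness is polynomially bounded. There is no way around this within the classical-information-inequality framework; removing the restriction on $k$ is exactly what forces the "clones"/transcript-like-function machinery of Lemmas~\ref{lemma-I(a:b:t)-is-positive-for-transcripts} and \ref{lemma-j(a,b,c|t)} used in the proof of Theorem~\ref{t:upperbdthree}. A secondary (routine) point is to be careful that the "trivial case" bounds \eqref{eq-ab:c}--\eqref{eq-bc:a} and Lemma~\ref{lemma-eq-j} are applied to $z$ as a function of $(\xa,\ra,t)$, $(\xb,\rb,t)$, $(\xc,\rc,t)$ — i.e.\ relativized by the transcript — which is legitimate because the defining relations \eqref{e:e4} say $z$ is computed by each party from its data plus $t$; and that the passage from the $r$-augmented complexity profile to the plain one, via Lemma~\ref{l:chainrandom}, holds with probability $1-O(\epsilon)$, which is the source of the $O(\epsilon)$ failure probability in the conclusion.
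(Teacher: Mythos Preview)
Your argument is correct, but the paper's proof takes a genuinely different route. Rather than passing through Proposition~\ref{p:co-value} and bounding each of the four quantities in \eqref{eq-c-minus-co-formula-with-rand-and-t} by a chain of per-round inequalities, the paper expands $C(\xa,\xb,\xc\mid r)\eqp C(t,z,\xa,\xb,\xc\mid r)$ by the chain rule and regroups the resulting terms into three quantities $q_1(r),q_2(r),q_3(r)$ (each collecting the complexity contributions of one party's messages, plus a residual term). It then checks directly that $(q_1,q_2,q_3)$ satisfies the Slepian--Wolf constraints of Definition~\ref{d:slepwolf}, so $q_1+q_2+q_3\ge\co(\xa,\xb,\xc)$, and the bound $C(z\mid t,r)\leqp C(\xa,\xb,\xc\mid r)-(q_1+q_2+q_3)$ follows from the chain-rule expansion itself. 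This is the direct AIT transcription of the Csisz\'ar--Narayan argument, working at the level of the LP definition of $\co$ rather than its explicit solution. Your approach is more modular---it recycles the light two-party bound, the trivial case of Theorem~\ref{t:upperbdthree}, and your new per-round monotonicity lemma for $J$---but it leans on the closed-form expression in Proposition~\ref{p:co-value}; the paper's approach avoids that dependence and generalizes to $\ell>3$ parties without enumerating partitions. Both incur the same $O(k\log n)$ accumulation, for the same reason (a length-$k$ chain-rule expansion), so neither escapes the constant-rounds restriction.
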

\begin{proof}
We fix a triplet of $n$-bit input strings $(\xa, \xb, \xc)$ and we consider  some randomness $r = (\ra, \rb, \rc)$, on which the protocol produces a string $z$ which satisfies the relations~(\ref{e:e4}) and~(\ref{e:e5}). Recall that the set of such $r$'s has probability $1-\epsilon$. To simplify the notation, we merge the input of each participant with the corresponding random strong and denote
\[
 \xa' :=\langle \xa, \ra\rangle,\   \xb' :=\langle \xb, \rb\rangle,\   \xc' :=\langle \xc, \rc\rangle.
\]
Observe that with high probability the random strings $\ra, \rb, \rc$ are independent of $\xa, \xb, \xc$, and, therefore, 
\[
C(\xa',\xb',\xc')  \eqp C(\xa,\xb,\xc) + |\ra| + |\rb| + |\rc|
\]
and
\[
\co(\xa',\xb',\xc') \eqp \co(\xa, \xb, \xc) + |\ra| + |\rb| + |\rc|.
\]
Thus, to prove the theorem, it is enough to show that
\[
C(z \mid t) \leq C(\xa',\xb',\xc') - \co(\xa', \xb', \xc')+ O(\log (n/\epsilon)).
\]
To this end  we  transform the value of $C(\xa', \xb', \xc')$ as follows:
\begin{equation}
\label{e:eq1}
\begin{array}{rl}
C(\xa',\xb',\xc' ) \eqp & C(t,z,\xa',\xb',\xc') \\
\eqp & C(t_1 ) + C(t_2 \mid t_1, ) + C(t_3 \mid t[1:2]) + \ldots + C(t_{3k} \mid t[1:3k-1]) \\
+  & C(z\mid t) + C(\xa' \mid z, t) + C(\xb' \mid \xa',z,t) + C(\xc' \mid \xa', \xb', z, t).
\end{array}
\end{equation}
The first line follows because $t$ and $z$ are computed from $\xa, \xb, \xc$ and $r$, and the second line uses the chain rule.
Next, we break the  sum in the right hand side of equation~\eqref{e:eq1} into the sum $q_1(r) + q_2(r) + q_3(r)$, where the terms in the latter sum are defined as
\begin{equation}
\nonumber
\begin{array}{rcl}
q_1(r) & := &  C(t_1 ) + C(t_4 \mid t[1:3] ) + \ldots + C (t_{3k-2} \mid t[1:3k-3] ) + C(\xa' \mid z, t ) \\
\\

q_2(r) &: = & C(t_2 \mid t_1 ) + C(t_5 \mid t[1:4] ) + \ldots + C (t_{3k-1} \mid t[1:3k-2] )+ C(\xb' \mid \xa', z, t ), \\
\\

q_3(r) & := & C(t_3 \mid t[1:2] ) + C(t_6 \mid t[1:5] ) + \ldots + C (t_{3k} \mid t[1:3k-1] )+ C(\xc' \mid \xa', \xb', z, t ).
\end{array}
\end{equation}
By definition of $q_1(r), q_2(r), q_3(r)$ and \eqref{e:eq1} we have
\begin{equation}
\label{eq:abc-vs-q123}
C(\xa',\xb',\xc' ) \eqp C(z \mid t) + (q_1(r) + q_2(r) + q_3(r)).
\end{equation}
The following claim shows that $q_1(r), q_2(r)$ and $q_3(r)$ satisfy relations similar to those that define $S(\xa,\xb,\xc)$. 
\begin{claim}
\label{c:chain}
\hspace*{\fill}

\begin{itemize}
\item[(1)] $q_1(r) \geqp C(\xa' \mid \xb', \xc' )$, $q_2(r) \geqp C(\xb' \mid \xa', \xc' )$, $q_3(r) \geqp C(\xc' \mid \xa' , \xb' )$.

\item[(2)] $q_1(r) + q_2(r) \geqp C(\xa', \xb'  \mid \xc' )$,  $q_1(r) + q_3(r) \geqp C(\xa', \xc'  \mid \xb' )$,  $q_2(r) + q_3(r) \geqp C(\xb', \xc'  \mid \xa' )$.

\end{itemize}
\end{claim}
\begin{proof}[Proof  of Claim~\ref{c:chain}]
We show the first relation in~(1).
\begin{equation}
\label{e:eq2}
\begin{array}{rcl}
C(\xa' \mid \xb', \xc' ) &\eqp & C(t,z, \xa' \mid \xb', \xc' ) \\
&\eqp & C(t_1 \mid \xb', \xc' ) + C(t_2 \mid t_1, \xb', \xc' ) + C(t_3 \mid t[1:2], \xb', \xc' ) \\
&&{} +   C(t_4 \mid t[1:3], \xb', \xc' )  + C(t_5 \mid t[1:4], \xb', \xc' )  + C(t_6 \mid t[1:5], \xb', \xc' )  \\
 && \vdots \\
&&{}+   C(t_{3k-2} \mid t[1:3k-3], \xb', \xc' )  + C(t_{3k-1}\mid t[1:3k-2], \xb', \xc' )  \\
&& \hspace{6cm} {} + C(t_{3k} \mid t[1:3k-1], \xb', \xc' )  \\
 && {}+C(z \mid t,\xb', \xc' ) + C(\xa \mid t,z, \xb', \xc' ).
\end{array}
\end{equation}
The first line follows from the fact that $t$ and $z$ are computed from $\xa, \xb, \xc$ and $r$, and the second line follows from the chain rule. 

For every $\ell \in \{1, \ldots, k\}$,  $C(t_{3\ell-1} \mid t[1:3\ell-2], \xb', \xc' ) \eqp 0$, because 
$t_{3\ell-1}$ is computed from $t[1:3\ell-2], \xb', \xc' )$.
For the same reason, $C(t_{3\ell} \mid t[1:3\ell-1], \xb', \xc') \eqp 0$ and $C(z \mid t, \xb', \xc') \eqp 0$. Eliminating the terms that are $\eqp 0$ in equation~\eqref{e:eq2} and also the conditioning on $\xb'$ and $\xc'$ in all terms on the right hand side, we obtain 
\begin{equation}
\nonumber
\begin{array}{rl}
C(\xa' \mid \xb' , \xc') &\leqp C(t_1) + C(t_4 \mid t[1:3]) + \ldots + C (t_{3k-2} \mid t[1:3k-3]) + C(\xa' \mid z, t) \\
 & = q_1(r).
\end{array}
\end{equation}
The other relations in the claim  are proved in a similar way.
\end{proof}


Claim~\ref{c:chain} implies that $q_1(r) + q_2(r) + q_3(r) \geq \co(\xa', \xb', \xc')$. Combining this fact with relation~ \eqref{eq:abc-vs-q123},  
we obtain  $C(z \mid t) \leqp C(\xa',\xb',\xc') - \co(\xa', \xb', \xc')$, and we are done.
\end{proof}

 \subsection{Open problems}
\begin{openq}
{In Theorem~\ref{thm:comm-complexity} we establish a lower bound on how many bits Alice and Bob must communicate to agree on a common secret key. Our proof is valid only for communication protocols with public randomness. Is the same bound true for protocols with private sources of random bits?}
\end{openq}

\begin{openq}
All our communication protocols are randomized. It is natural to ask whether we can get rid of external randomness.
We conjecture that  for $(O(\log n), O(\log n))$-stochastic tuples of inputs the protocol can be made purely deterministic 
(though it would require very high computational  complexity), but this cannot be done in the general case.
The proof of this fact would require a better understanding of the nature of non-stochastic objects
(for a discussion of non-stochastic objects see \cite{shen1983concept} and \cite[Section~14.2]{suv:b:kolmenglish}).
\end{openq}

\section{Acknowledgments} We are grateful to Bruno Bauwens for his insightful comments. In particular we thank him for allowing us to reproduce his proof of the ``light lower bound" in Section~\ref{s:warmup} and for bringing to our attention the improved Kolmogorov complexity version of the Slepian-Wolf theorem~\ref{t:compressmulti}. We thank Tarik Kaced for attracting our attention to \cite{chan2015multivariate}.

\bibliography{theory-3}
\bibliographystyle{alpha}

\end{document}